\documentclass[12pt, a4paper]{amsart}
\usepackage[utf8]{inputenc}
\usepackage[T1]{fontenc}
\usepackage{lmodern}
\usepackage{microtype}
\usepackage{amsmath, amsthm, amssymb, mathtools}
\numberwithin{equation}{section}
\usepackage{braket}          
\usepackage{bbm}             
\usepackage{hyperref}
\usepackage{geometry}
\usepackage{parskip}
\usepackage{booktabs}
\usepackage{xcolor}
\usepackage{graphicx}
\usepackage{subcaption}
\usepackage{biblatex}
\usepackage[nameinlink, capitalise]{cleveref}
\usepackage{amsaddr} 
\theoremstyle{plain}
\newtheorem{theorem}{Theorem}[section]
\newtheorem{lemma}[theorem]{Lemma}
\newtheorem{corollary}[theorem]{Corollary}
\newtheorem{proposition}[theorem]{Proposition}
\theoremstyle{definition}
\newtheorem{definition}[theorem]{Definition}
\newtheorem{example}[theorem]{Example}
\theoremstyle{remark}
\newtheorem{remark}[theorem]{Remark}
\crefname{theorem}{Theorem}{Theorems}
\Crefname{theorem}{Theorem}{Theorems}
\crefname{lemma}{Lemma}{Lemmas}
\Crefname{lemma}{Lemma}{Lemmas}
\crefname{corollary}{Corollary}{Corollaries}
\Crefname{corollary}{Corollary}{Corollaries}
\crefname{proposition}{Proposition}{Propositions}
\Crefname{proposition}{Proposition}{Propositions}
\crefname{definition}{Definition}{Definitions}
\Crefname{definition}{Definition}{Definitions}
\crefname{remark}{Remark}{Remarks}
\Crefname{remark}{Remark}{Remarks}
\crefname{example}{Example}{Examples}
\Crefname{example}{Example}{Examples}
\newcommand{\C}{\mathbb{C}}

\newcommand{\R}{\mathbb{R}}

\newcommand{\W}{\mathcal{W}}
\newcommand{\Hs}{\mathcal{H}}
\newcommand{\Dfr}{\mathfrak{D}}   
\newcommand{\Kf}{K_{\varphi}} 

\newcommand{\Id}[1]{I_{#1}}
\newcommand{\rk}{\operatorname{rank}}
\newcommand{\sr}{\operatorname{sep\text{-}rank}}
\newcommand{\spr}{\operatorname{p\text{-}sep\text{-}rank}}
\newcommand{\Tr}{\operatorname{Tr}}
\newcommand{\Span}{\operatorname{span}}
\newcommand{\OSR}{\operatorname{osr}}
\newcommand{\diag}{\operatorname{diag}}
\newcommand{\HOSR}{\operatorname{hosr}}
\newcommand{\conv}{\operatorname{conv}}

\newcommand{\vect}[1]{{\boldsymbol{#1}}}
\newcommand{\Disk}{\overline{\mathbb{D}}}
\newcommand{\oDisk}{\mathbb{D}}
\newcommand{\bD}{\partial\mathbb{D}}
\newcommand{\lmax}{\lambda_{\max}}
\newcommand{\lmin}{\lambda_{\min}}
\newcommand{\norm}[1]{\left\|#1\right\|}
\newcommand{\ketbra}[2]{\ket{#1}\!\bra{#2}}

\DeclareMathOperator{\ran}{ran}

\newcommand{\DT}{\mathcal{D}_T}
\newcommand{\DTs}{\mathcal{D}_{T^\dagger }}

\title{Separable decompositions of $ {2\otimes n}$ states with operator Schmidt rank three}
\author{Perrine Vantalon \lowercase{and}  Nicolas Macris}
\address{\upshape{School of Computer and Communication Sciences - EPFL}
}

\begin{document}

\begin{abstract}
We prove that every bipartite state $\rho$ of a qubit coupled to an $n$-level system whose operator Schmidt rank equals three can be written as a mixture with $\rk (\rho)$ pure product states, and as a mixture of at most $n+1$ mixed product states in general. The second result is tight in the sense that there exist states that cannot be decomposed with a smaller number of terms. Our proof gives constructive methods to obtain the decompositions. It involves eigendecompositions of suitable unitary dilations of an operator explicitly determined by the state. 
Our framework recovers previously known results for the length of  decompositions into pure product states, and is also able to deal with decompositions into mixed states. For pure product states, our decomposition is different from existing ones. 
\end{abstract}
\maketitle

\section{Introduction}
A state (density matrix) $\rho$ of a bipartite system $\Hs_A\otimes\Hs_B$ is \emph{separable} if it is a convex combination of product states, $\rho=\sum_k p_k\,\alpha_k\otimes\beta_k$ (with $p_k\ge 0$, $\sum_k p_k=1$, and $\alpha_k,\beta_k$ states on $\Hs_A,\Hs_B$), and \emph{entangled} otherwise. Deciding separability is NP-hard in general \cite{Gurvits2003, Gharibian2010}. Finding an explicit decomposition is at least as hard, since a decomposition already certifies separability, and the same holds for decompositions of minimal length (with fewest terms).
For Hilbert spaces of dimensions $d_A, d_B$, that is $\Hs_A=\C^{d_A}$ and $\Hs_B=\C^{d_B}$, Carath\'eodory's theorem \cite{Caratheodory1911} shows that every separable state is a mixture of at most $(d_Ad_B)^2$ pure product states. The most widely used test, Peres' positive partial transpose (PPT) criterion \cite{Peres1996}, is necessary in every dimension and sufficient for $2\otimes2$ and $2\otimes3$ systems \cite{Horodecki1996}. Even in these low dimensional cases, it confirms separability without producing a decomposition.

A natural complexity measure of a bipartite operator is its \emph{operator Schmidt rank} $\OSR(\rho)$, which is the minimum number of terms in a tensor decomposition $\rho=\sum_k A_k\otimes B_k$ with unconstrained factors (in particular, no positivity or Hermiticity requirement). For separable states, one also considers the \emph{separable rank} $\sr(\rho)$ and \emph{pure separable rank} $\spr(\rho)$, defined as the minimum number of terms when the factors are required to be density matrices, and pure states respectively. One always has $\OSR(\rho)\le\sr(\rho)\le\spr(\rho)$. These inequalities hold because each rank is minimized over a class of factors that is successively more constrained (unconstrained, density matrices, pure states). 

Product states are characterized by $\OSR = 1$. States with $\OSR=2$ are separable and have separable rank two \cite{cariello2014, Johnston2014OSD, De_las_Cuevas_2019}. For a qubit coupled to a qudit, Cariello \cite{cariello2014} showed, using the construction of \cite{Kraus2000}, that operator Schmidt rank at most three implies separability. Our focus is on a new explicit construction of decompositions into pure and mixed product states on $\Hs=\C^2\otimes\C^n$.

For $n=2$, i.e., a system of two qubits, the question is sharper. Wootters' concurrence formula yields a four pure product state decomposition for every separable state \cite{Wootters1998}. Sanpera, Tarrach and Vidal obtained that the length of the minimal pure state decomposition is $\max(\rk(\rho), \rk(\rho^{T_A}))$ \cite{SanperaTarrachVidal1998}.  Jevtic, Pusey, Jennings, and Rudolph \cite{Jevtic2014} proved that the state is separable if and only if the steering ellipsoid can be enclosed in a tetrahedron contained in the Bloch sphere. Separability thus becomes the containment of one convex body in another. This method allows them to prove that separable rank is the operator Schmidt rank. 

 We generalize this geometric approach to the $2 \otimes n$ case with $\OSR = 3$. Our approach begins with a reduction, by SLOCC filtering \cite{Verstraete2001}, in order to bring the states into a normal form such that the marginal matrix $\rho_B$ (or reduced density matrix) is maximally mixed, and all the correlations between the $A$ and $B$ parties is encoded in a single $n\times n$ complex matrix $T$. The analogue of the steering ellipsoid is the numerical range of $T$ (which for $n>2$ has a more general shape). The positivity of the density matrix holds if and only if $T$ is a contraction, and the separable decompositions of the state correspond to the eigendecomposition of normal dilations. The numerical range of $T$ is enclosed by convex polygons whose vertices lie in the closed unit disk. Two dilation theorems turn this into a constructive method: the Halmos unitary dilation \cite{Halmos1950} yields a decomposition into $\rk(\rho)$ pure states, and the Wu dilation \cite[Theorem~2.2]{Wu1997} yields one with $n+1$ mixed states.

 The paper is organized as follows. Section \ref{sec:prelim} sets precise definitions of ranks and notations used throughout. The main theorems and their corollaries are stated in section \ref{sec:main}. Sections \ref{sec:setup} and \ref{sec:jnr} contain preliminary material and explain useful tools then used for the proofs of main results in section \ref{sec:decomp}. Section \ref{sec:pseudocode} summarizes the main steps of the decompositions into pure product and mixed states. Section \ref{sec:n2} discusses interesting connections between our work and topics in operator theory, geometry, and quantum channels.

\bigskip
\section{Notation and definitions}\label{sec:prelim}
\bigskip
\subsection{General notations}

Throughout, $\Hs_A=\C^{d_A}$ and $\Hs_B=\C^{d_B}$ are finite-dimensional Hilbert spaces and $\Hs=\Hs_A\otimes\Hs_B$. We write $M_d(\C)$ for the $d\times d$ complex matrices, $\Id{d}\in M_d(\C)$ for the identity, $\Tr$ for the trace, and $M^\dagger$, $M^T$, $\overline{M}$ for the adjoint, transpose, and entry-wise complex conjugate of $M$ (so $M^\dagger=(\overline{M})^T$). For Hermitian $M$ we write $M\succeq 0$ ($M\succ 0$) when $M$ is positive semi-definite (definite), and $\lmin(M)$, $\lmax(M)$ for its smallest and largest eigenvalues. The \emph{spectral norm} $\norm{M}$ is the largest singular value of $M$; $M$ is a \emph{contraction} if $\norm{M}\le 1$. A \emph{density matrix} (or \emph{state}) is a Hermitian, positive semi-definite matrix of unit trace; it is \emph{pure} if it has rank one. We also use the abbreviations PT for partial transpose, PSD for positive semi-definite. 
We denote by $\Disk=\{w\in\C : |w|\le 1\}$ the closed unit disk, by $\oDisk=\{w\in\C : |w|<1\}$ the open unit disk, and by $\bD$ the unit circle. For a set $S\subseteq\R^2$, $\conv(S)$ denotes its convex hull. The partial transpose of $\rho\in M_{d_A}(\C)\otimes M_{d_B}(\C)$ on the first factor is $\rho^{T_A}=((\cdot)^T\otimes\mathrm{id})(\rho)$, taken in a fixed product basis. For a density matrix $\rho$ over $\mathcal{H}_A\otimes \mathcal{H}_B$ the reduced density matrices $\rho_A$ and $\rho_B$ are often called $A$ and $B$-marginals. 

\subsection{Ranks}\label{sec:ranks}

\begin{definition}[Operator Schmidt ranks]
The \emph{operator Schmidt rank} of $\rho\in M_{d_A}(\C)\otimes M_{d_B}(\C)$, denoted $\OSR(\rho)$, is the minimal $r$ such that
\[
  \rho=\sum_{k=1}^r A_k\otimes B_k,
  \qquad A_k\in M_{d_A}(\C),\; B_k\in M_{d_B}(\C),
\]
with no constraint on the factors. The \emph{Hermitian operator Schmidt rank} $\HOSR(\rho)$, defined for Hermitian $\rho$, is the minimal such $r$ when the $A_k,B_k$ are in addition required to be Hermitian.
\end{definition}

\begin{definition}[Separable ranks]
Let $\rho$ be a separable density matrix on $\Hs_A\otimes\Hs_B$. Its \emph{separable rank} $\sr(\rho)$ is the minimal $r$ such that
\[
  \rho=\sum_{k=1}^r p_k\,\alpha_k\otimes\beta_k,
  \qquad p_k\geq0,\;\sum_{k=1}^r p_k=1,
\]
where the $\alpha_k\in M_{d_A}(\C)$ and $\beta_k\in M_{d_B}(\C)$ are density matrices. The \emph{pure separable rank} $\spr(\rho)$ is the minimal such $r$ when the $\alpha_k,\beta_k$ are in addition required to be pure. The pure separable rank is also called \emph{length} of the decomposition in the literature. 

\end{definition}

\begin{remark}
\label{rem:rankchain}
For bipartite Hermitian $\rho$ one has $\OSR(\rho)=\HOSR(\rho)$ \cite[Lemma 14]{De_las_Cuevas_2019}, and minimal decompositions of both kinds can be constructed from a singular value decomposition. Consequently, for separable $\rho$,
\[
  \OSR(\rho)=\HOSR(\rho)\le\sr(\rho)\le\spr(\rho).
\]
Each inequality is due to an additional constraint (positivity and purity) on the terms in the decomposition.
\end{remark}

\subsection{Local Hermitian bases and Bloch vectors}

For $S\in\{A,B\}$, fix $\Lambda^S_0=\Id{d_S}$ and let $\{\Lambda^S_k\}_{k=1}^{d_S^2-1}$ be the generalized Gell-Mann matrices of $\mathfrak{su}(d_S)$: traceless Hermitian matrices which, together with the identity, form an orthogonal basis of the real vector space of $d_S\times d_S$ Hermitian matrices,
\[
  \Tr\bigl(\Lambda^S_k\Lambda^S_{k'}\bigr)=2\,\delta_{kk'}, \qquad k,k'\ge 1 .
\]
For $d_S=2$ the Gell-Mann matrices are the Pauli matrices
\[
  X=\begin{pmatrix}0&1\\1&0\end{pmatrix},\quad
  Y=\begin{pmatrix}0&-i\\i&0\end{pmatrix},\quad
  Z=\begin{pmatrix}1&0\\0&-1\end{pmatrix},
\]
and every qubit density matrix can be written uniquely as
\[
  \rho=\tfrac12\bigl(\Id{2}+\vect{x}\cdot\vect{\sigma}\bigr),
  \qquad \vect{x}\cdot\vect{\sigma}=x_1X+x_2Y+x_3Z,
  \qquad x_k=\Tr(\sigma_k\rho)\in\R ;
\]
the vector $\vect{x}\in\R^3$ is the Bloch vector of $\rho$ and $\vect{\sigma} =(X,Y,Z)$. The Pauli matrices square to $\Id{2}$ and pairwise anticommute, so
\[
  (\vect{x}\cdot\vect{\sigma})^2
  =\sum_{i,j}x_ix_j\,\sigma_i\sigma_j
  =\sum_i x_i^2\,\Id{2}
  =|\vect{x}|^2\,\Id{2},
\]
the cross terms cancelling in pairs. Hence the eigenvalues of $\vect{x}\cdot\vect{\sigma}$ are $\pm|\vect{x}|$, and those of $\rho$ are $\lambda_\pm=\tfrac12(1\pm|\vect{x}|)$. Since $\rho\succeq 0$ we must have $|\vect{x}|\le 1$, and conversely.

\bigskip
\section{Main results}
\label{sec:main}
\bigskip
This section summarizes the main results of this paper. They are all constructive. 

\begin{theorem}[Separability with pure states]
\label{thm:main_pure}
Let $\rho$ be a density matrix on $\C^2\otimes\C^n$ with $\OSR(\rho)=3$. Then $\rho$ is separable with $\spr(\rho)=\rk{\rho}$.
\end{theorem}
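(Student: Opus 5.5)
Since every density matrix needs at least $\rk(\rho)$ pure product states in any decomposition, $\spr(\rho)\ge \rk(\rho)$. The plan is to construct a decomposition with exactly $\rk(\rho)$ pure product terms, following the strategy of the introduction: normal form, then contraction, then unitary dilation.

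\emph{Step 1 (normal form).} Since $\OSR(\rho)=3=\HOSR(\rho)$ by \cref{rem:rankchain}, write $\rho=\sum_{k=1}^3 A_k\otimes B_k$ with Hermitian factors. The $A_k$ span a $3$-dimensional real subspace of the $4$-dimensional space of $2\times 2$ Hermitian matrices. I would first handle the generic case $\rho_B\succ 0$; the singular case is dealt with by restricting $B$ to the support of $\rho_B$, which contains the ranges of all blocks. Apply the invertible SLOCC filter $\Id{2}\otimes \rho_B^{-1/2}/\sqrt{n}$, which preserves $\OSR$, rank, separability, and the number of pure product terms, to reach $\rho_B=\Id{n}/n$. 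A further local filter on $A$, an element of $SL(2,\C)$ acting as a Lorentz transformation on $(\Tr,\vect{x})$, rotates the $3$-dimensional span so that, after a Pauli rotation, it becomes $\Span\{\Id{2},X,Y\}$. The $Z$-component then vanishes, and
\[
\rho=\tfrac{1}{2n}\bigl(\Id{2}\otimes \Id{n} + X\otimes H_1+Y\otimes H_2\bigr),
\]
which in block form is $\rho=\frac{1}{2n}\begin{pmatrix}\Id{n}&T^\dagger\\ T&\Id{n}\end{pmatrix}$ with $T=H_1+iH_2$. Showing that this Lorentz normalisation is always possible, in particular when the span is tangent to the light cone, is the first delicate point. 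Those degenerate cases may require a separate limiting or direct argument.

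\emph{Step 2 (positivity and contraction).} By the Schur complement, $\rho\succeq 0$ if and only if $\norm{T}\le 1$. Moreover
\[
\rk\rho = n+\rk(\Id{n}-T^\dagger T)=n+\rk\DT,
\]
where $\DT=(\Id{n}-T^\dagger T)^{1/2}$ is the defect operator.

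\emph{Step 3 (dilation and decomposition).} Take the Halmos unitary dilation
\[
U=\begin{pmatrix}T&\DTs\\ \DT & -T^\dagger\end{pmatrix},
\]
restricted to the minimal space $\C^n\oplus\overline{\ran \DT}$. This gives a unitary $U$ on $\C^{m}$, $m=n+\rk\DT=\rk\rho$, with $T=PUP^\dagger$ for the coordinate isometry $P$. Diagonalise $U=\sum_{j=1}^m e^{i\theta_j}\ketbra{u_j}{u_j}$ and set $v_j=Pu_j\in\C^n$. Then $\sum_j \ketbra{v_j}{v_j}=\Id{n}$ and $T=\sum_j e^{i\theta_j}\ketbra{v_j}{v_j}$. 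With $\ket{a_j}=\tfrac{1}{\sqrt2}(\ket0+e^{i\theta_j}\ket1)$, one checks blockwise that
\[
\tfrac1n\sum_j \ketbra{a_j}{a_j}\otimes\ketbra{v_j}{v_j}=\tfrac{1}{2n}\begin{pmatrix}\Id{n}&T^\dagger\\ T&\Id{n}\end{pmatrix}.
\]
This is a sum of $m=\rk\rho$ pure product terms. Undoing the filters preserves the count, and zero vectors $v_j$ only reduce it, which is impossible below the rank. This gives $\spr(\rho)\le\rk\rho$.

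The main obstacle is Step 1: realising the normal form, and in particular showing that the $3$-dimensional span of $A$-factors can always be brought to $\Span\{\Id{2},X,Y\}$ via invertible local operations, including the boundary cases where that span meets the light cone tangentially, together with the singular-$\rho_B$ reduction.
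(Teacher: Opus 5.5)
Your route is the paper's: local filters to a normal form, the Schur complement giving $\rk\rho=n+\rk(\Id{n}-T^\dagger T)$, a unitary dilation of size $n+d$, and pure product terms read off its spectral decomposition. Using $\Span\{\Id{2},X,Y\}$ instead of the paper's $\Span\{\Id{2},X,Z\}$ is immaterial. However, two steps are not correct as written.

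\textbf{Step 1 is left open, and the obstacle you anticipate does not exist.} The span of the $A$-factors always contains $\rho_A=\sum_k\Tr(B_k)\,A_k$, which is positive definite by \cref{lem:osr3_mixed}. A hyperplane Minkowski-orthogonal to a null vector $N=(1,\hat n)$ contains no positive definite matrix: orthogonality forces $b_0=\hat n\cdot\vec b\le|\vec b|$. So the tangent case never occurs, and no limiting argument is needed.

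Concretely, the congruence $A\mapsto\rho_A^{-1/2}A\,\rho_A^{-1/2}$ puts $\Id{2}$ in the span. Since the span is three-dimensional, the traceless parts of the other two generators have linearly independent Bloch vectors, and an $SU(2)$ conjugation rotates their plane onto the $xy$-plane. This is in substance Cariello's filter (\cref{thm:cariello}, \cref{app:cariello_construction}), which the paper invokes.

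Your filters are also applied in the wrong order. An $A$-filter $\rho\mapsto(M\otimes\Id{n})\rho(M^\dagger\otimes\Id{n})$ changes the $B$-marginal to $\Tr_A[(M^\dagger M\otimes\Id{n})\rho]$, which destroys $\rho_B=\Id{n}/n$. Apply the $A$-filter first and normalise $\rho_B$ afterwards. The $B$-filter does not move the $A$-span, and an invertible $A$-filter does not change the support of $\rho_B$.

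\textbf{In Step 3 the Halmos unitary does not restrict to a unitary on $\C^n\oplus\overline{\ran D_T}$.} Because $T^\dagger\mathcal{D}_{T^\dagger}\subseteq\mathcal{D}_T$, $U_H$ maps $\C^n\oplus\mathcal{D}_{T^\dagger}$ onto $\C^n\oplus\mathcal{D}_T$, and these subspaces differ in general. For example, take $T=\begin{pmatrix}0&1\\0&0\end{pmatrix}$. Then $\ran D_T=\Span\{e_1\}$ and $\ran D_{T^\dagger}=\Span\{e_2\}$, and the $3\times3$ compression of $U_H$ to $\C^2\oplus\Span\{e_1\}$ has a zero second row.

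Getting exactly $n+d=\rk\rho$ terms, rather than $2n$, depends on this minimal dilation. You must identify both defect spaces with $\C^d$, namely $U=(\Id{n}\oplus V^\dagger)\,U_H\,(\Id{n}\oplus V_*)$, where $V,V_*$ are isometries onto $\mathcal{D}_T$ and $\mathcal{D}_{T^\dagger}$, as in \cref{lem:defect_rank}. For the example above this gives the $3\times3$ cyclic permutation.

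With these two repairs, the rest of your argument is correct and coincides with the paper's proof of \cref{thm:2n}. That includes the blockwise identity with $\ket{a_j}=\tfrac{1}{\sqrt2}(\ket0+e^{i\theta_j}\ket1)$, the rank lower bound, the observation that no $v_j$ can vanish, and undoing the filters.
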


An equivalent result was proved in \cite[Thm. 2]{Kraus2000}, where the decomposition is obtained by recursively subtracting product states from $\rho$, so as to lower the ranks of $\rho$ and $\rho^{T_A}$. We give an alternative proof based on an explicit construction, built from the eigendecomposition of a suitably chosen unitary. Besides being constructive, the method offers geometric insight into separability, and it extends to decompositions into mixed states.

\begin{theorem}[Separability with mixed states]
\label{thm:main_mixed}
Let $\rho$ be a density matrix on $\C^2\otimes\C^n$ with $\OSR(\rho)=3$. Then $\rho$ is separable with $\sr(\rho)\leq n+1$.
\end{theorem}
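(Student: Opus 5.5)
The plan is to reduce $\rho$ to a normal form in which all $A$–$B$ correlations sit in one $n\times n$ contraction $T$, and then read off a product decomposition from a normal dilation of $T$ of size $n+1$.

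\textbf{Normal form.} Since $\OSR(\rho)=\HOSR(\rho)=3$ (\cref{rem:rankchain}), write $\rho=\sum_{k=0}^{2}A_k\otimes B_k$ with Hermitian factors. The space $\Span\{A_0,A_1,A_2\}$ is a three-dimensional real subspace of the four-dimensional space of Hermitian $2\times 2$ matrices. I first assume $\rho_B\succ 0$; otherwise I restrict to the support of $\rho_B$, which only lowers $n$. Applying the SLOCC filter $\Id{2}\otimes \rho_B^{-1/2}$ (up to normalisation) makes the $B$-marginal proportional to $\Id{n}$. A local invertible filter on $A$ then sends the $A$-span to $\Span\{\Id{2},X,Y\}$. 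This is possible because a three-dimensional subspace containing a positive definite element can be mapped by a congruence into $\Span\{\Id{2},X,Y\}$. One needs to check that such an element exists, and I would check it via $\Tr_B\rho\succ 0$ after further restricting to the support of $\rho_A$. The state then takes the form
\[
\rho\propto \Id{2}\otimes \Id{n} + X\otimes H_1 + Y\otimes H_2,
\]
and I set $T=H_1+iH_2$. Explicitly,
\[
\rho\propto\begin{pmatrix}\Id{n} & T^\dagger\\ T & \Id{n}\end{pmatrix}
\]
up to ordering conventions. By the Schur complement, $\rho\succeq 0$ if and only if $\norm{T}\le 1$.

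\textbf{Decomposition from a normal dilation.} Suppose $N=\sum_j w_j P_j$ is a normal operator on $\C^{m}\supseteq\C^n$ with $|w_j|\le 1$, spectral projections $P_j$, and compression $T=V^\dagger N V$ for an isometry $V$. Then
\[
\Id{n}=\sum_j V^\dagger P_jV \quad\text{and}\quad T=\sum_j w_j V^\dagger P_jV.
\]
Substituting gives
\[
\rho\propto\sum_j \bigl(\Id{2}+\mathrm{Re}(w_j)X+\mathrm{Im}(w_j)Y\bigr)\otimes V^\dagger P_jV,
\]
up to the sign convention matching $T=H_1+iH_2$. Each $A$-factor is PSD because $|w_j|\le 1$, and each $B$-factor is PSD. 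Undoing the invertible local filters keeps every term a product of PSD operators, so after normalisation this is a separable decomposition with as many terms as $N$ has distinct eigenvalues. For \cref{thm:main_mixed} I take Wu's dilation \cite[Theorem~2.2]{Wu1997}, which dilates a contraction $T\in M_n(\C)$ to a unitary on $\C^{n+1}$. It has at most $n+1$ eigenvalues, all on $\bD$, giving at most $n+1$ product terms.

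\textbf{Main obstacle.} The normal-form step is where I expect the difficulty. I must justify that the filters exist when $\rho_A$ or $\rho_B$ is singular. I must also handle degenerate spans: for instance, the case where the positive element of the $A$-span forces the span to be congruent to $\Span\{\Id{2},Z,X\}$ rather than $\Span\{\Id{2},X,Y\}$. That case is equivalent up to a local unitary, but it needs a careful argument that the $B$-marginal can still be normalised to $\Id{n}$. Finally, I must ensure that Wu's construction indeed gives a dilation with $n+1$ eigenvalues even when $T$ is not a strict contraction. If the defect $\Id{n}-T^\dagger T$ has rank greater than one, I would rely on the theorem as stated for arbitrary contractions.
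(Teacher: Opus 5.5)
Your framework is the paper's. You use the normal form $\Id{2}\otimes\Id{n}+X\otimes H_1+Y\otimes H_2$ (the paper has $Z$ in place of $Y$, which is immaterial), the criterion $\rho\succeq0\iff\norm{T}\le1$, and $B$-factors obtained by compressing the spectral projections of a normal dilation, with one term per \emph{distinct} eigenvalue.

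\textbf{The gap.} It sits in the step that carries the content of the theorem. Contrary to your reading of Wu's theorem, an arbitrary contraction $T\in M_n(\C)$ has no unitary dilation on $\C^{n+1}$ once its defect index exceeds $1$. If $U'$ is unitary on $\C^{n+k}$ and $T=J^\dagger U'J$, then $\Id{n}-T^\dagger T=J^\dagger U'^\dagger(I-JJ^\dagger)U'J$ has rank at most $k$. Your claim would even give $\spr(\rho)\le n+1$, since an orthonormal eigenbasis of such a unitary yields rank-one $B$-factors. That contradicts $\spr(\rho)\ge\rk(\rho)=2n$ for every full-rank state, i.e.\ every $\norm{T}<1$. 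So strict contractions are the hard case, not the easy one, and ``relying on the theorem as stated'' does not work.

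\textbf{The fix.} What the paper uses (\cref{thm:wu_inside_Sn} and the proof of \cref{thm:np1}) is a unitary dilation with few distinct eigenvalues but large multiplicity. Write $T=U\oplus A$ with $U$ unitary and $A$ completely non-unitary. Then $T$ dilates to $\bigoplus^N\bigl(D\oplus S(\Lambda)\bigr)$, where $D$ lists the $p$ distinct unimodular eigenvalues and $S(\Lambda)$ is the $m\times m$ compressed shift attached to the minimal polynomial of $A$. Since $S(\Lambda)$ has defect index $1$, \cref{lem:defect_rank} dilates it to a unitary $V$ on $\C^{m+1}$. Hence $U_{dil}=\bigoplus^N(D\oplus V)$ is a unitary dilation of $T$ of size $N(p+m+1)$, with at most $p+m+1=\deg m_T+1\le n+1$ distinct eigenvalues. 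The multiplicity $N$ is absorbed into mixed $B$-factors, which is exactly why the bound is on $\sr$ and not on $\spr$. The weaker statement $W(T)\subseteq W(U_1)$ for some $(n+1)\times(n+1)$ unitary $U_1$ would not suffice, because your construction needs the compression identity $T=J^\dagger U_{dil}J$. Since you already count distinct eigenvalues, substituting this dilation for yours closes the argument.

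\textbf{Normal form.} Your filters must be applied in the opposite order. A non-unitary congruence $V_1$ on $A$ replaces the $B$-marginal by $\Tr_A[\rho\,(V_1V_1^\dagger\otimes\Id{n})]$, which undoes $\rho_B\propto\Id{n}$. A filter on $B$, by contrast, leaves the $A$-span unchanged. So filter $A$ first, then restrict to the support of $\rho_B$ and apply $\Id{2}\otimes\rho_B^{-1/2}$.

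The positive definite element of the $A$-span needs no restriction to the support of $\rho_A$. Indeed $\rho_A=\sum_k\Tr(B_k)A_k$ lies in the span and is mixed by \cref{lem:osr3_mixed}, hence positive definite. After a congruence sends it to $\Id{2}$, the span is $\R\Id{2}\oplus P$ for a plane $P$ of Bloch vectors. An $SU(2)$ rotation carries $P$ onto the $XY$-plane, so no degenerate case arises. This is the paper's Cariello filtering (\cref{app:cariello_construction}) followed by $\Id{2}\otimes\tau_B^{-1}$.
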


There exist states on $M_2(\C)\otimes M_n(\C)$ whose minimal separable decompositions require exactly $n+1$ terms (see \cref{ex:n+1}), so the bound in \cref{thm:main_mixed} is tight.

\begin{corollary}[The $3\times n$ case]\label{cor:3n}
Let $\rho$ be a density matrix on $\C^3\otimes\C^n$ with $\OSR(\rho)=3$ and $\rho=\rho^{T_A}$. Then $\rho$ is separable, with $\sr(\rho)\leq 2(n+1)$.
\end{corollary}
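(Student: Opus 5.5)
The plan is to reduce \cref{cor:3n} to two applications of \cref{thm:main_mixed}. The key is to split the qutrit factor $\C^3$ into pieces that look like qubits. Write $\rho=\sum_{k=1}^3 A_k\otimes B_k$ with Hermitian factors, using \cref{rem:rankchain} ($\OSR=\HOSR$). The hypothesis $\rho=\rho^{T_A}$ forces $\sum_k A_k^T\otimes B_k=\sum_k A_k\otimes B_k$. Since the $B_k$ are linearly independent (the decomposition is minimal), we get $A_k^T=A_k$. Combined with Hermiticity, each $A_k$ is a \emph{real symmetric} $3\times 3$ matrix.

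The first step is to find a real, positive reduction. Let $V=\Span_\R\{A_1,A_2,A_3\}$, a three-dimensional real space of real symmetric matrices. I would look for a decomposition of $\C^3$ that splits $\rho$ into two pieces of the form
\[
\rho=(P\otimes I)\rho(P\otimes I)+(Q\otimes I)\rho(Q\otimes I),
\]
where $P,Q$ are projections onto subspaces of dimension at most two. For this we would need the off-diagonal blocks to vanish, which is not automatic.

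A more robust route uses the extreme rays of the cone $\{(\text{local } A\text{-part})\}$. Since $\rho_A=\Tr_B\rho\succeq0$, consider the compression to $A$. I would try to write $\rho=\rho_1+\rho_2$, where each $\rho_i$ is PSD, supported on $\Hs_i\otimes\C^n$ with $\dim\Hs_i\le 2$, and still satisfies $\OSR(\rho_i)\le 3$. Concretely, pick a real unit vector $v\in\R^3$ and a scalar $t>0$, and subtract $t\,\ketbra{v}{v}\otimes\sigma$ for a suitable positive $\sigma$ in $\Span\{B_k\}$. The parameter $t$ is chosen maximal, so that the remainder stays PSD and drops rank in the $A$ factor.

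The second step handles the pieces. Each piece restricted to a qubit subspace $\C^2\subset\C^3$ is a state on $\C^2\otimes\C^n$ with $\OSR\le 3$. If its $\OSR$ is $1$ or $2$, separability with at most two terms is known, citing the $\OSR=2$ result recalled in the introduction. If its $\OSR$ is $3$, \cref{thm:main_mixed} gives at most $n+1$ mixed product terms. Summing the two pieces yields $\sr(\rho)\le 2(n+1)$.

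The main obstacle is the splitting step: guaranteeing that a real symmetric three-dimensional pencil allows the subtraction to leave a PSD remainder supported on a two-dimensional $A$-subspace, while staying within the span of the original $B_k$ (so that $\OSR$ does not increase). I would first try to exploit the real structure directly. Choose $v$ in the kernel of a real combination $\sum c_kA_k$ that is singular; such a combination exists by a determinant or degree argument in three real variables. Then compress $\rho$ with the real projection onto $v^\perp$ and onto $v$, and check whether the cross terms $\bra{v}A_k\ket{w}$ can be absorbed. If this fails, the fallback is to use $\rho^{T_A}=\rho$ and Kraus-type rank-lowering by subtracting product states $\ketbra{v}{v}\otimes\beta$. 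Since $v$ is real, such a subtraction preserves PT-symmetry, and it reduces $\rk\rho_A$ to $2$.
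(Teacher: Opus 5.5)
\textbf{There is a genuine gap: the splitting step, which carries all of the corollary's content, is never established.}

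Your opening reduction is correct. In a minimal Hermitian decomposition the $B_k$ are linearly independent, so $\rho=\rho^{T_A}$ forces every $A_k$ to be real symmetric. Your final count is also exactly how the paper concludes: two $2\otimes n$ pieces, each handled by \cref{thm:main_mixed} or by the $\OSR\le 2$ results. What you do not prove is that $\rho=\rho_1+\rho_2$ with $\rho_i\succeq 0$ supported on $\Hs_i\otimes\C^n$, $\dim\Hs_i\le 2$, and $\OSR(\rho_i)\le 3$. The paper does not derive this from its own tools either. It imports precisely this decomposition from Cariello \cite{Cariello_2021} and then applies \cref{thm:main_mixed} to each summand.

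\textbf{Your single-product subtraction fails for a generic pencil.} You may assume $\rho_A\succ 0$, since otherwise $\rho$ already lives on $\C^2\otimes\C^n$. Then $\Span\{A_k\}$ contains the positive definite matrix $\rho_A=\sum_k \Tr(B_k)A_k$.
\begin{enumerate}
\item For the remainder $\rho-t\ketbra{v}{v}\otimes\sigma$ to have an $A$-marginal of rank two, it must annihilate $\ket{w}\otimes\C^n$ for some $w$. Equivalently, $\rho(\ket{w}\otimes\ket{\phi})\in\ket{v}\otimes\C^n$ for every $\phi$.
\item Expanding $\rho=\sum_k A_k\otimes B_k$ and using independence of the $B_k$, this means $A\ket{w}\in\Span\{\ket{v}\}$ for every $A$ in the pencil.
\item Since $\rho_A$ lies in the pencil, $w$ must be a common eigenvector of $\rho_A^{-1}A$ for all such $A$.
\item For a pencil spanned by $I$, $\diag(1,2,3)$ and a real symmetric matrix with nonzero off-diagonal entries, no such $w$ exists. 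Such a pencil is realized by $\tfrac{1}{3n}(I\otimes I+\epsilon\,\cdot)$ for small $\epsilon$.
\end{enumerate}
Choosing $t$ maximal only makes the remainder singular as a $3n\times 3n$ matrix. It does not lower its $A$-rank.

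\textbf{Your other two routes do not close the gap either.} The block splitting with $P+Q=I$ needs the even stronger condition that the $A_k$ share an invariant subspace. The Kraus-type fallback is unsupported, for three reasons:
\begin{enumerate}
\item repeated product subtractions need not shrink the $A$-support;
\item they generally push the $B$-factors outside $\Span\{B_k\}$, so the remainder's $\OSR$ can exceed three;
\item they come with no term count compatible with $2(n+1)$.
\end{enumerate}

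\textbf{What is missing.} You need a decomposition in which both pieces have genuinely two-dimensional, generally overlapping, $A$-supports. Cariello's theorem supplies exactly this. Without it, or a new argument about real symmetric $3\times 3$ pencils that produces such a split, the proposal does not prove the corollary.
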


This corollary follows from a decomposition of \cite{Cariello_2021}. Cariello proved that on $\C^3\otimes\C^n$ states with $\OSR(\rho)=3$ and $\rho=\rho^{T_A}$, are separable and can be written as a sum of two terms containing embeddings of states on $\C^2\otimes\C^n$, each with $\OSR = 3$. Applying \cref{thm:main_mixed} to each summand directly yields the bound.

For two-qubit systems, \cref{thm:main_mixed} yields the following.

\begin{corollary}[Two qubits]\label{cor:two_qubit}
Let $\rho$ be a density matrix on $\C^2\otimes\C^2$ with $\OSR(\rho)=3$. Then $\rho$ is separable with $\sr(\rho)=3$.
\end{corollary}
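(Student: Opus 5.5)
We obtain the statement by combining \cref{thm:main_mixed} (with $n=2$) with the chain of inequalities in \cref{rem:rankchain}.

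\textbf{Upper bound.} Take $n=2$ in \cref{thm:main_mixed}. A two-qubit density matrix $\rho$ with $\OSR(\rho)=3$ is then separable and satisfies
\[
  \sr(\rho)\le n+1=3 .
\]
This already settles separability. It also produces an explicit decomposition into three product states, namely the one built from the Wu dilation of the $2\times 2$ contraction $T$ associated with the filtered normal form of $\rho$.

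\textbf{Lower bound.} Since $\rho$ is separable, \cref{rem:rankchain} gives
\[
  3=\OSR(\rho)\le\sr(\rho).
\]
Concretely, a decomposition $\rho=\sum_{k=1}^r p_k\,\alpha_k\otimes\beta_k$ is in particular a tensor decomposition with $r$ terms. Hence $r\ge\OSR(\rho)=3$. Combining the two bounds yields $\sr(\rho)=3$.

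\textbf{Where care is needed.} Essentially no obstacle remains beyond \cref{thm:main_mixed} itself. The only point to check is that the proof of \cref{thm:main_mixed} does not secretly require $n\ge3$. In particular, the Wu dilation of a $2\times2$ contraction $T$ must give a normal dilation of size $n+1=3$ whose eigendecomposition yields three product terms. I would confirm this directly. One can also argue geometrically: for $n=2$ the numerical range of $T$ is an elliptical disk inside $\Disk$, the analogue of the steering ellipse. One then encloses it in a triangle with vertices in $\Disk$; such a triangle exists by Poncelet-type/Wu results for $2\times2$ contractions. The three vertices give the three mixed product terms, consistent with the Jevtic et al.\ picture for the degenerate (planar) ellipsoid case.
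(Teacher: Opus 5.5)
Your argument is correct and is exactly the paper's: \cref{thm:main_mixed} with $n=2$ gives $\sr(\rho)\le 3$, and \cref{rem:rankchain} gives $3=\OSR(\rho)\le\sr(\rho)$. One minor correction to your aside: the Wu-based unitary of \cref{thm:np1} need not be $3\times 3$ (it is $6\times 6$ whenever $\norm{T}<1$); what is bounded by $3$ is its number of distinct eigenvalues, whereas a genuinely three-dimensional normal dilation comes from the paper's alternative proof (\cref{thm:n2}, which dilates $T/\norm{T}$ and applies \cref{prop:threeterm}), in line with your triangle picture.
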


Indeed, \cref{thm:main_mixed} gives $\sr(\rho)\leq 3$, and since $3=\OSR(\rho)\leq\sr(\rho)$ the equality follows. Combined with \cite{cariello2014,De_las_Cuevas_2019} for the case $\OSR(\rho)=\sr(\rho) = 2$ and \cite{SanperaTarrachVidal1998,Wootters1998} for $\OSR(\rho)=\sr(\rho) = 4$, we recover the result $\sr(\rho)=\OSR(\rho)$ of \cite{Jevtic2014}.

The proofs occupy the remainder of the paper. In \cref{sec:setup} SLOCC congruences, which preserve both separability and the operator Schmidt rank, reduce any $\OSR(\rho)=3$ state on $\C^2\otimes\C^n$ to the normal form
\[
  \rho=\tfrac{1}{2n}\bigl(\Id 2\otimes\Id n + X\otimes C_1+Z\otimes C_3\bigr),
  \qquad \text{with } C_1,C_3\ \text{Hermitian}.
\]
The two nontrivial Schmidt factors can be encoded in the single matrix $T=C_1+iC_3$. \cref{sec:jnr} studies its numerical range $W(T)$, establishing the geometric properties used throughout. 
\cref{thm:main_pure,thm:main_mixed} then correspond to two matrix dilations of $T$ detailed in \cref{sec:decomp}. Finally, the sharper results for the  two-qubit case (\cref{cor:two_qubit}) are discussed in \cref{sec:n2}, where a barycentric/Poncelet geometric construction (\cref{prop:threeterm}) yields the optimal three-term decomposition. To keep the construction self-contained, \cref{app:schmidt,app:cariello_construction,app:isometry} gather the technical details required for a full implementation that are omitted for readability.

\bigskip
\section{Reduction to a normal form}\label{sec:setup}
\bigskip

\subsection{Setup and tools }

From now on,
\[
  \Hs_A=\C^2 \ \text{(qubit)},\qquad
  \Hs_B=\C^n,\ n\ge 2 \ \text{(qudit)},\qquad
  \Hs_{AB}=\Hs_A\otimes\Hs_B\cong\C^{2n}.
\]
Let $\rho$ be a density matrix on $\Hs_{AB}$ with $\OSR(\rho)=3$. We seek a separable decomposition of the form $\rho=\sum_k p_k\,\alpha_k\otimes\beta_k$.

\begin{lemma}
\label{lem:osr3_mixed}
If $\OSR(\rho)\geq 2$, then the marginals $\rho_A=\Tr_B\rho$ and $\rho_B=\Tr_A\rho$ are both mixed.
\end{lemma}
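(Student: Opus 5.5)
We argue by contraposition: if one of the marginals is pure, then $\rho$ is a product state, so $\OSR(\rho)=1$. The key tool is the standard fact that a positive semidefinite operator whose reduced operator is rank one must factorize.

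\textbf{Case $\rho_A$ pure.} Suppose $\rho_A=\ketbra{\psi}{\psi}$. Let $P=(\Id{2}-\ketbra{\psi}{\psi})\otimes\Id{n}$ be the projector onto $\psi^\perp\otimes\C^n$. Then
\[
\Tr(P\rho)=\Tr\bigl((\Id{2}-\ketbra{\psi}{\psi})\rho_A\bigr)=0 .
\]
Since $\rho\succeq 0$, the identity $\Tr(P\rho P)=\Tr(P\rho)=0$ forces $P\rho P=0$. Writing $\rho=M^\dagger M$, this gives $MP=0$, and therefore $\rho P=P\rho=0$.

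It follows that $\rho$ is supported on the subspace $\ket{\psi}\otimes\C^n$, which means $\rho=\ketbra{\psi}{\psi}\otimes\sigma$ for some operator $\sigma$ on $\C^n$. Taking the partial trace over $A$ identifies $\sigma=\rho_B$. Hence $\rho=\rho_A\otimes\rho_B$ and $\OSR(\rho)=1$.

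\textbf{Case $\rho_B$ pure.} The same argument applies with the roles of $A$ and $B$ exchanged, now using the projector $\Id{2}\otimes(\Id{n}-\ketbra{\phi}{\phi})$ where $\rho_B=\ketbra{\phi}{\phi}$.

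Both cases contradict the hypothesis $\OSR(\rho)\ge 2$, so both marginals are mixed.

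The only step requiring care is showing that $\Tr(P\rho)=0$ implies $\rho P=0$. This is exactly where positivity of $\rho$ is used, through the factorization $\rho=M^\dagger M$, which turns $\Tr(P\rho P)=0$ into $\|MP\|_2^2=0$. For a non-positive operator the implication would fail, but here $\rho$ is a density matrix, so the argument goes through.
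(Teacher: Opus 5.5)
Your proof is correct and follows the same route as the paper: a pure marginal $\ketbra{a}{a}$ gives $\Tr[((\Id{2}-\ketbra{a}{a})\otimes\Id{n})\rho]=0$, so $\rho$ is supported on $\ket{a}\otimes\C^n$, hence $\rho=\ketbra{a}{a}\otimes\rho_B$ and $\OSR(\rho)=1$. You also spell out, via $\rho=M^\dagger M$, why the vanishing of this trace forces $P\rho=\rho P=0$, a step the paper leaves implicit.
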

\begin{proof}
If $\rho_A$ is pure, $\rho_A = \ketbra{a}{a}$. Then $\Tr[((I - \ketbra{a}{a})\otimes I )\rho]  = 0$ and $\rho$ is supported on $\text{span} \{\ket{a}\}\otimes \Hs_B$. Hence $\rho = \ketbra{a}{a} \otimes \rho_B$ and $\OSR(\rho) = 1$. The same argument shows $\rho_B$ cannot be pure. 
\end{proof}

\begin{lemma}[Invariance under SLOCC filtering \cite{Verstraete2001}]
\label{lem:filtering}
Let $V=V_1\otimes V_2$ with $V_1\in\mathrm{GL}_2(\C)$ and $V_2\in\mathrm{GL}_n(\C)$, and define the filtering map on density matrices $\rho\in M_{2n}(\C)$ by
\[
  \Phi_V(\rho)=\frac{V^\dagger\rho\,V}{\Tr\!\left(V^\dagger\rho\,V\right)}.
\]
Then $\Phi_V$ is well defined and maps density matrices to density matrices. It preserves $\OSR$, rank, separability, the separable rank, the pure separable rank, and the rank of each factor in a separable decomposition. Moreover $\Phi_V$ is invertible with $\Phi_V^{-1}=\Phi_{V^{-1}}$, so a separable decomposition of $\Phi_V(\rho)$ also yields one for $\rho$.
\end{lemma}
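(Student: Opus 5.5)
The plan is to verify each listed property directly from the definition of $\Phi_V$, using that congruence by an invertible product operator acts factorwise on tensor decompositions.

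\emph{Well-definedness and positivity.} For any vector $v$ we have $\langle v, V^\dagger\rho V v\rangle=\langle Vv,\rho\,Vv\rangle\ge 0$, so $V^\dagger\rho V\succeq 0$; it is also Hermitian. Its trace is strictly positive: since $V$ is invertible, $V^\dagger\rho V\neq 0$ whenever $\rho\neq 0$, and a nonzero PSD matrix has positive trace. Hence the normalization makes sense and $\Phi_V(\rho)$ is a density matrix.

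\emph{Rank and operator Schmidt rank.} Rank is unchanged because multiplying by the invertible matrices $V^\dagger$ and $V$ preserves rank, and rescaling by a positive scalar does too. For $\OSR$, any decomposition $\rho=\sum_k A_k\otimes B_k$ gives
\[
V^\dagger\rho V=\sum_k (V_1^\dagger A_kV_1)\otimes(V_2^\dagger B_kV_2),
\]
which has the same number of terms, so $\OSR(\Phi_V(\rho))\le\OSR(\rho)$. Applying the same argument with $V^{-1}=V_1^{-1}\otimes V_2^{-1}$ gives the reverse inequality.

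\emph{Invertibility.} We check that $\Phi_{V^{-1}}\circ\Phi_V=\mathrm{id}$. Indeed $(V^{-1})^\dagger\,(V^\dagger\rho V/c)\,V^{-1}=\rho/c$ with $c=\Tr(V^\dagger\rho V)$, and normalizing removes the factor $c$ because $\Tr\rho=1$.

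\emph{Separable structure.} Suppose $\rho=\sum_k p_k\,\alpha_k\otimes\beta_k$. Then
\[
V^\dagger\rho V=\sum_k p_k\,(V_1^\dagger\alpha_kV_1)\otimes(V_2^\dagger\beta_kV_2).
\]
Each factor $V_i^\dagger\alpha V_i$ is PSD, nonzero, and has the same rank as $\alpha$. Write $a_k=\Tr(V_1^\dagger\alpha_kV_1)>0$ and $b_k=\Tr(V_2^\dagger\beta_kV_2)>0$, and set new weights $p_k'=p_ka_kb_k/c$. The normalized factors $V_1^\dagger\alpha_kV_1/a_k$ and $V_2^\dagger\beta_kV_2/b_k$ are density matrices, and the $p_k'$ sum to $1$ by taking the trace of both sides. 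This yields a separable decomposition of $\Phi_V(\rho)$ with the same number of terms and the same factor ranks; in particular pure factors stay pure. Hence separability is preserved and $\sr$ and $\spr$ cannot increase. Applying the same argument to $\Phi_{V^{-1}}$ gives equality and shows that decompositions transfer back to $\rho$.

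There is no real obstacle here. The only point needing care is ensuring the normalizing traces are nonzero, which follows from invertibility of $V_1$ and $V_2$.
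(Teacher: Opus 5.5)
Your proof is correct and follows essentially the same route as the paper. It establishes positivity through the quadratic form, lets the product congruence act factorwise on any tensor or separable decomposition (rescaling the weights), and uses the inverse filter $\Phi_{V^{-1}}$; your explicit appeal to $\Phi_{V^{-1}}$ for the reverse inequalities on $\OSR$, $\sr$ and $\spr$ is just a cleaner statement of the paper's remark that the local congruences are bijections on the PSD cone.
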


\begin{proof}
Since $V$ is invertible, $VV^\dagger$ is positive definite, so $\Tr(V^\dagger\rho\,V)=\Tr(\rho\,VV^\dagger)>0$ whenever $\rho\neq 0$; the denominator is therefore nonzero. For any $\ket{\psi}\in\C^{2n}$,
\[
  \bra{\psi}V^\dagger\rho\,V\ket{\psi}=(V\ket{\psi})^\dagger\rho\,(V\ket{\psi})\ge 0,
\]
so $V^\dagger\rho\,V$ is positive semi-definite. It is Hermitian, and normalizing by its (positive) trace gives a unit-trace PSD matrix, i.e.\ a density matrix.

If $\rho=\sum_i p_i\,\alpha_i\otimes\beta_i$, then, using $V^\dagger=V_1^\dagger\otimes V_2^\dagger$,
\[
  V^\dagger\rho\,V=\sum_i p_i\,(V_1^\dagger\alpha_i V_1)\otimes(V_2^\dagger\beta_i V_2).
\]
Thus the length of any decomposition is preserved, as is the $\OSR$. If the $\alpha_i,\beta_i$ are density matrices, each $V_1^\dagger\alpha_i V_1$ and $V_2^\dagger\beta_i V_2$ is PSD by the argument above. Renormalizing each local factor to unit trace and absorbing the positive scalars into the weights shows that $\Phi_V(\rho)$ is separable. The maps $\alpha\mapsto V_1^\dagger\alpha V_1$ and $\beta\mapsto V_2^\dagger\beta V_2$ are congruences, since $V_1,V_2$ are invertible, and congruence preserves rank; hence each term keeps the rank of both local factors. In particular, a rank-one (pure) product term is sent to a rank-one product term, so pure decompositions map to pure decompositions.

Because the two local congruences are bijections on the PSD cone (up to positive scalars), a decomposition with $r$ terms maps to one with exactly $r$ terms. Hence $\Phi_V$ preserves both the separable rank and the pure separable rank. Finally, we observe that $(V^{-1})^\dagger(V^\dagger\rho\,V)V^{-1}=\rho$ up to normalization, so $\Phi_V^{-1}=\Phi_{V^{-1}}$. 
\end{proof}

\subsection{Reduction to a PT-invariant state}

Our analysis relies on a result of Cariello \cite{cariello2014}, according to which every $\rho$ on $\Hs$ with operator Schmidt rank $\OSR(\rho)=3$ is separable. The argument of \cite{cariello2014} combines two ingredients: an invertible local filtering first maps $\rho$ to a state invariant under partial transposition, and such a state is in turn separable.

\begin{theorem}[{\cite[Theorem~3.2]{cariello2014}}]\label{thm:cariello}
Let $\rho$ be a density matrix on $\Hs$ with $\OSR(\rho)=3$. Then there exists an invertible $V=V_1\otimes V_2$ such that
\[
  \Phi_V(\rho)=\Phi_V(\rho)^{T_A},
\]
where $\Phi_V$ is the local filtering associated with $V$ (\cref{lem:filtering}) and $T_A$ denotes partial transposition on subsystem $A$.
\end{theorem}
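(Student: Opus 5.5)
Write $\rho=\sum_{k=1}^{3}A_k\otimes B_k$ with Hermitian factors, which is possible by \cref{rem:rankchain}. The $A_k$ span a three-dimensional real subspace $\mathcal{S}\subset M_2(\C)_{\mathrm{herm}}$, and the $B_k$ span a three-dimensional subspace $\mathcal{T}$. Partial transposition on $A$ acts on $\Span\{\Id2,X,Y,Z\}$ by fixing $\Id2,X,Z$ and sending $Y\mapsto -Y$. So $\rho=\rho^{T_A}$ holds exactly when, in the expansion $\rho=\tfrac{1}{2n}\bigl(\Id2\otimes C_0+X\otimes C_1+Y\otimes C_2+Z\otimes C_3\bigr)$, the coefficient $C_2$ vanishes. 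The task is therefore to choose $V_1\in\mathrm{GL}_2(\C)$ so that the transformed $A$-space $V_1^\dagger\mathcal{S}V_1$ is contained in $\Span_\R\{\Id2,X,Z\}$, i.e.\ is orthogonal to $Y$ in the Hilbert–Schmidt sense. The second factor can then be taken as $V_2=\Id n$, or as $\rho_B^{-1/2}$ if one also wants the normal form.

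The plan works with the $2$-dimensional Minkowski geometry of $M_2(\C)_{\mathrm{herm}}\cong\R^{1,3}$. The map $H\mapsto V_1^\dagger HV_1$ acts as $|\det V_1|$ times a proper orthochronous Lorentz transformation, since $\det(V_1^\dagger HV_1)=|\det V_1|^2\det H$. Every such Lorentz transformation arises this way, by the covering $\mathrm{SL}_2(\C)\to SO^+(1,3)$. A three-dimensional hyperplane $\mathcal{S}$ is the orthogonal complement, with respect to the Minkowski form $\langle H,K\rangle=\tfrac12\Tr(H\,\tilde K)$ ($\tilde K$ the adjugate), of a normal vector $N$. The target hyperplane $\Span\{\Id2,X,Z\}$ is the Minkowski complement of $Y$, which is spacelike. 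So $\mathcal{S}$ can be mapped onto the target if and only if $N$ is spacelike: Lorentz transformations act transitively on spacelike lines, so we pick one taking $N$ to a multiple of $Y$.

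The main obstacle is to prove that $N$ is spacelike, and this is where positivity of $\rho$ and $\OSR=3$ enter. By \cref{lem:osr3_mixed}, $\rho_A\in\mathcal{S}$ is positive definite, so $\mathcal{S}$ contains a timelike vector. That already excludes $N$ timelike, since the complement of a timelike vector is purely spacelike. The remaining case is $N$ lightlike, $N=\ketbra{v}{v}$. Then every $H\in\mathcal{S}$ satisfies $\Tr(H\,\tilde N)=0$, i.e.\ $\bra{v^\perp}H\ket{v^\perp}=0$ for a fixed vector $v^\perp$. Consequently $(\bra{v^\perp}\otimes\Id n)\rho(\ket{v^\perp}\otimes\Id n)=0$. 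Since $\rho\succeq0$, this forces $\rho(\ket{v^\perp}\otimes\Id n)=0$. Hence $\rho$ is supported on $\ket{v}\otimes\C^n$, so $\rho=\ketbra{v}{v}\otimes\rho_B$ and $\OSR(\rho)=1$, a contradiction. I expect this degenerate-case exclusion to be the delicate point; the rest is standard Lorentz-group transitivity.

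To make the construction explicit, first apply $V_1=\rho_A^{-1/2}$, which makes $\Id2\in\mathcal{S}$. Then $N$ is orthogonal to $\Id2$, i.e.\ $N=\vect{n}\cdot\vect{\sigma}$ is traceless. Finally apply a unitary $U\in SU(2)$ rotating the Bloch vector $\vect{n}$ to the $Y$-axis. The composition $V_1=\rho_A^{-1/2}U$ is the required filter, and $\Phi_V(\rho)=\Phi_V(\rho)^{T_A}$ follows from the first paragraph.
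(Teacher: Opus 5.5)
Your proof is correct, and it is close to the paper's in substance. Your last paragraph is essentially the construction in \cref{app:cariello_construction}. There too the positive-definite factor $A_1\propto\rho_A$ is first whitened to $\Id2$ (via $R^{-1}$ with $RR^\dagger=A_1$; your $\rho_A^{-1/2}$ is one particular choice of $R^{-1}$). The $Y$-component is then removed by a map that acts as a rotation on the traceless plane.

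The paper realizes that rotation in coordinates, in several steps. A unitary diagonalizing $A_2'$ puts one direction of the plane on the $Z$-axis. The diagonal of $A_3''$ is absorbed into the $\Id2$ and $D$ terms. Finally the diagonal matrix $S=\diag(1,\bar b)$ turns the leftover off-diagonal direction into a multiple of $X$. $\OSR=3$ is used there to guarantee $D\neq\lambda\Id2$ and $b\neq0$. Your single $SU(2)$ rotation sending the normal $\vect{n}$ to the $Y$-axis does the same job in one coordinate-free step.

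What your Minkowski framing adds is an invariant criterion: a three-dimensional real subspace of Hermitian matrices can be filtered into $\Span\{\Id2,X,Z\}$ exactly when its normal is spacelike. Your explicit construction does not depend on it, though. Once $\Id2\in\mathcal S$, the normal is traceless and hence automatically spacelike.

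Likewise, the lightlike case you flag as the delicate point follows at once from what you already have. If $N=\ketbra{v}{v}$, every $H\in\mathcal S$ satisfies $\bra{v^\perp}H\ket{v^\perp}=0$. So $\mathcal S$ contains no positive-definite matrix, which contradicts $\rho_A\succ0$. Your support argument there simply re-proves \cref{lem:osr3_mixed}.
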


The construction of $V$ is reproduced in \cref{app:cariello_construction}. Invariance under $T_A$ then yields separability through the following result.

\begin{theorem}[\cite{Kraus2000}]\label{thm:kraus}
Let $\rho$ be a state on $\C^2\otimes\C^n$. If $\rho=\rho^{T_A}$, then $\rho$ is separable.
\end{theorem}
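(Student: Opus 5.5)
The plan is to prove \cref{thm:kraus} directly by exploiting the block structure that $T_A$-invariance imposes. Write
\[
  \rho=\begin{pmatrix} P & Q\\ Q^\dagger & R\end{pmatrix},\qquad P,R\succeq 0 \text{ acting on } \C^n,
\]
with blocks indexed by the computational basis of $\Hs_A=\C^2$. Partial transposition on $A$ swaps the off-diagonal blocks, so $\rho^{T_A}$ has off-diagonal blocks $Q^\dagger$ and $Q$. The hypothesis $\rho=\rho^{T_A}$ is therefore equivalent to $Q=Q^\dagger$.

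In the Pauli basis this reads
\[
  \rho=\tfrac12\bigl(\Id2\otimes(P+R)+X\otimes 2Q+Z\otimes(P-R)\bigr),
\]
with all three $B$-factors Hermitian. The $Y$-component vanishes, so every product term we build will have a real Bloch vector in the $XZ$-plane.

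Next I reduce to the case where $\rho_B=P+R$ is invertible. If it is not, $\rho$ is supported on $\C^2\otimes\ran(P+R)$, since $\rho\succeq0$, and we simply restrict $\Hs_B$ to that range. After filtering with $V=\Id2\otimes (P+R)^{-1/2}$ via \cref{lem:filtering}, we have $P+R\propto\Id n$. The filter acts only on $B$, so $Q$ remains Hermitian and $T_A$-invariance is preserved. The state then takes the form
\[
  \rho=\tfrac{1}{2n}\bigl(\Id2\otimes\Id n+X\otimes C_1+Z\otimes C_3\bigr)
\]
with $C_1,C_3$ Hermitian. Set $T=C_1+iC_3$.

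The main step is to show that positivity of $\rho$ forces $\norm{T}\le1$. I would compute $\bra{a}\otimes\bra{b}\,\rho\,\ket{a}\otimes\ket{b}$ over all $a,b$, or equivalently write $\rho$ as a $2\times2$ block matrix in a rotated basis. Conjugating $A$ by a suitable unitary maps $X,Z$ to a combination whose off-diagonal block is $T$ or $T^\dagger$. The Schur complement condition for $\begin{pmatrix}I&T\\T^\dagger&I\end{pmatrix}\succeq0$ is then exactly $\norm{T}\le1$. Identifying this unitary correctly, so that the diagonal blocks become identities and not $\Id n\pm C_3$, is the step I expect to be the main obstacle. If it fails, I would fall back on a weaker but sufficient statement: for every unit $\theta$, $\Id n+\cos\theta\, C_1+\sin\theta\, C_3\succeq0$, i.e.\ $\Re(e^{-i\theta}T)\preceq \Id n$ and likewise for $-T$, so that $W(T)\subseteq\Disk$.

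Given $W(T)\subseteq\Disk$ (numerical radius at most one), Ando's theorem gives $\norm{T}\le 2$, and more usefully there is a normal dilation $N=\sum_k \lambda_k E_k$ with $|\lambda_k|\le1$ such that $T=V^\dagger N V$ for an isometry $V$. For a contraction one can take the Halmos unitary dilation. Then
\[
  C_1+iC_3=\sum_k \lambda_k\, V^\dagger E_kV,
\]
and since $\sum_k V^\dagger E_k V=\Id n$, the weights $F_k:=V^\dagger E_kV$ are PSD and sum to the identity. Separating real and imaginary parts, using Hermiticity of the $C_i$ and positivity of the $F_k$, gives $C_1=\sum_k\Re\lambda_k F_k$ and $C_3=\sum_k\Im\lambda_k F_k$. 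Hence
\[
  \rho=\tfrac1n\sum_k \tfrac12\bigl(\Id2+\Re\lambda_k X+\Im\lambda_k Z\bigr)\otimes F_k .
\]
Each qubit factor is a valid state because $|\lambda_k|\le1$, which proves separability. Undoing the filter via \cref{lem:filtering} transfers the decomposition back to the original $\rho$.
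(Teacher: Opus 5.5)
Your main line matches the paper's reproof of this theorem. You pass to the normal form, prove $\rho\succeq0\iff\norm{T}\le1$ by a Schur complement (\cref{prop:contraction}), and compress the spectral projections of a unitary dilation of $T$ to a POVM (\cref{lem:dilation_decomposition}, \cref{thm:2n}). The only difference is that you use the $2n$-dimensional Halmos dilation instead of the minimal one of size $n+d=\rk\rho$. That is enough for separability, but it loses the count $\spr(\rho)=\rk\rho$.

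The step you flag as the main obstacle is routine. Take $U=\tfrac{1}{\sqrt2}(I_2+iX)$: it commutes with $X$ and satisfies $UZU^\dagger=Y$. Hence $(U\otimes I_n)(2n\rho)(U\otimes I_n)^\dagger=\begin{psmallmatrix}I_n&T^\dagger\\T&I_n\end{psmallmatrix}$, with identity diagonal blocks as you wanted.

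Your fallback, however, is wrong and should be deleted; relying on it would leave a genuine gap.
\begin{itemize}
\item A normal dilation $T=V^\dagger NV$ with $\operatorname{spec}N\subseteq\Disk$ forces $\norm{T}\le\norm{N}\le1$. So numerical radius at most one cannot produce such a dilation. For example, $T=\begin{psmallmatrix}0&2\\0&0\end{psmallmatrix}$ has $W(T)=\Disk$ but $\norm{T}=2$.
\item Ando's theorem only gives a dilation $2U$ with spectrum on the circle of radius $2$. The resulting qubit factors $\tfrac12(I_2+\Re\lambda\,X+\Im\lambda\,Z)$ are then not states.
\item More fundamentally, the condition $I_n+\cos\theta\,C_1+\sin\theta\,C_3\succeq0$ for all $\theta$ only encodes positivity of $\rho$ on product vectors. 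The operator $\tfrac14(I_2\otimes I_2+X\otimes C_1+Z\otimes C_3)$ built from the $T$ above satisfies it. Yet it has eigenvalue $\tfrac{1-2}{4}<0$, so it is not even a state.
\end{itemize}
Hence no argument that uses only $W(T)\subseteq\Disk$ can reach separability. Full positivity of $\rho$, through $\norm{T}\le1$, is indispensable.
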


 We reprove this theorem thanks to the decomposition obtained in \cref{thm:2n,thm:np1}.
Combining \cref{thm:cariello,thm:kraus} with the SLOCC-invariance of the separable rank (\cref{lem:filtering}), it suffices to construct separable decompositions for states satisfying $\rho=\rho^{T_A}$.

\subsection{Reduction to a maximally mixed B-marginal state}

\begin{lemma}[Reduction to full local rank \cite{li2018}]
\label{lem:full_local_rank_reduction}
Let $\rho$ be a density matrix on $\Hs_A\otimes\Hs_B= \mathbb{C}^2\otimes\mathbb{C}^n$ with $B$-marginal $\rho_B=\Tr_A\rho$, and suppose $\rk\rho_B=m<n$. Set $\Hs_C=\C^m$. Then there exist a unitary $U\in M_n(\C)$ and a density matrix $\rho'$ on $\Hs_A\otimes\Hs_C$ with $\Tr_A\rho'$ of full rank, such that
\[
  \rho'=\begin{pmatrix} \rho'_{1,1} & \rho'_{1,2} \\ \rho'_{2,1} & \rho'_{2,2} \end{pmatrix}
   \text{ and }
  (\Id{2}\otimes U^\dagger)\,\rho\,(\Id{2}\otimes U) 
  =\begin{pmatrix} \rho'_{1,1} & 0 & \rho'_{1,2} & 0 \\ 0 & 0 & 0 & 0 \\ \rho'_{2,1} & 0 & \rho'_{2,2} & 0 \\ 0 & 0 & 0 & 0 \end{pmatrix}.
\]
The last block structure is $\C^2\otimes(\C^m\oplus\C^{n-m})$ and $\rho'_{ij}$ are $m\times m$ matrices.
\end{lemma}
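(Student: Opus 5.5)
The plan is to diagonalize the $B$-marginal and then use positivity of $\rho$ to show that every block of $\rho$ is supported on the range of $\rho_B$. Write $\rho_B=\Tr_A\rho = U D U^\dagger$ with $U\in M_n(\C)$ unitary and $D=\diag(d_1,\dots,d_m,0,\dots,0)$, where $d_i>0$ and the zeros are placed last. Set $\tilde\rho=(\Id{2}\otimes U^\dagger)\rho(\Id{2}\otimes U)$. This is again a density matrix, since it is a local unitary conjugation and in particular a special case of \cref{lem:filtering}. Its $B$-marginal is $\Tr_A\tilde\rho=U^\dagger\rho_B U=D$. Let $P$ denote the projector onto the last $n-m$ basis vectors of $\C^n$.

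The key step is to show that $(\Id{2}\otimes P)\tilde\rho=0$. We have
\[
\Tr\bigl[(\Id 2\otimes P)\tilde\rho\bigr]=\Tr(PD)=0.
\]
Since $\tilde\rho\succeq0$, this gives $\Tr\bigl[(\Id2\otimes P)\tilde\rho(\Id2\otimes P)\bigr]=0$, so the compression $(\Id2\otimes P)\tilde\rho(\Id2\otimes P)$, which is PSD, must vanish. Writing $\tilde\rho=M^\dagger M$, this means $M(\Id2\otimes P)=0$. Hence $\tilde\rho(\Id2\otimes P)=0$, and by Hermiticity also $(\Id2\otimes P)\tilde\rho=0$. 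This is the same mechanism as in the proof of \cref{lem:osr3_mixed}.

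Now write $\tilde\rho$ in $2\times2$ block form with $n\times n$ blocks $\tilde\rho_{ij}$. The vanishing statement says that every $\tilde\rho_{ij}$ has zero last $n-m$ rows and columns, so $\tilde\rho_{ij}=\rho'_{ij}\oplus 0_{n-m}$ with $\rho'_{ij}\in M_m(\C)$. This gives exactly the displayed structure with respect to $\C^2\otimes(\C^m\oplus\C^{n-m})$. Define
\[
\rho'=(\Id2\otimes W^\dagger)\tilde\rho(\Id2\otimes W),
\]
where $W:\C^m\to\C^n$ is the inclusion isometry. Then $\rho'$ is PSD as a compression of a PSD matrix. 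It has unit trace because $\tilde\rho=(\Id2\otimes WW^\dagger)\tilde\rho(\Id2\otimes WW^\dagger)$. Its marginal is $\Tr_A\rho'=\diag(d_1,\dots,d_m)$, which has full rank.

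The only slightly delicate point is the passage from the vanishing of the diagonal compression to the vanishing of the off-diagonal blocks $\tilde\rho_{12},\tilde\rho_{21}$ restricted to the kernel directions. The factorization $\tilde\rho=M^\dagger M$ handles this uniformly, so I expect no real obstacle. The remaining checks are bookkeeping.
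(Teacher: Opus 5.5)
Your proposal is correct and follows essentially the same route as the paper: diagonalize $\rho_B$, conjugate by $\Id{2}\otimes U^\dagger$, use $\Tr[(\Id{2}\otimes P)\tilde\rho]=\Tr(PD)=0$ together with positivity to get $(\Id{2}\otimes P)\tilde\rho=0$, and compress onto $\C^2\otimes\C^m$. Your factorization $\tilde\rho=M^\dagger M$ justifies the step from the vanishing trace to the vanishing of the off-diagonal blocks, which the paper only asserts.
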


\begin{proof}
Diagonalize the $B$-marginal as $\rho_B=U\bigl(\Lambda\oplus 0_{n-m}\bigr)U^\dagger$, where $\Lambda\in M_m(\C)$ is positive definite and $U\in M_n(\C)$ is unitary. Set
\[
  \tilde\rho=(\Id{2}\otimes U^\dagger)\,\rho\,(\Id{2}\otimes U).
\]
Its $B$-marginal is $\Tr_A\tilde\rho=U^\dagger\rho_B U=\Lambda\oplus 0_{n-m}$. Let $Q=0_m\oplus\Id{n-m}$ be the projector onto the kernel of $\Tr_A\tilde\rho$. Then
\[
  \Tr\!\bigl[(\Id{2}\otimes Q)\,\tilde\rho\bigr]
  =\Tr_B\bigl[Q\,\Tr_A\tilde\rho\bigr]
  =\Tr_B\bigl[(0_m\oplus\Id{n-m})(\Lambda\oplus 0_{n-m})\bigr]
  =0.
\]
Since $\tilde\rho\succeq 0$ and $\Id{2}\otimes Q\succeq 0$, the vanishing of this trace forces $(\Id{2}\otimes Q)\,\tilde\rho=0$. Writing $\Pi=\Id{m}\oplus 0_{n-m}=\Id{n}-Q$, we obtain
\[
  \tilde\rho=(\Id{2}\otimes \Id{n})\,\tilde\rho\,(\Id{2}\otimes \Id{n})=(\Id{2}\otimes\Pi)\,\tilde\rho\,(\Id{2}\otimes\Pi)
  =\begin{pmatrix} \rho'_{1,1} & 0 & \rho'_{1,2} & 0 \\ 0 & 0 & 0 & 0 \\ \rho'_{2,1} & 0 & \rho'_{2,2} & 0 \\ 0 & 0 & 0 & 0 \end{pmatrix}
\]
in the decomposition $\C^2\otimes(\C^m\oplus\C^{n-m})$ with $\rho'_{ij}$ of dimension $m\times m$. Setting
$\rho'=\begin{psmallmatrix}\rho'_{1,1} & \rho'_{1,2}\\ \rho'_{2,1} & \rho'_{2,2}\end{psmallmatrix}$
gives $\rho'\in M_2(\C)\otimes M_m(\C)$ with $\rho'\succeq 0$ and $\Tr\rho'=1$. Since $\Tr_A\rho'=\Lambda\succ 0$, its $B$-marginal has full rank $m$.
\end{proof}

This construction uses the unitary filtering $\Phi_{\Id{2}\otimes U}$, a special case of \cref{lem:filtering}. Therefore, it preserves separability, the separable rank, and the rank of each factor. Moreover, a separable decomposition of $\rho'$ yields one of $\tilde\rho$ by padding each $B$-factor with zeros, which leaves the rank of every term unchanged. 

Suppose now that $\rho_B$ has full rank; then $\rho_B$ is positive definite, hence invertible, and admits a unique positive-definite Hermitian square root $\tau_B$ with $\tau_B^2=\rho_B$. Consider the filtering $\Phi_V$ with $V=\Id{2}\otimes\tau_B^{-1}$. Because $V$ acts as the identity on subsystem $A$ while $T_A$ transposes only that subsystem, the two operations commute; hence $\Phi_V$ preserves the partial-transpose invariance $\rho=\rho^{T_A}$. Its $B$-marginal is
\[
  \Tr_A\!\bigl(V^\dagger\rho\,V\bigr)
  =\tau_B^{-1}\rho_B\,\tau_B^{-1}
  =\Id{n},
\]
so after normalization $\Tr_A\Phi_V(\rho)=\Id{n}/n$. We may thus restrict to the case $\rho_B=\Id{n}/n$. 

\subsection{Normal form and the matrix T}

The discussion of the previous paragraphs shows we can assume, without loss of generality, throughout
the rest of the paper that $\rho$ is a density matrix satisfying:
\begin{description}
  \item[(H1) PT-invariance] $\rho^{T_A}=\rho$, where $T_A$ is the partial transpose on system $A$ in the computational basis. Since $X^T=X$, $Z^T=Z$, and $Y^T=-Y$, this is equivalent to the vanishing of the $Y^A$ correlations, i.e.\ $\Tr\bigl(\rho\,(Y\otimes\Lambda^B_j)\bigr)=0$ for all $j\ge 0$.
  \item[(H2) Maximally mixed $B$-marginal] $\rho_B=\Tr_A\rho=\Id{n}/n$.
  \item[(H3) Mixed $A$-marginal] $\rho_A=\Tr_B\rho=\tfrac12\bigl(\Id{2}+a_1X+a_3Z\bigr)$ with $\vect{a}=(a_1,a_3)\in\R^2$. The absence of a $Y$-term follows from (H1), and since by \cref{lem:osr3_mixed} $\rho_A$ is mixed, we have $a_1^2+a_3^2<1$.
\end{description}

\begin{lemma}[Normal form]
\label{lem:normalform}
Under hypotheses (H1)--(H3) there exist unique Hermitian matrices $C_1,C_3\in M_n(\C)$ such that
\begin{equation}
\label{eq:rhoC}
  \rho=\frac{1}{2n}\bigl(\Id{2}\otimes\Id{n}+X\otimes C_1+Z\otimes C_3\bigr).
\end{equation}
\end{lemma}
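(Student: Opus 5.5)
Expand $\rho$ in the product Hermitian basis $\{\sigma_\mu\otimes\Lambda^B_j\}$ with $\sigma_0=\Id{2}$, $\sigma_1=X$, $\sigma_2=Y$, $\sigma_3=Z$. These operators form an orthogonal basis of the real space of Hermitian matrices on $\C^2\otimes\C^n$. Grouping terms by the $A$ factor gives a unique decomposition
\[
  \rho=\frac{1}{2n}\bigl(\Id{2}\otimes C_0+X\otimes C_1+Y\otimes C_2+Z\otimes C_3\bigr)
\]
with Hermitian $C_\mu\in M_n(\C)$. The coefficients are recovered as
\[
  C_\mu=n\,\Tr_A\bigl((\sigma_\mu\otimes\Id{n})\rho\bigr),
\]
using $\Tr(\sigma_\mu\sigma_\nu)=2\delta_{\mu\nu}$. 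Each $C_\mu$ is Hermitian because $\rho$ and $\sigma_\mu$ are Hermitian. This already gives existence and uniqueness of such an expansion. It remains to identify $C_0$ and $C_2$.

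Apply the two hypotheses in turn. First, (H2): taking $\Tr_A$ kills the traceless Pauli terms, so $\rho_B=\frac{1}{2n}\cdot 2C_0=C_0/n$. Hence $\rho_B=\Id{n}/n$ forces $C_0=\Id{n}$. Second, (H1): since $X^T=X$, $Z^T=Z$, $Y^T=-Y$ and $\Id{2}^T=\Id{2}$, the partial transpose gives
\[
  \rho^{T_A}=\frac{1}{2n}\bigl(\Id{2}\otimes C_0+X\otimes C_1-Y\otimes C_2+Z\otimes C_3\bigr).
\]
By uniqueness of the expansion, $\rho=\rho^{T_A}$ yields $C_2=-C_2$, so $C_2=0$. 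This gives \eqref{eq:rhoC}.

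For uniqueness of $C_1,C_3$ in \eqref{eq:rhoC}, suppose two such pairs exist. Subtracting gives $X\otimes D_1+Z\otimes D_3=0$. Multiplying by $X\otimes\Id{n}$ (respectively $Z\otimes\Id{n}$) and taking $\Tr_A$ gives $2D_1=0$ (respectively $2D_3=0$), using the tracelessness and orthogonality of the Pauli matrices. Hypothesis (H3) is not needed for this lemma. It only records that $\Tr C_1=n a_1$ and $\Tr C_3=na_3$, obtained by taking $\Tr$ of the formula for $C_\mu$. There is no real obstacle here; the one point to state carefully is uniqueness of the expansion, which follows from linear independence of $\{\sigma_\mu\}$ as operators on the first factor.
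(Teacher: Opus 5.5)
Your proof is correct and follows essentially the same route as the paper. Both expand in the product basis $\{\sigma_\mu\otimes\Lambda^B_j\}$, use (H1) to remove the $Y$ component and (H2) to fix the identity component to $\Id{n}$, and get uniqueness from the orthogonality of the Pauli matrices; you are also right that (H3) plays no role in the lemma itself.
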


\begin{proof}
Expand $\rho$ in the orthogonal basis $\{\sigma_i\otimes\Lambda^B_j\}_{i,j\ge 0}$ of Hermitian matrices on $\Hs$, where $(\sigma_0,\sigma_1,\sigma_2,\sigma_3)=(\Id{2},X,Y,Z)$. Hypothesis (H1) removes every term with $\sigma_i=Y$, and (H2) removes the terms $\Id{2}\otimes\Lambda^B_j$ for $j\ge1$, since $\Tr\bigl(\rho\,(\Id{2}\otimes\Lambda^B_j)\bigr)=\Tr(\rho_B\Lambda^B_j)=0$. The surviving terms are $\Id{2}\otimes\Id{n}$, together with $X\otimes\Lambda^B_j$ and $Z\otimes\Lambda^B_j$ for $j\ge0$; collecting the $B$-factors, we obtain
\[
  \rho=\frac{1}{2n}\Bigl(
    \Id{2}\otimes\Id{n}
    +X\otimes C_1
    +Z\otimes C_3
  \Bigr).
\]
The matrices $C_1, C_3$ are given by their (unique) expansion on the orthonormal Gell-Mann basis. 
They are generally not traceless, with $\tfrac1n\Tr C_k=a_k$.
\end{proof}

Operationally, $\tfrac1n\,C_k=\Tr_A\bigl(\rho\,(\sigma_k\otimes\Id{n})\bigr)$ is the $B$-operator obtained by measuring $\sigma_k$ on $A$. Conversely, any Hermitian $C_1, C_3$ in \eqref{eq:rhoC} produces a Hermitian operator satisfying (H1) and (H2), with $\vect{a}=\bigl(\tfrac1n\Tr C_1,\tfrac1n\Tr C_3\bigr)$; only the positivity $\rho\succeq0$ is a nontrivial constraint, and it is characterized next.

\subsection{Positivity is equivalent to  contraction}
\begin{definition}\label{def:T}
Given $\rho$ as in \eqref{eq:rhoC}, we associate the complex matrix
\[
  T=C_1+i\,C_3\in M_n(\C).
\]
\end{definition}

\begin{proposition}
\label{prop:contraction}
Let $\rho$ be as in \eqref{eq:rhoC} with $C_1,C_3$ Hermitian. Then
\[
  \rho\succeq 0 \iff \norm{T}\le 1,
\]
i.e. $T$ is a contraction. Furthermore, $\rk(\rho) = n + \rk(I-TT^\dagger)$.
\end{proposition}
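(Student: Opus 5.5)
The plan is to write $2n\rho$ as a $2\times 2$ block matrix over $\Hs_A=\C^2$ and reduce positivity to a Schur complement. With $X=\begin{psmallmatrix}0&1\\1&0\end{psmallmatrix}$ and $Z=\diag(1,-1)$, the normal form \eqref{eq:rhoC} gives
\[
  2n\rho=\begin{pmatrix} \Id{n}+C_3 & C_1\\ C_1 & \Id{n}-C_3\end{pmatrix}.
\]
The diagonal blocks are not constant, so I would first conjugate by the unitary $W\otimes\Id{n}$, chosen so that $X$ and $Z$ are rotated into $\tfrac12$ of the off-diagonal combinations $\sigma_\pm$. Concretely, $W$ should send the $Z$-eigenbasis to the $Y$-eigenbasis $\tfrac{1}{\sqrt2}(1,\pm i)$. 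In that basis $X\mapsto Z$ and $Z\mapsto$ something of the form $\pm X$ or $\pm Y$, which is the wrong direction.

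A more direct choice is the basis $\tfrac1{\sqrt2}(\ket0\pm i\ket1)$, the eigenbasis of $Y$. There both $X$ and $Z$ become off-diagonal, since $Y$ is diagonal and $X$, $Z$ anticommute with it. The conjugated matrix should then take the form
\[
  2n\,\tilde\rho=\begin{pmatrix}\Id{n} & T^{\dagger}\\ T & \Id{n}\end{pmatrix}
\]
up to swapping $T$ and $T^\dagger$ or a harmless phase, with $T=C_1+iC_3$. The first step is to verify this block form by computing the off-diagonal entry of $W^\dagger(a X+bZ)W$, which should be $a\pm ib$. Checking the phase conventions here is the one step where care is needed. A phase $e^{i\theta}$ or a conjugation would not matter anyway, since these do not change $\norm{T}$ or $\rk(\Id{n}-TT^\dagger)$.

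Once the block form is in hand, I would use the standard fact that $\begin{psmallmatrix}\Id{} & T^\dagger\\ T & \Id{}\end{psmallmatrix}\succeq0$ if and only if the Schur complement $\Id{n}-TT^\dagger\succeq0$ (equivalently $\Id{n}-T^\dagger T\succeq 0$), which holds exactly when $\norm{T}\le1$. An explicit way to see this is the block congruence
\[
  \begin{pmatrix}\Id{n}&0\\-T&\Id{n}\end{pmatrix}
  \begin{pmatrix}\Id{n}&T^\dagger\\T&\Id{n}\end{pmatrix}
  \begin{pmatrix}\Id{n}&-T^\dagger\\0&\Id{n}\end{pmatrix}
  =\begin{pmatrix}\Id{n}&0\\0&\Id{n}-TT^\dagger\end{pmatrix}.
\]
Congruence by an invertible matrix preserves both positivity (Sylvester inertia) and rank. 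This single identity therefore gives both claims at once: $\rho\succeq0\iff \Id{n}-TT^\dagger\succeq0\iff\norm{T}\le1$, and $\rk\rho=\rk(\Id{n})+\rk(\Id{n}-TT^\dagger)=n+\rk(\Id{n}-TT^\dagger)$. The unitary change of basis $W\otimes\Id{n}$ and the positive factor $2n$ do not affect either conclusion.

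The main obstacle is only bookkeeping: choosing $W$ correctly, so that both Hermitian factors $C_1,C_3$ assemble into the single non-Hermitian off-diagonal block $T$ while the diagonal blocks become exactly $\Id{n}$. After that, the Schur complement argument is routine.
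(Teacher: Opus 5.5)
Your proposal is correct and is essentially the paper's proof. You apply a local unitary on the qubit to bring $2n\rho$ to the form $\begin{pmatrix} I & T^\dagger \\ T & I\end{pmatrix}$ up to a phase or adjoint, and the Schur-complement congruence then gives both $\rho\succeq 0\iff\norm{T}\le 1$ and $\rk(\rho)=n+\rk(I-TT^\dagger)$.

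The only difference is the choice of basis. The paper conjugates by $\tfrac{1}{\sqrt2}(I+iX)$, which fixes $X$ and sends $Z\mapsto Y$, so $T$ appears exactly. Your $Y$-eigenbasis instead gives the off-diagonal blocks $-iT$ and $iT^\dagger$, or the reverse depending on how you order the eigenvectors. The Schur complement is therefore $I-TT^\dagger$ or $I-T^\dagger T$, and these have the same rank by the SVD of $T$, so nothing is lost.
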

\begin{proof}
We work with $2n\rho=\Id{2}\otimes\Id{n}+X\otimes C_1+Z\otimes C_3$. Let $U=\tfrac1{\sqrt2}(\Id{2}+iX)\in M_2(\C)$, a unitary with $UXU^\dagger=X$ and $UZU^\dagger=Y$. Hence, with $V=U\otimes\Id{n}$,
\[
  V(2n\rho)V^\dagger=\Id{2}\otimes\Id{n}+X\otimes C_1+Y\otimes C_3 .
\]
In the block decomposition induced by the computational basis $\{\ket0,\ket1\}$ of the qubit,
\[
  X\otimes C_1=\begin{pmatrix}0&C_1\\C_1&0\end{pmatrix},
  \qquad
  Y\otimes C_3=\begin{pmatrix}0&-iC_3\\iC_3&0\end{pmatrix}.
\]
Using the Schur complement, we obtain
\begin{equation}
\label{eq:blockT}
  V(2n\rho)V^\dagger
  =\begin{pmatrix}\Id{n} & T^\dagger\\ T & \Id{n}\end{pmatrix}
    = \begin{pmatrix} \Id{n} & 0 \\ T & \Id{n} \end{pmatrix}
    \begin{pmatrix} \Id{n} & 0 \\ 0 & \Id{n} - T T^\dagger \end{pmatrix}
    \begin{pmatrix} \Id{n} & T^\dagger \\ 0 & \Id{n} \end{pmatrix}.
\end{equation}
Thus
\[
  \begin{pmatrix}\Id{n}&T^\dagger\\T&\Id{n}\end{pmatrix}\succeq 0
  \iff \Id{n}-TT^\dagger\succeq 0
  \iff \norm{T}\le 1 .
\]
The last equivalence holds because $\Id{n}-TT^\dagger\succeq0$ means every singular value of $T$ is at most $1$. From the block form \eqref{eq:blockT}, $\rk(\rho)=n+\rk(\Id n-TT^\dagger)$.
\end{proof}

\begin{remark}
The Hermitian operator $X\otimes C_1+Z\otimes C_3$ is unitarily equivalent to 
\[
  \begin{pmatrix} 0 & T \\ T^\dagger & 0 \end{pmatrix},
\]
whose spectrum is $\{\pm s_k(T)\}_{k=1}^n$, where $s_k(T)$ are the singular values of $T$. Consequently,
\[
  \operatorname{spec}(\rho)=\Bigl\{\tfrac{1\pm s_k(T)}{2n}\Bigr\}_{k=1}^n,
\]
and in particular $\lmin(\rho)=\frac{1-s_{\max}(T)}{2n}$. Hence $\rho$ has a zero eigenvalue if and only if $s_{\max}(T)=\norm{T}=1$.
\end{remark}

\subsection{Associated definitions}

\begin{definition}
    A contraction $T$ is completely non-unitary (c.n.u.) if the only vector $x$ with $\|T^k x\| = \|x\| = \|T^{\dagger k} x\|$ for every  $k \geq 1$ is $x = 0$.
\end{definition}

\begin{lemma}[Canonical decomposition]\label{lem:decomposition_cnu}
Let $T$ be a contraction on $\C^n$. Then there is a unique decomposition $\Hs=\Hs_{0}\oplus \Hs_{1}$ into reducing subspaces such that $U=T|_{\Hs_{0}}$ is unitary and $A=T|_{\Hs_{1}}$ is completely non-unitary. Hence $T=U\oplus A$. Moreover $\Hs_{0}$ is the largest reducing subspace on which $T$ is unitary.
\end{lemma}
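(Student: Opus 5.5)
The plan is to take $\Hs_{0}$ to be the set of vectors that witness the failure of complete non-unitarity, namely
\[
  \Hs_{0}=\{x\in\C^n:\ \|T^kx\|=\|x\|=\|T^{\dagger k}x\| \text{ for all } k\ge 1\},
\]
and $\Hs_1=\Hs_0^\perp$. Then I will show that $\Hs_0$ is a reducing subspace on which $T$ is unitary, that it contains every other reducing subspace on which $T$ is unitary, and finally read off complete non-unitarity of $T|_{\Hs_1}$ and uniqueness.

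\emph{Step 1: $\Hs_0$ is a subspace.} For a contraction, $\|Sx\|=\|x\|$ holds iff $(I-S^\dagger S)x=0$. Indeed, $I-S^\dagger S\succeq0$, and for a PSD operator $P$ one has $\langle x,Px\rangle=0$ iff $Px=0$. Apply this with $S=T^k$ and $S=T^{\dagger k}$, both contractions. Then
\[
  \Hs_0=\bigcap_{k\ge1}\ker(I-T^{\dagger k}T^k)\cap\ker(I-T^kT^{\dagger k}),
\]
which is an intersection of kernels and hence a linear subspace.

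\emph{Step 2: $\Hs_0$ is invariant under $T$ and $T^\dagger$.} Let $x\in\Hs_0$ and $y=Tx$.
\begin{itemize}
  \item For $k\ge1$: $\|T^ky\|=\|T^{k+1}x\|=\|x\|=\|Tx\|=\|y\|$.
  \item For $k=1$: since $T^\dagger Tx=x$, we get $T^{\dagger}y=x$, so $\|T^\dagger y\|=\|y\|$.
  \item For $k\ge2$: $T^{\dagger k}y=T^{\dagger (k-1)}x$, whose norm is $\|x\|=\|y\|$.
\end{itemize}
So $y\in\Hs_0$. The same argument with $T\leftrightarrow T^\dagger$ (using $TT^\dagger x=x$) shows $T^\dagger\Hs_0\subseteq\Hs_0$. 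Hence $\Hs_0$ is reducing and $\Hs_1=\Hs_0^\perp$ is reducing too, giving $T=U\oplus A$.

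\emph{Step 3: $U$ is unitary and $A$ is c.n.u.} On $\Hs_0$ we have $T^\dagger T=I$, so $U$ is an isometry on a finite-dimensional space, hence unitary. For maximality, let $\mathcal K$ be any reducing subspace on which $T$ is unitary. Then $T^k$ and $T^{\dagger k}$ restrict to unitaries on $\mathcal K$, so every $x\in\mathcal K$ satisfies the defining norm equalities and $\mathcal K\subseteq\Hs_0$. For complete non-unitarity, suppose $x\in\Hs_1$ satisfies the norm conditions for $A$. Since $\Hs_1$ is reducing, $A^k=T^k|_{\Hs_1}$ and $A^{\dagger k}=T^{\dagger k}|_{\Hs_1}$, so $x\in\Hs_0\cap\Hs_1=\{0\}$.

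\emph{Step 4: uniqueness.} Take another decomposition $\Hs_0'\oplus\Hs_1'$ with $T|_{\Hs_0'}$ unitary and $T|_{\Hs_1'}$ c.n.u. By maximality, $\Hs_0'\subseteq\Hs_0$. Now $\Hs_0\cap\Hs_1'$ is reducing, because it is an intersection of reducing subspaces. Its vectors satisfy the norm equalities, and these are equalities for $T|_{\Hs_1'}$ as well, since $\Hs_1'$ is reducing. By c.n.u., $\Hs_0\cap\Hs_1'=\{0\}$. Write $x\in\Hs_0$ as $x=x_0'+x_1'$ with $x_0'\in\Hs_0'\subseteq\Hs_0$. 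Then $x_1'=x-x_0'\in\Hs_0\cap\Hs_1'=\{0\}$, so $\Hs_0=\Hs_0'$ and $\Hs_1=\Hs_1'$.

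The main obstacle is Step 2, the shift-invariance bookkeeping. It hinges on converting the norm equalities into the operator identities $T^\dagger Tx=x$ and $TT^\dagger x=x$, which is exactly where the PSD kernel argument of Step 1 is used.
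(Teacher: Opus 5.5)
Your proof is correct and is essentially the standard Sz.-Nagy--Foias argument (\cite[Thm 3.2]{sznagy2010harmonic}), which the paper cites instead of proving the lemma. In that argument $\Hs_0$ is defined by the norm equalities, identified as an intersection of the kernels of $I-T^{\dagger k}T^k$ and $I-T^kT^{\dagger k}$, and shown to be invariant under $T$ and $T^\dagger$, after which maximality, complete non-unitarity of $T|_{\Hs_1}$, and uniqueness follow exactly as in your Steps 3--4. The one implicit step is concluding $\Hs_1'=\Hs_1$ from $\Hs_0'=\Hs_0$, which uses that $\oplus$ denotes an orthogonal sum, as intended.
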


We refer to \cite[Thm 3.2]{sznagy2010harmonic} for the proof. For a contraction $T$ on $\mathbb{C}^{n}$,
\[
  \Hs _{0}=\bigoplus_{|\lambda|=1}\ker(T-\lambda I),
\]
the sum being over the unimodular eigenvalues of $T$.

\begin{definition}
    For a contraction $T$ on $\C^n$, the \emph{defect operator} is
\[
  D_T = (I-T^\dagger T)^{1/2}\ \succeq 0,
\]
and the \emph{defect index} is $d = \rk D_T = \rk D_{T^\dagger}$.

\end{definition}

\bigskip
\section{The joint numerical range}
\label{sec:jnr}
\bigskip

\subsection{Definition and Properties}

\begin{definition}
\label{def:jnr}
The \emph{joint numerical range} of the pair $(C_1,C_3)$ is
\[
  \W=\Bigl\{
    \bigl(\braket{\psi|C_1|\psi},\,\braket{\psi|C_3|\psi}\bigr)\in\R^2
    \;:\;
    \ket\psi\in\C^n,\;\braket{\psi|\psi}=1
  \Bigr\}.
\]
Identifying $\R^2\cong\C$, this is the ordinary numerical range of $T = C_1 + i C_3$:
$\W=W(T)=\{\braket{\psi|T|\psi}, \braket{\psi | \psi}=1\}$.
\end{definition}

\begin{remark}
    For $n=2$, thanks to the normal form, this is the steering ellipse \cite{Jevtic2014} of the filtered state in the plane given by $X$ and $Z$.
\end{remark}

An extensive review of numerical ranges  is done in \cite{Gau_Wu_2021}. We present here the properties relevant to the problem.

\begin{figure}
  \centering
  \begin{subfigure}[b]{0.64\textwidth}
    \centering
    \includegraphics[width=\linewidth, trim={0 0 400 0},clip ]{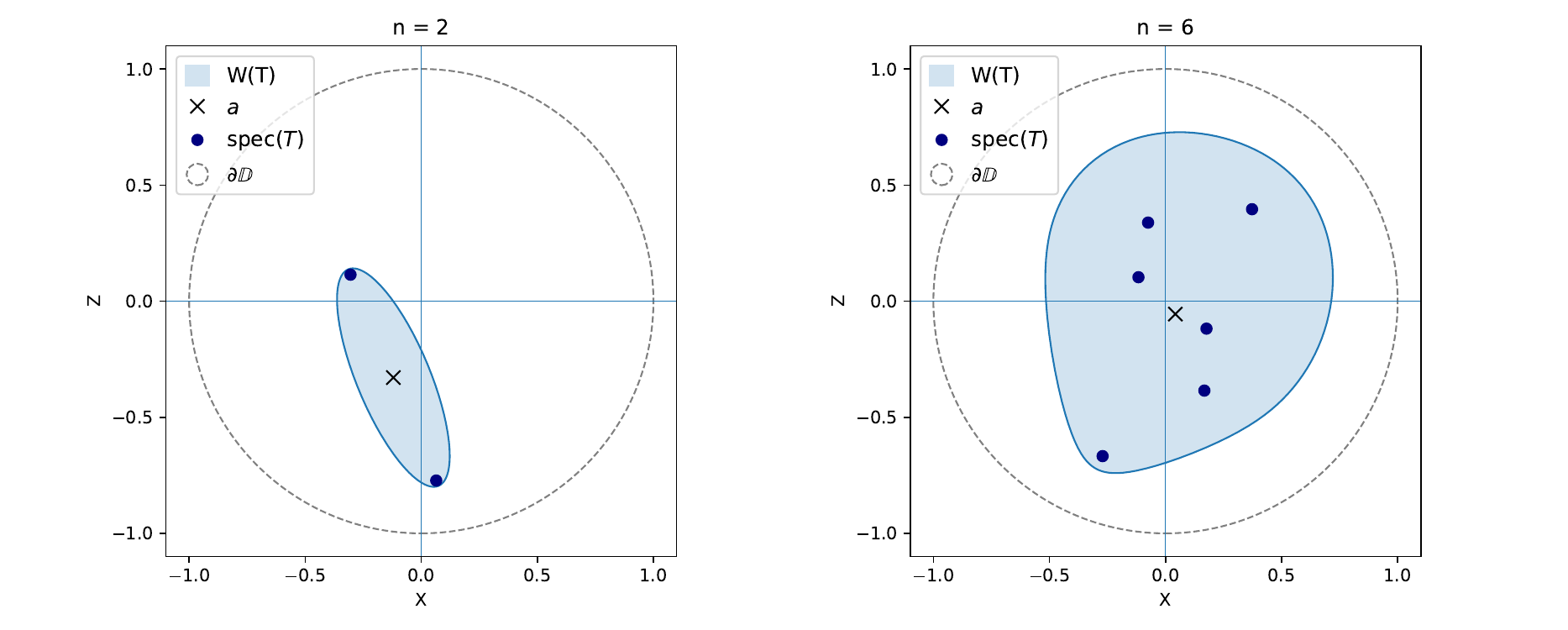}
    \caption{W(T)}
    \label{fig:T}
  \end{subfigure}
  \hfill
  \begin{subfigure}[b]{0.32\textwidth}
    \centering
    \includegraphics[width=\linewidth, trim={850 0 0 0},clip ]{img/numerical_range_523.pdf}
    \caption{W(U)}
    \label{fig:U}
  \end{subfigure}
  \caption{Numerical ranges of two contractions $T$ with $n=2,6$ and a unitary $U$ with $n=5$. $W(T)$ is convex and $\vect{a}$ is inside $W(T)$.}
  \label{fig:jnr}
\end{figure}

\begin{lemma}[Mixed-state extension]
\label{lem:W_mixed}
Let $\widetilde\W=\bigl\{\bigl(\Tr(\tau C_1),\Tr(\tau C_3)\bigr): \tau\succeq 0,\;\Tr\tau=1\bigr\}$ be the mixed-state joint numerical range. Then $\widetilde\W=\conv(\W)$.
\end{lemma}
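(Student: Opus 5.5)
The plan is to prove the two inclusions $\conv(\W)\subseteq\widetilde\W$ and $\widetilde\W\subseteq\conv(\W)$ separately. Both directions use only linearity of $\tau\mapsto\bigl(\Tr(\tau C_1),\Tr(\tau C_3)\bigr)$ and the spectral decomposition of density matrices.

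For $\widetilde\W\subseteq\conv(\W)$, take any density matrix $\tau$ on $\C^n$ and diagonalize it as $\tau=\sum_k \mu_k\ketbra{\psi_k}{\psi_k}$, with $\mu_k\ge 0$, $\sum_k\mu_k=\Tr\tau=1$, and unit vectors $\ket{\psi_k}$. Linearity of the trace gives
\[
  \bigl(\Tr(\tau C_1),\Tr(\tau C_3)\bigr)=\sum_k \mu_k\bigl(\braket{\psi_k|C_1|\psi_k},\braket{\psi_k|C_3|\psi_k}\bigr),
\]
which is a convex combination of points of $\W$. It therefore lies in $\conv(\W)$.

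For $\conv(\W)\subseteq\widetilde\W$, observe first that $\W\subseteq\widetilde\W$, since for pure $\tau=\ketbra{\psi}{\psi}$ the point of $\widetilde\W$ is exactly the point of $\W$ attached to $\ket\psi$. Next, $\widetilde\W$ is convex. It is the image of the convex set of density matrices under the real-linear map $\tau\mapsto(\Tr(\tau C_1),\Tr(\tau C_3))$, which is real-valued because $C_1,C_3$ are Hermitian. A convex set containing $\W$ contains $\conv(\W)$, which gives the inclusion.

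No step is a real obstacle. The only points needing care are that the eigen-decomposition weights sum to $1$ by the trace condition, and that the map is real-valued. One might also mention that, by the Toeplitz--Hausdorff theorem, $\W=W(T)$ is already convex, so in fact $\widetilde\W=\W$. The lemma as stated does not require this, and we will not use it.
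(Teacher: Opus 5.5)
Your proof is correct and follows the paper's argument. The inclusion $\widetilde\W\subseteq\conv(\W)$ via the spectral decomposition is identical to the paper's. For the reverse inclusion, the paper explicitly writes down the mixture $\tau=\sum_i p_i\ketbra{\phi_i}{\phi_i}$ realizing a given convex combination of points of $\W$, which is the same argument as your observation that $\widetilde\W$ contains $\W$ and is convex as the linear image of the convex set of density matrices.
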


\begin{proof}
($\subseteq$) Let $\tau=\sum_i p_i\ketbra{\phi_i}{\phi_i}$ be a spectral
decomposition, $p_i\ge0$, $\sum_ip_i=1$. By linearity of the trace,
\[
  \bigl(\Tr(\tau C_1),\Tr(\tau C_3)\bigr)
  =\sum_i p_i\bigl(\braket{\phi_i|C_1|\phi_i},\braket{\phi_i|C_3|\phi_i}\bigr),
\]
a convex combination of points of $\W$.

($\supseteq$) Let $w=\sum_i p_iw_i\in\conv(\W)$ with $w_i\in\W$ realized by unit
vectors $\ket{\phi_i}$. Then $\tau=\sum_i p_i\ketbra{\phi_i}{\phi_i}$ is a density
matrix and, again by linearity, $\bigl(\Tr(\tau C_1),\Tr(\tau C_3)\bigr)=w$.
\end{proof}

\begin{lemma}[Properties of $\W$]
\label{lem:Wprops}
Assume $\rho$ of the form \eqref{eq:rhoC} is positive semi-definite. Then:
\begin{enumerate}
  \item $\W$ is compact and convex;
  \item $\W\subseteq\Disk$;
  \item $\vect a=(a_1,a_3)\in\W$, where $\vect a$ is the Bloch vector of $\rho_A$.
\end{enumerate}
\end{lemma}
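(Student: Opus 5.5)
We prove the three items in turn, reading $\W$ as the numerical range $W(T)$ of $T=C_1+iC_3$ from \cref{def:jnr}.

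\emph{Compactness and convexity.} $\W$ is the image of the unit sphere of $\C^n$ under the continuous map $\psi\mapsto\braket{\psi|T|\psi}$. The unit sphere is compact, so $\W$ is compact. For convexity we invoke the Toeplitz--Hausdorff theorem. If a self-contained argument is preferred, take two points of $\W$ realized by unit vectors $\psi_1,\psi_2$. Restrict $T$ to $\Span\{\psi_1,\psi_2\}$, compressing by the orthogonal projection onto this span. This reduces the question to a $2\times 2$ matrix, whose numerical range is an elliptical disk (possibly degenerate). An elliptical disk contains the segment between any two of its points, and that segment lies in $\W$. This reduction to the $2\times2$ case is the only nonroutine step, and we plan to cite it rather than reprove it. Alternatively, \cref{lem:W_mixed} gives $\widetilde\W=\conv(\W)$, and Toeplitz--Hausdorff says exactly that $\W=\conv(\W)$.

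\emph{Inclusion in the disk.} By \cref{prop:contraction}, positivity of $\rho$ gives $\norm{T}\le1$. For any unit $\psi$, Cauchy--Schwarz yields $|\braket{\psi|T|\psi}|\le\norm{T\psi}\,\norm{\psi}\le\norm{T}\le 1$. Hence $\W\subseteq\Disk$.

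\emph{The Bloch vector lies in $\W$.} By the proof of \cref{lem:normalform}, $a_k=\tfrac1n\Tr C_k$. Take $\tau=\Id{n}/n$, which is a density matrix. Then $(\Tr(\tau C_1),\Tr(\tau C_3))=\vect a$, so $\vect a\in\widetilde\W$. By \cref{lem:W_mixed}, $\widetilde\W=\conv(\W)$, and by item (1) this equals $\W$. Concretely, $\vect a$ is the average of the diagonal entries $\braket{e_j|T|e_j}$ over an orthonormal basis $\{e_j\}$, hence a convex combination of points of $\W$. Therefore $\vect a\in\W$.
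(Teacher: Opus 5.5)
Your proposal is correct and follows the paper's proof essentially step for step: $\W$ is compact as the continuous image of the unit sphere, convex by Toeplitz--Hausdorff, contained in $\Disk$ because $\norm{T}\le 1$ (\cref{prop:contraction}), and contains $\vect a$ because taking $\tau=\Id{n}/n$ puts $\vect a$ in $\widetilde\W$, which equals $\W$ by \cref{lem:W_mixed} and convexity. One cosmetic caveat: your ``alternative'' route to convexity via $\widetilde\W=\conv(\W)$ only restates Toeplitz--Hausdorff and does not prove it, so the citation (or the $2\times 2$ elliptical-range reduction) is what actually carries item (1).
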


These properties are illustrated in \cref{fig:T}. For $n=2$,  $\W$ is an ellipse, whose foci are the eigenvalues of $T$ by the Elliptical range theorem, \cite{Li1996}.

\begin{proof}
\emph{(1) Compactness.} The unit sphere $S=\{\ket\phi\in\C^n:\braket{\phi|\phi}=1\}$ is compact (in finite dimension), and $f(\ket\phi)=(\braket{\phi|C_1|\phi},\braket{\phi|C_3|\phi})$ is continuous, so $\W=f(S)$ is compact.

\emph{Convexity.} Follows from the Toeplitz--Hausdorff theorem \cite{Toeplitz-hausdorf}.

\emph{(2) Containment in $\Disk$.} Follows directly from $\norm{T}\leq 1$ by \cref{prop:contraction}.

\emph{(3) $\vect a\in\W$.} By the normal form,  $\vect a=\bigl(\Tr(\tfrac{\Id n}{n}C_1),\Tr(\tfrac{\Id n}{n}C_3)\bigr) \in\widetilde\W$. By \cref{lem:W_mixed} and by convexity of $\W$, $\widetilde\W=\W$, so $\vect a\in\W$.
\end{proof}

\subsection{General facts on numerical ranges}

The following results concern the ordinary numerical range $W(\cdot)$ of an arbitrary operator and do not rely on properties of $\rho$.

\begin{lemma}[Unitary invariance and direct sums]\label{lem:W_general}
\begin{enumerate}
    \item[]
    \item For every unitary $U\in M_n(\C)$, $\;W(U^\dagger M U)=W(M)$.
    \item For $M_1\in M_{n_1}(\C)$, $M_2\in M_{n_2}(\C)$,
        $\;W(M_1\oplus M_2)=\conv\bigl(W(M_1)\cup W(M_2)\bigr)$.
\end{enumerate}
\end{lemma}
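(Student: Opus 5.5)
Both items follow directly from the definition $W(M)=\{\braket{\psi|M|\psi}:\braket{\psi|\psi}=1\}$, using the Toeplitz--Hausdorff theorem for convexity.

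For (1), the map $\ket\psi\mapsto U\ket\psi$ is a bijection of the unit sphere of $\C^n$ onto itself, since $U$ is unitary. For every unit vector $\ket\psi$ we have $\braket{\psi|U^\dagger M U|\psi}=\braket{U\psi|M|U\psi}$. Hence the two sets of values coincide, which gives $W(U^\dagger MU)=W(M)$.

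For (2), write a unit vector of $\C^{n_1}\oplus\C^{n_2}$ as $\ket\psi=(\ket{x},\ket{y})$ with $\|x\|^2+\|y\|^2=1$. The block-diagonal structure gives
\[
  \braket{\psi|M_1\oplus M_2|\psi}=\braket{x|M_1|x}+\braket{y|M_2|y}.
\]

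\emph{Inclusion $\subseteq$.} Set $t=\|x\|^2\in[0,1]$. When $0<t<1$, the quantity above equals
\[
  t\,\braket{\hat x|M_1|\hat x}+(1-t)\,\braket{\hat y|M_2|\hat y},
\]
where $\hat x=x/\|x\|$ and $\hat y=y/\|y\|$. This is a convex combination of a point of $W(M_1)$ and a point of $W(M_2)$. The endpoint cases $t=0$ and $t=1$ give a point of $W(M_2)$ or of $W(M_1)$ directly. So $W(M_1\oplus M_2)\subseteq\conv\bigl(W(M_1)\cup W(M_2)\bigr)$.

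\emph{Inclusion $\supseteq$.} Points of $W(M_1)$ are realized by vectors $(\ket x,0)$, and points of $W(M_2)$ by vectors $(0,\ket y)$. Hence $W(M_1)\cup W(M_2)\subseteq W(M_1\oplus M_2)$. By Toeplitz--Hausdorff, $W(M_1\oplus M_2)$ is convex, so it contains the convex hull of this union.

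There is no real obstacle here. The only point requiring care is the reverse inclusion: the convex hull of the union may contain points not of the form $t w_1+(1-t)w_2$ for a single pair $(w_1,w_2)$. This step therefore relies on convexity of the numerical range rather than on a direct construction. Alternatively, one can avoid that subtlety: $W(M_1)$ and $W(M_2)$ are each convex, so every point of the convex hull of their union is a two-point combination $t w_1+(1-t)w_2$ with $w_j\in W(M_j)$. Such a point is realized explicitly by $\ket\psi=(\sqrt t\,\ket{x},\sqrt{1-t}\,\ket{y})$, where $\ket x$ and $\ket y$ realize $w_1$ and $w_2$.
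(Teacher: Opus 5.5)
Your proof is correct and follows the paper's argument essentially verbatim: the unit-sphere bijection for (1), and for (2) the normalized-block convex combination for $\subseteq$, together with single-block vectors and Toeplitz--Hausdorff convexity of $W(M_1\oplus M_2)$ for $\supseteq$. Your alternative explicit construction is also valid, but it still relies on Toeplitz--Hausdorff, now applied to each $W(M_j)$ separately, so it relocates that ingredient rather than avoiding it.
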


\begin{proof}
\emph{(1)} The map $\ket\psi\mapsto U\ket\psi$ is a bijection of the unit sphere, and $\braket{\psi|U^\dagger M U|\psi}=\braket{U\psi|M|U\psi}$, so the two sets of values coincide.

\emph{(2)} Write a unit vector as $\ket\psi=(\ket{\psi_1},\ket{\psi_2})$ with
$\norm{\psi_1}^2+\norm{\psi_2}^2=1$. Then
\[
  \braket{\psi|(M_1\oplus M_2)|\psi}
  =\norm{\psi_1}^2\braket{\hat\psi_1|M_1|\hat\psi_1}
  +\norm{\psi_2}^2\braket{\hat\psi_2|M_2|\hat\psi_2},
  \qquad \hat\psi_i=\ket{\psi_i}/\norm{\psi_i},
\]
a convex combination of a point of $W(M_1)$ and a point of $W(M_2)$ (with only one term when one block vanishes), hence $W(M_1\oplus M_2)\subseteq\conv(W(M_1)\cup W(M_2))$. Conversely, vectors supported in a single block give $W(M_1),W(M_2)\subseteq W(M_1\oplus M_2)$, and the latter is convex by the Toeplitz-Hausdorff theorem, so it contains
$\conv(W(M_1)\cup W(M_2))$.
\end{proof}

\begin{theorem}
\label{thm:unitary_polygon}
Let $\lambda_1,\dots,\lambda_n$ be the eigenvalues of a unitary $U$ (listed with multiplicity). Then
\[
  W(U)=\conv\{\lambda_1,\dots,\lambda_n\},
\]
a closed convex polygon whose vertices lie on the unit circle.
\end{theorem}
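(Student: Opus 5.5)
The plan is to reduce to the diagonal case and then combine the two parts of \cref{lem:W_general}. A unitary $U$ is normal, so by the spectral theorem there is a unitary $Q\in M_n(\C)$ with $Q^\dagger U Q=\diag(\lambda_1,\dots,\lambda_n)$. By \cref{lem:W_general}(1), $W(U)=W(\diag(\lambda_1,\dots,\lambda_n))$, so it suffices to compute the numerical range of a diagonal matrix.

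Next, write $\diag(\lambda_1,\dots,\lambda_n)=(\lambda_1)\oplus\cdots\oplus(\lambda_n)$ as a direct sum of $1\times1$ blocks. Applying \cref{lem:W_general}(2) repeatedly (induction on the number of summands) gives
\[
  W(U)=\conv\bigl(W(\lambda_1)\cup\cdots\cup W(\lambda_n)\bigr)=\conv\{\lambda_1,\dots,\lambda_n\},
\]
since $W(\lambda_k)=\{\lambda_k\}$ for a scalar. The same result can be checked directly: for a unit vector $\psi$ written in the eigenbasis,
\[
  \braket{\psi|U|\psi}=\sum_k|\psi_k|^2\lambda_k,
\]
with weights $|\psi_k|^2\ge0$ summing to one, and every choice of such weights is attained.

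Finally, since $U$ is unitary, each eigenvalue satisfies $|\lambda_k|=1$: from $U\ket{v}=\lambda\ket{v}$ we get $\norm{v}=\norm{Uv}=|\lambda|\norm{v}$. The convex hull of finitely many points is a closed convex polygon, possibly degenerate (a point or a segment). Its vertices form a subset of $\{\lambda_k\}$, because every extreme point of the convex hull of a finite set belongs to that set, so all vertices lie on $\bD$.

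The only step needing any care is the iterated use of the direct-sum formula. This is handled by induction together with the associativity $\conv(\conv(S_1)\cup S_2)=\conv(S_1\cup S_2)$.
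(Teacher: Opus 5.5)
Your proof is correct and essentially the paper's: diagonalize $U$ in an orthonormal eigenbasis, compute $\braket{\psi|U|\psi}=\sum_k|\psi_k|^2\lambda_k$, and get the converse by taking $\psi_k=\sqrt{t_k}$ (your ``direct check''), with $|\lambda_k|=1$ giving vertices on $\bD$. Your main route through \cref{lem:W_general} only repackages this computation; the difference is that its reverse inclusion relies on the Toeplitz--Hausdorff theorem, while the explicit choice of $\psi$ in the paper does not.
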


\begin{proof}
A unitary operator is normal, so by the finite-dimensional spectral theorem there is an orthonormal basis $e_1,\dots,e_n$ of $\C^n$ consisting of eigenvectors, $Ue_k=\lambda_k e_k$. Unitarity forces $|\lambda_k|=1$, so every eigenvalue lies on the unit circle.

Let $\ket{\phi}$ be any unit vector and expand $\ket{\phi}=\sum_k c_k \ket{e_k}$, so that $\|\phi\|^2=\sum_k|c_k|^2=1$. Using $\langle e_k | e_j\rangle=\delta_{kj}$,
\[
  \braket{\phi| U| \phi}
  = \sum_k \sum_j  \bar{c_k} c_j\lambda_k \braket{e_k | e_j}
  =\sum_k |c_k|^2\lambda_k,
\]
a convex combination of the $\lambda_k$ with weights $|c_k|^2$; hence $\braket{\phi| U| \phi}\in\conv\{\lambda_1,\dots,\lambda_n\}$.

Conversely, let $\mu=\sum_k t_k\lambda_k$ with $t_k\ge0$, $\sum_k t_k=1$. Set $\ket{\psi}=\sum_k\sqrt{t_k}\,\ket{e_k}$, then $\|\psi\|^2=1$ and, by the previous computation with $c_k=\sqrt{t_k}$, $\braket{\psi| U| \psi}=\mu$, so $\mu\in W(U)$.
\end{proof}

\begin{remark} This follows from a more general fact: for any normal matrix $N$ over $\C^n$, we have $W(N) = \conv( \operatorname{spec} (N))$. Furthermore, unitarity implies that the eigenvalues lie on the unit circle. See \cref{fig:U} for an example.
\end{remark}

\subsection{Dilation}
\begin{definition}[Dilation]
    Let $A$ be an operator on a Hilbert space $H$. A \emph{ dilation} (resp. \emph{power dilation}) of $A$ is an operator $B$ on a larger Hilbert space $K\supseteq H$ if there exists an isometry $J : H \mapsto  K$ such that 
\[
  A = J^\dagger\,B J\qquad (\text{resp. } A^k  = J^\dagger\,B^k J\text{ for all }  k ),
\]
where $\Pi_H = JJ^\dagger$ is the orthogonal projection of $K$ onto $H$ and $J^\dagger J = \Id{H}$.
\end{definition}

When $B$ is unitary, the dilation is said to be a unitary dilation. This is possible only if $A$ is a contraction.

\begin{lemma}
\label{lem:dilation}
Let $B$ be a dilation of $A$. Then $W(A)\subseteq W(B)$.
\end{lemma}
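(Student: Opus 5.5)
The plan is to unfold the definition of dilation and carry each point of $W(A)$ to a point of $W(B)$ through the isometry $J$. Let $J\colon H\to K$ be the isometry with $A=J^\dagger B J$ and $J^\dagger J=\Id{H}$. Take an arbitrary point $\mu\in W(A)$ and write it as $\mu=\braket{\psi|A|\psi}$ for some unit vector $\ket\psi\in H$.

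Next, set $\ket\phi=J\ket\psi\in K$. Since $J$ is an isometry,
\[
  \braket{\phi|\phi}=\braket{\psi|J^\dagger J|\psi}=\braket{\psi|\psi}=1,
\]
so $\ket\phi$ is a unit vector of $K$. Substituting the dilation relation then gives
\[
  \mu=\braket{\psi|J^\dagger B J|\psi}=\braket{\phi|B|\phi},
\]
which is an element of $W(B)$ by definition of the numerical range. Since $\mu\in W(A)$ was arbitrary, this proves $W(A)\subseteq W(B)$.

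There is no real obstacle in this argument. The only point needing care is to use both defining properties of the dilation: $J^\dagger J=\Id{H}$, so that unit vectors go to unit vectors, and the compression identity $A=J^\dagger BJ$. The projection $\Pi_H=JJ^\dagger$ is never used, and no power-dilation property is needed, since only the first power enters.
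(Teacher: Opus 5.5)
Your proof is correct and follows the paper's argument: map a unit vector realizing the point of $W(A)$ through the isometry $J$, use $J^\dagger J=\Id{H}$ to keep it a unit vector, and use $A=J^\dagger BJ$ to identify the two quadratic forms. Your remark that neither $\Pi_H=JJ^\dagger$ nor the power-dilation property is needed is also accurate.
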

\begin{proof}
The dilation condition reads $A=J^\dagger BJ$.

Let $w\in W(A)$, realized by a unit vector $\ket{\phi}\in H$, $\braket{\phi|A|\phi}=w$. Its image $J\ket{\phi}\in K$ is again a unit vector, since $\|J\ket{\phi}\|^2=\braket{\phi|J^\dagger J|\phi}=\braket{\phi|\phi}=1$. Then
\[
  \braket{J\phi|B|J\phi}
  =\braket{\phi|J^\dagger BJ|\phi}
  =\braket{\phi|A|\phi}
  =w .
\]
Thus $w\in W(B)$, so $ W(A)\subseteq W(B)$.
\end{proof}

\bigskip
\section{Separable decompositions}\label{sec:decomp}
\bigskip

We seek decompositions $\rho=\sum_{k=1}^m p_k \,\alpha_k \otimes\beta_k $
with $p_k >0$, $\sum_k  p_k =1$, and $\alpha_k \in M_2(\C)$,
$\beta_k \in M_n(\C)$ density matrices.

\subsection{Moment conditions and geometric criterion}

\begin{lemma}[$xz$-reduction]\label{lem:xz}
Assume (H1). If $\rho$ admits a separable decomposition with $m$ terms, then it admits one with $m$ terms in which every $A$-factor has its Bloch vector in the $xz$-plane:
\[
  \alpha_k =\tfrac12(\Id2+x_k  X+z_k  Z).
\]
\end{lemma}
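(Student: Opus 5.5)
The plan is to symmetrize a given decomposition under the partial transpose $T_A$ and then merge the two conjugate copies of each term into a single term. Let $\rho=\sum_{k=1}^m p_k\,\alpha_k\otimes\beta_k$ be a separable decomposition. Write $\alpha_k=\tfrac12(\Id2+x_kX+y_kY+z_kZ)$. Transposition on $A$ fixes $X$ and $Z$ and sends $Y\mapsto -Y$, so $\alpha_k^T=\tfrac12(\Id2+x_kX-y_kY+z_kZ)$. This is again a density matrix, since its Bloch vector has the same norm. By (H1),
\[
  \rho=\tfrac12\bigl(\rho+\rho^{T_A}\bigr)=\sum_{k=1}^m p_k\,\tfrac12\bigl(\alpha_k+\alpha_k^T\bigr)\otimes\beta_k .
\]
Setting $\alpha_k'=\tfrac12(\alpha_k+\alpha_k^T)=\tfrac12(\Id2+x_kX+z_kZ)$ gives a decomposition with the same weights $p_k$, the same $B$-factors $\beta_k$, and still exactly $m$ terms. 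Each $\alpha_k'$ is a convex combination of two states, hence a state. Its Bloch vector $(x_k,0,z_k)$ lies in the $xz$-plane and has norm at most $1$. This already proves the statement.

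There is one subtlety. If the statement is meant to preserve purity as well, as needed for \cref{thm:main_pure}, then averaging destroys it, because $\alpha_k'$ is mixed whenever $y_k\neq0$. For the pure-state version I would instead keep the $2m$ terms $\tfrac{p_k}{2}\alpha_k\otimes\beta_k$ and $\tfrac{p_k}{2}\alpha_k^T\otimes\beta_k$. Alternatively, I would decompose each $\alpha_k'$ as a mixture of two pure states on the $xz$ great circle, but that also doubles the length. Since the lemma only asks for $m$ terms with $xz$-plane Bloch vectors, I read it as the mixed statement, and the averaging argument suffices.

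The only step requiring care is the identity $(\alpha\otimes\beta)^{T_A}=\alpha^T\otimes\beta$ together with linearity of $T_A$. Both are immediate from the definition of the partial transpose in the fixed product basis. I expect no real obstacle; the main point is to note that the number of terms is unchanged, because the $B$-factors are shared between each term and its transposed partner.
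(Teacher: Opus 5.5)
Your proof is correct and is essentially the paper's argument: use (H1) to write $\rho=\tfrac12(\rho+\rho^{T_A})$ and replace each $\alpha_k$ by $\tfrac12(\alpha_k+\alpha_k^T)$, which keeps the weights, the $B$-factors and the number of terms. Your remark that averaging destroys purity is also accurate; this lemma does not need purity, and the paper restores pure $A$-factors separately in \cref{lem:push} without increasing the number of terms, at the cost of mixing the $B$-factors.
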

\begin{proof}
Let $\rho=\sum_k  p_k \alpha_k \otimes\beta_k $. In the computational basis, transposition acts on Bloch vectors by $(x,y,z) \mapsto (x,-y,z)$. By (H1),
\[
  \rho=\tfrac12(\rho+\rho^{T_A})
  =\sum_k  p_k \,\tilde\alpha_k \otimes\beta_k ,
  \qquad
  \tilde\alpha_k =\tfrac12(\alpha_k +\alpha_k ^{T}).
\]
Each $\tilde\alpha_k $ is a density matrix with its Bloch vector in the $xz$ plane.
\end{proof}

\begin{proposition}[Geometric criterion]\label{prop:geom}
Assume (H1)-(H3). If $\rho=\sum_{k =1}^m p_k \alpha_k \otimes\beta_k $ is separable with $A$-Bloch vectors $a_k  = (x_k ,y_k ,z_k )$, then
\[
  \W \subseteq\conv(v_1,\dots,v_m) \qquad \text{ where } v_k  = (x_k ,z_k ) \in \Disk.
\]
Consequently $\sr(\rho) $ is at least the minimum number of vertices in $\Disk$ necessary to enclose $\W (T)$.
\end{proposition}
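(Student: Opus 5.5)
The plan is to apply local functionals on $A$ to the decomposition and compare the result with the normal form \eqref{eq:rhoC}; this turns the decomposition into a representation of $T$ as a compression of a normal matrix. First, (H1) lets us assume via \cref{lem:xz} that every $\alpha_k$ has Bloch vector $(x_k,0,z_k)$, so $v_k=(x_k,z_k)$. The projected points satisfy $|v_k|\le|a_k|\le 1$, so $v_k\in\Disk$.

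Next, take partial traces against $\sigma_i\otimes\Id n$. Tracing over $A$ with $\Id 2$ gives $\rho_B=\sum_k p_k\beta_k=\Id n/n$ by (H2). Tracing with $X$ and with $Z$ and comparing with \eqref{eq:rhoC} (using $\Tr(X^2)=\Tr(Z^2)=2$) gives
\[
  \tfrac1n C_1=\sum_k p_k x_k\beta_k,\qquad \tfrac1n C_3=\sum_k p_k z_k\beta_k,
  \qquad\text{hence}\qquad T=n\sum_k p_k\,(x_k+iz_k)\,\beta_k .
\]

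Now fix a unit vector $\ket\psi\in\C^n$. Set $q_k=n\,p_k\braket{\psi|\beta_k|\psi}\ge 0$. These weights sum to $n\braket{\psi|\rho_B|\psi}=1$, and
\[
  \braket{\psi|T|\psi}=\sum_k q_k\,(x_k+iz_k).
\]
So each point of $\W$ is a convex combination of the $v_k$, which gives $\W\subseteq\conv(v_1,\dots,v_m)$. In dilation language, $T$ is the compression of the normal operator $\bigoplus_k (x_k+iz_k)$ via the isometry $\ket\psi\mapsto\bigoplus_k\sqrt{np_k}\,\beta_k^{1/2}\ket\psi$, and the inclusion follows from \cref{lem:dilation} together with \cref{thm:unitary_polygon}.

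For the consequence, apply this to a minimal decomposition with $m=\sr(\rho)$ terms. It produces $m$ points of $\Disk$ whose convex hull encloses $\W$, so $\sr(\rho)$ is at least the minimum such number. The only delicate point is the normalization bookkeeping: the identity $\sum_k p_k\beta_k=\Id n/n$ is exactly what makes the weights $q_k$ sum to one, and no other step poses a real obstacle.
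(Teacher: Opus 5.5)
Your proof is correct and follows the paper's argument. You reduce to $xz$-form via \cref{lem:xz}, extract the moment conditions $\sum_k p_k\beta_k=\Id n/n$, $\sum_k p_kx_k\beta_k=C_1/n$, $\sum_k p_kz_k\beta_k=C_3/n$, and read off the convex weights $n\,p_k\braket{\psi|\beta_k|\psi}$; the added dilation remark is also fine, except that $\bigoplus_k(x_k+iz_k)$ is normal rather than unitary, so the relevant fact is the remark $W(N)=\conv(\operatorname{spec}N)$ rather than \cref{thm:unitary_polygon}.
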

\begin{proof}
By \cref{lem:xz} we may assume the decomposition is in $xz$-form, so $y_k=0$. Matching the $\Id2$-, $X$- and $Z$-components of $\rho=\tfrac12\sum_k  p_k (\Id2+x_k  X+z_k  Z)\otimes\beta_k $ with \eqref{eq:rhoC} gives the \emph{moment conditions}
\begin{equation}\label{eq:moments}
  \sum_k  p_k  \beta_k  =\frac{\Id n}{n},
  \qquad
  \sum_k  x_k  p_k  \beta_k =\frac{C_1}{n},
  \qquad
  \sum_k  z_k  p_k  \beta_k =\frac{C_3}{n}.
\end{equation}
For an arbitrary unit vector $\ket\psi\in\C^n$, the positivity of $\beta_k $ implies that $p_k  \braket{\psi |\beta_k  |\psi}$ is real and non negative for all $k $. The first condition gives  $\sum_k  n p_k  \braket{\psi |\beta_k  |\psi}=1$ and the other two give
\[
  \sum_k  n  p_k  \braket{\psi |\beta_k  |\psi} v_k  = \sum_k   n p_k  \braket{\psi |\beta_k  |\psi}  (x_k , z_k )
  =\bigl(\braket{\psi|C_1|\psi},\braket{\psi|C_3|\psi}\bigr)\in\W .
\]
Thus, every point of $\W$ is a convex combination of the $v_k=(x_k, z_k)$. 
\end{proof}

\begin{lemma}\label{lem:push}
    Assume (H1). If $\rho$ admits a separable decomposition with $m$ terms, then there exists a decomposition in $xz$-form with at most $m$ terms in which every $A$-factors is pure. 
\end{lemma}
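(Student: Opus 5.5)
By \cref{lem:xz}, we may start from a decomposition $\rho=\sum_{k=1}^m p_k\,\alpha_k\otimes\beta_k$ in $xz$-form, with $\alpha_k=\tfrac12(\Id2+x_kX+z_kZ)$ and $v_k=(x_k,z_k)\in\Disk$. The idea is to push each Bloch vector $v_k$ out to the unit circle. We split each mixed qubit factor into pure ones and absorb the extra weight into the $B$-factors, without creating new terms.

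The key observation is that a mixed qubit state lies on a chord of the Bloch disk. If $|v_k|<1$, pick any line through $v_k$ in the $xz$-plane. It meets $\bD$ in two points $u_k^\pm$, and $v_k=t_ku_k^++(1-t_k)u_k^-$ with $t_k\in[0,1]$. Then
\[
\alpha_k=t_k\,\pi(u_k^+)+(1-t_k)\,\pi(u_k^-),
\qquad \pi(u)=\tfrac12(\Id2+u_1X+u_2Z),
\]
where $\pi(u)$ is pure for $|u|=1$ and has no $Y$-component. Applied to all $k$ at once, this would double the number of terms. To keep at most $m$ terms, all the chords must instead be chosen \emph{parallel to a common direction}. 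Then the endpoints need not coincide, so parallel chords alone do not merge terms.

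A better route is to push along rays from a common point. Fix a pure point $u_0\in\bD$. For each $k$, the ray from $u_0$ through $v_k$ exits the disk at $w_k\in\bD$, so $v_k=s_ku_0+(1-s_k)w_k$ with $s_k\in[0,1]$. This gives
\[
\rho=\pi(u_0)\otimes\Bigl(\sum_k p_ks_k\beta_k\Bigr)+\sum_k p_k(1-s_k)\,\pi(w_k)\otimes\beta_k .
\]
The first term is a single product term, since $\sum_kp_ks_k\beta_k$ is PSD and can be normalized. Naively this gives $m+1$ terms. To get back to $m$, choose $u_0$ equal to the direction of an existing vertex pushed radially, or more simply take $u_0:=v_1/|v_1|$ when $v_1\neq0$ (handling $v_1=0$ separately). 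Then the ray from $u_0$ through $v_1$ gives $s_1=1$ and the $k=1$ summand is absorbed into the first term. The result has at most $m$ terms with every $A$-factor pure and in $xz$-form, as desired.

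The main obstacle is checking the exceptional cases so that the count really does not increase. These are $v_k=u_0$, where the ray is degenerate but $s_k=1$ still works; a vanishing weight $1-s_k$, where the term is simply dropped; and the case where every $v_k=0$. In that last case pick any $u_0$, pair the terms so that $v_k=0$ splits as $\tfrac12\pi(u_0)+\tfrac12\pi(-u_0)$, and then both $A$-factors are shared by all terms, so only $2\le m$ terms arise; note $m\ge 2$ because $\OSR(\rho)\ge2$. The $B$-factors stay PSD throughout, since they are nonnegative combinations of the $\beta_k$.
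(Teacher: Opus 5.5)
\textbf{There is a gap: your construction gives $m+1$ terms, not $m$.} The step meant to bring the count back down fails. Take $u_0:=v_1/|v_1|$ with $0<|v_1|<1$. The ray from $u_0$ through $v_1$ points in the direction $-u_0$ and leaves the disk at $w_1=-u_0$. So $v_1=\tfrac{1+|v_1|}{2}\,u_0+\tfrac{1-|v_1|}{2}\,(-u_0)$, i.e.\ $s_1=\tfrac{1+|v_1|}{2}<1$, not $s_1=1$. The ``existing vertex pushed radially'' option is the same choice and fails the same way.

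Your own exceptional case shows that $s_k=1$ forces $v_k=u_0$. A term is therefore absorbed into $\pi(u_0)\otimes(\cdot)$ only if that vertex was \emph{already} pure. Now suppose every $v_k$ lies in $\oDisk$, which is the case where there is something to prove. Then your construction gives the term $\pi(u_0)\otimes\sum_k p_ks_k\beta_k$ plus the $m$ exit terms $\pi(w_k)\otimes\beta_k$, all with $1-s_k>0$. That is $m+1$ terms. Separately, the case where all $v_k=0$ is vacuous: it gives $\rho=\tfrac12\Id{2}\otimes\sum_kp_k\beta_k$, which contradicts $\OSR(\rho)\ge 2$.

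\textbf{How to repair your route.} Choose $u_0$ so that two exit points coincide.
\begin{itemize}
\item Since $\OSR(\rho)\ge 2$, there are $v_1\neq v_2$. Let the line through them meet $\bD$ at $q^-,q^+$, labelled so that $q^-,v_1,v_2,q^+$ occur in this order along the chord. Take $u_0=q^-$.
\item If $v_1=q^-$, then $s_1=1$ and term $1$ is absorbed.
\item Otherwise the rays from $q^-$ through $v_1$ and through $v_2$ both exit at $q^+$. So $w_1=w_2$, and the two exit terms merge into $\pi(q^+)\otimes\bigl(p_1(1-s_1)\beta_1+p_2(1-s_2)\beta_2\bigr)$.
\item Any other $v_k\in\bD$ with $v_k\neq u_0$ has $w_k=v_k$ and $s_k=0$.
\end{itemize}
In every case at most $m$ terms remain, all $A$-factors are pure and in $xz$-form, and the $B$-factors stay PSD.

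\textbf{How the paper does it.} The paper avoids the issue differently. It pivots at an \emph{existing} vertex $v_j\neq v_k$, not necessarily pure, and pushes one interior $v_k$ along the ray from $v_j$ to $\bD$. The leftover weight $(1-s)p_k\beta_k$ goes into the $B$-factor of term $j$, whose $A$-factor is unchanged. This keeps the number of terms fixed and removes one mixed $A$-factor per step, so iterating terminates. Your repaired version reaches the result in a single step, at the cost of the chord-endpoint choice of $u_0$.
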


\begin{proof}
    Consider the decomposition obtained by \cref{lem:xz}.
    Choose $k$ such that $v_k \in \oDisk$ and $j$ such that $v_k \neq v_j$ (always possible since $\OSR >1 $). 
    Let $v_k'\in \bD$ be the point where the ray from $v_j$ through $v_k$ intercepts $\bD$. Set $s= \frac{|v_k-v_j|}{|v_k'-v_j|} \in (0,1]$, such that $v_k = (1-s)v_j+ sv_k'$. 
    Then replace the $k$th and $j$th terms by 
    \begin{align}
        p_{k'} = sp_k,\quad \alpha_{k'} = \tfrac{1}{2}(I + x_k'X + z_k'Z),\quad \beta_{k'} =\beta_k, \\
    p_{j'} = p_j + (1-s)p_k, \quad \alpha_{j'}=\alpha_j , \quad \beta_{j'} = \tfrac{p_j\beta_j + (1 - s)p_k\beta_k}{p_j + (1-s)p_k}.
    \end{align}
    Both moment conditions are preserved, all terms stay PSD and we can iterate this procedure over all terms.
\end{proof}

Therefore, the separable rank is always attained with pure $A$-factors. 

\begin{lemma}\label{lem:dilation_decomposition}
Let $T$ be a contraction on a Hilbert space $\mathcal{H}$ (so $\|T\|\le 1$), and let $U$ be a unitary dilation of $T$, acting on a larger Hilbert space $\mathcal{K}\supseteq\mathcal{H}$, meaning that there exists an isometry $J$ such that
\[
  T = J^\dagger\,U J
\]
Suppose $U$ has $m$ distinct eigenvalues $e^{i\theta_1},\dots,e^{i\theta_m}$ with corresponding orthogonal projections $E_\mu$ onto the eigenspace of $e^{i \theta_\mu}$ for $1 \leq \mu \leq m $. Then the operators
\[
L_\mu = J^\dagger\,E_\mu J
\]
are hermitian, positive semi-definite, and satisfy $\sum_{\mu=1}^m L_\mu = \Id{\mathcal{H}}$,
$\sum_{\mu=1}^m \cos(\theta_\mu) L_\mu = C_1$, and $\sum_{\mu=1}^m \sin(\theta_\mu) L_\mu = C_3$.
\end{lemma}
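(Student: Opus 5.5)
The plan is to use the spectral decomposition of the unitary $U$ and compress it through the isometry $J$. Since $U$ is unitary, hence normal, the finite-dimensional spectral theorem gives
\[
  U=\sum_{\mu=1}^m e^{i\theta_\mu}E_\mu,\qquad \sum_{\mu=1}^m E_\mu=\Id{\mathcal K},
\]
where the $E_\mu$ are mutually orthogonal Hermitian projections. I would then derive each claimed property of $L_\mu=J^\dagger E_\mu J$ directly from this.

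\emph{Hermiticity and positivity.} We have $L_\mu^\dagger=J^\dagger E_\mu^\dagger J=L_\mu$. For any $\ket\psi\in\mathcal H$,
\[
  \braket{\psi|L_\mu|\psi}=\braket{J\psi|E_\mu|J\psi}=\norm{E_\mu J\psi}^2\ge 0,
\]
because $E_\mu=E_\mu^\dagger E_\mu$.

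\emph{Resolution of the identity.} Summing over $\mu$ and using $J^\dagger J=\Id{\mathcal H}$ gives $\sum_\mu L_\mu=J^\dagger\Id{\mathcal K}J=\Id{\mathcal H}$.

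\emph{The moment identities.} Compressing the spectral decomposition yields
\[
  \sum_\mu e^{i\theta_\mu}L_\mu=J^\dagger UJ=T=C_1+iC_3 .
\]
Taking adjoints gives $\sum_\mu e^{-i\theta_\mu}L_\mu=T^\dagger=C_1-iC_3$, since the $L_\mu$ are Hermitian and $C_1,C_3$ are Hermitian. Adding and subtracting these two relations gives
\[
  \sum_\mu\cos(\theta_\mu)L_\mu=\tfrac12(T+T^\dagger)=C_1,\qquad \sum_\mu\sin(\theta_\mu)L_\mu=\tfrac1{2i}(T-T^\dagger)=C_3 .
\]
This is simply the uniqueness of the Hermitian/anti-Hermitian (Cartesian) decomposition.

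No step is a genuine obstacle. The only point needing care is to identify $C_1$ and $C_3$ as the real and imaginary Hermitian parts of $T$, which is exactly what the Hermiticity of the $L_\mu$ guarantees.
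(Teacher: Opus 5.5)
Your proof is correct and follows the paper's argument. You compress the spectral decomposition $U=\sum_\mu e^{i\theta_\mu}E_\mu$ through $J$, use $J^\dagger J=\Id{\mathcal H}$ for the resolution of the identity, and read off $C_1,C_3$ as the Hermitian and anti-Hermitian parts of $T=\sum_\mu e^{i\theta_\mu}L_\mu$; you only spell out the positivity ($\braket{\psi|L_\mu|\psi}=\norm{E_\mu J\psi}^2$) and the Cartesian-decomposition step more explicitly than the paper does.
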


\begin{proof}
By construction, $L_\mu$ is hermitian and positive semi definite since $E_\mu$ is an orthogonal projector. Since $\sum_\mu E_\mu=\Id{\mathcal{K}}$,
\[
\sum_\mu L_\mu
  = J^\dagger\Big(\sum_\mu E_\mu\Big)J
  = J^\dagger J
  = \Id{\mathcal{H}}.
\]

The dilation property gives $T = J^\dagger U J$.
Using $U=\sum_\mu e^{i \theta_\mu} E_\mu$,
\[
T = J^\dagger U J
  = J^\dagger\Big(\sum_\mu e^{i \theta_\mu} E_\mu\Big)J
  = \sum_\mu e^{i \theta_\mu}\, L_\mu
  = \sum_\mu \cos (\theta_\mu)\, L_\mu + i \sum_\mu \sin (\theta_\mu)\, L_\mu.
\]
Taking the Hermitian and anti-Hermitian part of this equation yields the searched form.
\end{proof}

\subsection{Pure product states decomposition}

\begin{lemma}\label{lem:defect_rank}
Let $T$ be a contraction on $\mathbb{C}^n$ with defect index $d$.
Then there is a unitary $U$ on $\mathbb{C}^{n+d}$ with
\[
T=J^\dagger\,U\,J,\qquad
J=\begin{pmatrix}I_n\\ 0_d\end{pmatrix}.
\]
\end{lemma}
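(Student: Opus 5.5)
The plan is to build the Halmos-type dilation directly from the defect operators, but with the off-diagonal blocks compressed to rank $d$ rather than size $n$. The classical Halmos dilation of a contraction $T$ on $\C^n$ is
\[
  U_H=\begin{pmatrix} T & D_{T^\dagger}\\ D_T & -T^\dagger\end{pmatrix}\quad\text{on } \C^{n}\oplus\C^{n}.
\]
It is unitary because of the intertwining relation $T D_T = D_{T^\dagger} T$. That relation follows from $T(\Id n - T^\dagger T)=(\Id n - TT^\dagger)T$ together with a polynomial approximation of the square root.

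To reduce the size to $n+d$, I will use the ranges of the defect operators. Let $\ran D_T$ and $\ran D_{T^\dagger}$ both have dimension $d$. Fix isometries $P,Q:\C^d\to\C^n$ with ranges $\ran D_T$ and $\ran D_{T^\dagger}$ respectively. Then $D_T = P\,\Delta_1 P^\dagger$ and $D_{T^\dagger}=Q\,\Delta_2 Q^\dagger$ for positive definite $d\times d$ matrices $\Delta_1,\Delta_2$. The candidate dilation is
\[
  U=\begin{pmatrix} T & D_{T^\dagger}Q\\ P^\dagger D_T & -P^\dagger T^\dagger Q\end{pmatrix}\quad\text{on } \C^n\oplus\C^d,
\]
that is, $U_H$ compressed to $\C^n\oplus\ran D_T$ on the left and $\C^n\oplus\ran D_{T^\dagger}$ on the right. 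With $J=\begin{pmatrix}\Id n\\0_d\end{pmatrix}$ we get $J^\dagger U J = T$ immediately.

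The main step is checking $U^\dagger U=\Id{n+d}$. I will expand block by block.
\begin{itemize}
\item The $(1,1)$ block is $T^\dagger T + D_T P P^\dagger D_T = T^\dagger T + D_T^2 = \Id n$, using $PP^\dagger D_T=D_T$.
\item The $(2,2)$ block is $Q^\dagger D_{T^\dagger}^2 Q + Q^\dagger T P P^\dagger T^\dagger Q$. For this I need $PP^\dagger T^\dagger Q = T^\dagger Q$, i.e.\ $T^\dagger$ maps $\ran D_{T^\dagger}$ into $\ran D_T$. This holds because $T^\dagger D_{T^\dagger}=D_T T^\dagger$. The block then becomes $Q^\dagger(\Id n - TT^\dagger + TT^\dagger)Q=\Id d$.
\item The $(1,2)$ block is $T^\dagger D_{T^\dagger}Q - D_T PP^\dagger T^\dagger Q = T^\dagger D_{T^\dagger}Q - D_T T^\dagger Q = 0$, again by the intertwining relation.
\end{itemize}
Since $U$ is square, $U^\dagger U=\Id{n+d}$ already makes $U$ unitary.

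I expect the only delicate point to be the intertwining identity $T^\dagger D_{T^\dagger}=D_T T^\dagger$ and the resulting range inclusion. I will prove it by writing $D_{T^\dagger}=f(TT^\dagger)$ and $D_T=f(T^\dagger T)$ with $f(x)=\sqrt{1-x}$ represented by a polynomial on the finite joint spectrum. The identity $T^\dagger (TT^\dagger)^k=(T^\dagger T)^k T^\dagger$ then gives the claim. The equality $\rk D_T=\rk D_{T^\dagger}$ is stated in the definition; it also follows because $T^\dagger T$ and $TT^\dagger$ have the same spectrum with multiplicity in finite dimension.
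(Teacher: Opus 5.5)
Your proposal is correct and is essentially the paper's proof: your isometries $P,Q$ are the paper's $V,V_*$, the block unitary is identical, and the check of $U^\dagger U=I_{n+d}$ uses the same intertwining relation $T^\dagger D_{T^\dagger}=D_T T^\dagger$ and the same range inclusion $\Pi_{\DT}T^\dagger V_*=T^\dagger V_*$. The only addition is that you prove the intertwining identity by polynomial functional calculus, whereas the paper takes it from Sz.-Nagy--Foias.
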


\begin{proof} We reproduce here the construction of \cite{sznagy2010harmonic}, see chapter I.3-5 for more details.
Write $\DT=\operatorname{ran}D_T$, $\DTs=\operatorname{ran}D_{T^\dagger}$, both have dimension $d$. Let $V:\mathbb{C}^d\to\mathbb{C}^n$ and $V_*:\mathbb{C}^d\to\mathbb{C}^n$ be isometric embeddings with ranges $\DT$ and $\DTs$, such that
\[
V^\dagger V=I_d,\quad VV^\dagger=\Pi_{\DT},\qquad
V_*^\dagger V_*=I_d,\quad V_*V_*^\dagger=\Pi_{\DTs}.
\]

We have the intertwining relations
\begin{equation}\label{eq:intertwine}
D_T\,T^\dagger=T^\dagger D_{T^\dagger},\qquad
D_{T^\dagger}\,T=T\,D_T.
\end{equation}
Furthermore, since $\operatorname{ran}V_*=\DTs=\operatorname{ran}D_{T^\dagger}$, write $V_*=D_{T^\dagger}M$ for some $M:\mathbb{C}^d\to\mathbb{C}^n$. Then by \eqref{eq:intertwine}, $T^\dagger V_*=T^\dagger D_{T^\dagger}M =D_T\,T^\dagger M\in\DT$, so
\begin{equation}\label{eq:projector}
\Pi_{\DT}\,T^\dagger V_*=T^\dagger V_*.
\end{equation}

Define, on $\mathbb{C}^n\oplus\mathbb{C}^d$,
\[
U=\begin{pmatrix}
T & D_{T^\dagger}V_*\\[3pt]
V^\dagger D_T & -\,V^\dagger T^\dagger V_*
\end{pmatrix}.
\]

A direct computation of $U^\dagger U$ gives,
\[
U^\dagger U=
\begin{pmatrix}
T^\dagger T+D_T^2 & T^\dagger D_{T^\dagger}V_*-D_T\,T^\dagger V_*\\
\ast & V_*^\dagger D_{T^\dagger}^2 V_*+V_*^\dagger T\,\Pi_{\DT}\,T^\dagger V_*
\end{pmatrix}=I_{n+d}.
\]
Here the $(2,1)=*$ entry is the adjoint of the $(1,2)$. Since $T^\dagger T+D_T^2=T T^\dagger+D_{T^\dagger}^2=I_n$, and using \cref{eq:projector}, the diagonal block are, respectively, $\Id n$ and $\Id d$. The $(1,2)$ entry vanishes by \cref{eq:intertwine}. Finally from the $(1,1)$ entry of $U$ we see $J^\dagger U J=T$.
\end{proof}

No unitary dilation of $T$ acts on a space of dimension smaller than $n+d$. The resulting dilation is not unique: any unitary of the form $\begin{psmallmatrix}
    I_n & 0 \\ 0 & U_0\\
\end{psmallmatrix} U$ with $U_0 \in U(d)$ is a valid dilation of $T$ of size $n+d$.
When $T$ has full defect index, $V=V_* =\Id{n}$, so one recovers with this construction the Halmos dilation  \cite{Halmos1950}, the $2n\times2n$ matrix
\begin{equation*}
  U_H=\begin{pmatrix}T & D_{T^\dagger}\\ D_T & -T^\dagger\end{pmatrix}.
\end{equation*}

\begin{theorem}\label{thm:2n}
Assume (H1)-(H3) and $\rho\succeq0$. Then $\rho$ admits a separable decomposition
\begin{equation*}\label{eq:decomp2n}
  \rho=\sum_{k=1}^{\rk(\rho)}p_k\,\alpha_k\otimes\beta_k
\end{equation*}
with $\rk (\rho)$ terms, in which every $\alpha_k$ is a \emph{pure} qubit state and every $\beta_k$ a \emph{pure} state on $\C^n$. Thus, $\rho$ is separable and $\spr(\rho)= \rk (\rho)$.
\end{theorem}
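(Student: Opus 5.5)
We combine \cref{lem:defect_rank} with \cref{lem:dilation_decomposition}, and then refine the resulting projections to rank one. Let $d$ be the defect index of $T$. By \cref{prop:contraction}, $\rk(\rho)=n+\rk(\Id n-TT^\dagger)=n+d$. \cref{lem:defect_rank} supplies a unitary $U$ on $\C^{n+d}$ with $T=J^\dagger UJ$, where $J$ is the canonical embedding.

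Diagonalize $U=\sum_{k=1}^{n+d}e^{i\theta_k}\ketbra{u_k}{u_k}$ in an orthonormal eigenbasis. If an eigenvalue is repeated, split its eigenprojector into rank-one pieces. The computation in \cref{lem:dilation_decomposition} goes through verbatim with rank-one projectors in place of the $E_\mu$. Set $\ket{w_k}=J^\dagger\ket{u_k}\in\C^n$ and $L_k=\ketbra{w_k}{w_k}$. These are rank one or zero, and they satisfy
\[
\sum_k L_k=\Id n,\qquad \sum_k\cos\theta_k\,L_k=C_1,\qquad \sum_k\sin\theta_k\,L_k=C_3 .
\]

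Now define
\[
p_k=\norm{w_k}^2/n,\qquad \beta_k=\ketbra{w_k}{w_k}/\norm{w_k}^2,\qquad \alpha_k=\tfrac12(\Id2+\cos\theta_k X+\sin\theta_k Z).
\]
Each $\alpha_k$ is pure because its Bloch vector $(\cos\theta_k,0,\sin\theta_k)$ has unit length. Terms with $w_k=0$ are dropped. Substituting into $\sum_k p_k\alpha_k\otimes\beta_k$, the three identities above are exactly the moment conditions \eqref{eq:moments}, so the sum reproduces \eqref{eq:rhoC}. The weights sum to $\Tr(\Id n)/n=1$. This gives a decomposition into at most $n+d=\rk(\rho)$ pure product states, so $\spr(\rho)\le\rk(\rho)$.

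For the reverse inequality we use a standard fact. If $\rho=\sum_{k=1}^m p_k\ketbra{a_kb_k}{a_kb_k}$, then $\ran\rho\subseteq\Span\{\ket{a_kb_k}\}$, so $m\ge\rk(\rho)$. Hence $\spr(\rho)\ge\rk(\rho)$ holds for any pure product decomposition, and combined with the construction above we get $\spr(\rho)=\rk(\rho)$. Separability follows at once. By \cref{lem:filtering} and the reductions of \cref{sec:setup}, the conclusion also transfers to general states with $\OSR=3$.

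The main obstacle is the counting in the first step. One must ensure the dilation has size exactly $n+d$, which \cref{lem:defect_rank} provides, and match this with $\rk(\rho)$. This requires $d=\rk D_T=\rk D_{T^\dagger}=\rk(\Id n-TT^\dagger)$, since $D_{T^\dagger}^2=\Id n-TT^\dagger$ and $T$, $T^\dagger$ share their singular values.

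A secondary subtlety is degenerate eigenvalues of $U$. They are harmless, since we split eigenspaces into one-dimensional pieces rather than using the coarser $E_\mu$. Vanishing $w_k$ would only lower the count below $\rk(\rho)$, which is impossible by the lower bound. So in fact every $w_k\neq0$, and exactly $\rk(\rho)$ terms appear.
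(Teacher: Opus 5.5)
Your proposal is correct and follows essentially the same route as the paper. Both use the size-$(n+d)=\rk(\rho)$ unitary dilation from \cref{lem:defect_rank}, compress its rank-one eigenprojections by $J$, obtain the moment conditions \eqref{eq:moments}, and use the range argument for $\rk(\rho)\le\spr(\rho)$. Two points that the paper leaves implicit are handled more carefully in your version: you split repeated eigenvalues into rank-one projectors (whereas \cref{lem:dilation_decomposition} is stated for distinct eigenvalues), and you use the lower bound to show every $J^\dagger\ket{u_k}$ is nonzero, so each $\beta_k$ is well defined.
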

\begin{proof}
By \cref{prop:contraction}, $T$ is a contraction with defect index $d=\rk(\rho)-n$.  Thanks to \cref{lem:defect_rank}, every contraction of defect index $d$ admits a unitary dilation $U$ of size $n + d = \rk(\rho)$. 
By \cref{lem:dilation,thm:unitary_polygon}, $\W$ is inside the polygon formed by the $\rk(\rho)$ eigenvalues of $U$.

Let $e^{i\theta_k}$, $k=1,\dots,\rk(\rho)$, be the eigenvalues of $U$ (all on $\bD$) with orthonormal eigenvectors
\[
  U\ket{u_k}=e^{i\theta_k}\ket{u_k},
  \qquad
  \ket{u_k}=\begin{pmatrix}\ket{\psi_k}\\ \ket{\chi_k}\end{pmatrix},
  \quad \ket{\psi_k}\in\C^n,\ket{\chi_k}\in\C^d .
\]
Let $J=\begin{psmallmatrix}I_n\\ 0_d\end{psmallmatrix}$ be the isometric embedding onto the top block, such that $T=J^\dagger UJ$. By \cref{lem:dilation_decomposition},
\begin{equation}\label{eq:halmosmoments}
  \sum_{k=1}^{\rk(\rho)}\ketbra{\psi_k}{\psi_k}=\Id n,
  \qquad
  \sum_{k=1}^{\rk(\rho)}\cos\theta_k\,\ketbra{\psi_k}{\psi_k}=C_1,
  \qquad
  \sum_{k=1}^{\rk(\rho)}\sin\theta_k\,\ketbra{\psi_k}{\psi_k}=C_3 .
\end{equation}
Define
\begin{equation}\label{eq:halmosterms}
  p_k=\frac{\|\psi_k\|^2}{n},
  \qquad
  \alpha_k=\tfrac12\bigl(\Id2+\cos\theta_k\,X+\sin\theta_k\,Z\bigr),
  \qquad
  \beta_k=\frac{\ketbra{\psi_k}{\psi_k}}{\|\psi_k\|^2}.
\end{equation}
 Dividing \eqref{eq:halmosmoments} by $n$ and using $p_k\beta_k=\ketbra{\psi_k}{\psi_k}/n$ reproduces exactly the moment conditions \eqref{eq:moments} with $v_k=(\cos\theta_k,\sin\theta_k)$. Hence $\rho=\sum_{k=1}^{\rk(\rho)}p_k\,\alpha_k\otimes\beta_k$.
The weights satisfy $p_k\ge0$ and $\sum_kp_k=\Tr(\Id n)/n=1$ by the first identity in \eqref{eq:halmosmoments}. 
Each $\alpha_k$ is a pure qubit state with Bloch vector $(\cos\theta_k,0,\sin\theta_k)$, each $\beta_k$ is a rank-one projector, hence a pure state on $\C^n$. Note that  $\rk(\rho)\leq \spr(\rho)$ (as a sum of rank one projectors) and this decomposition gives the equality.
\end{proof}

Examples of dilations are presented in \cref{fig:pure}.
\begin{figure}
    \centering
    \includegraphics[width=\linewidth, trim={10 20 10 30},clip ]{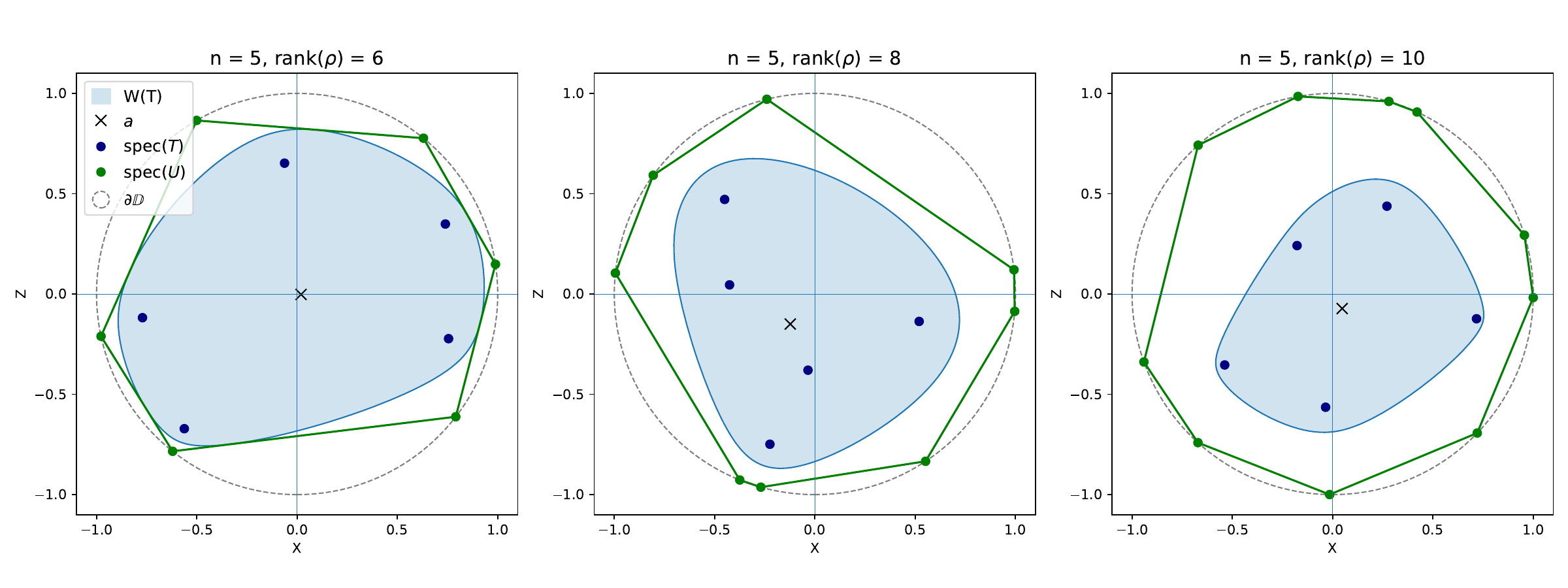}
    \caption{Numerical range for three bipartite system with $n=5$ and different values of $\rk(\rho)$. The dilation $U$ has dimension $\rk {\rho}$. The vertices of the enclosing polygon $W(U)$ correspond to eigenvalues of $U$. The numerical range $W(U)$ need not tightly enclose $W(T)$.}
    \label{fig:pure}
\end{figure}

\begin{remark}
Once the normal form and $C_1, C_3$ are determined, any {\it normal} dilation $N$ of $T=C_1+ iC_3$ with spectrum in $\Disk$ yields a valid decomposition. 
\end{remark}

\subsection{Mixed \texorpdfstring{$n+1$}{n+1}-states decompositions}\label{sec:mterm}

To construct a polygon with $n+1$ vertices that enclose $\W$, we rely on the following theorem of Wu.

Let $A$ be a completely non-unitary contraction on a finite-dimensional Hilbert space, so its minimal polynomial $m_A$ has all roots in the open unit disk $\oDisk$. Write $\Lambda=(\lambda_1,\dots,\lambda_m)$ for the roots of $m_A$ with possible multiplicity, so $m_A(z)=\prod_{k=1}^{m}(z-\lambda_k)$, and set
\[
d_k=\sqrt{1-|\lambda_k|^2}\ \ (>0),
\qquad
\phi(z)=\prod_{k=1}^{m}\frac{\lambda_k-z}{1-\overline{\lambda_k}\,z}.
\]

We denote by \emph{$S(\Lambda)$ the matrix of the compressed shift $S(\phi)$} in the Malmquist-Walsh orthonormal basis. The matrix $S(\Lambda)$ has the upper-triangular form
\begin{equation} \label{eq:livsic}
    \bigl[S(\Lambda)\bigr]_{jk}=
\begin{cases}
\lambda_k, & j=k,\\
\,d_j\Bigl(\prod_{j<\ell<k}(-\overline{\lambda_\ell})\Bigr)d_k,
  & j<k,\\
0, & j>k.
\end{cases}
\end{equation}
The diagonal recovers the eigenvalues of $A$. For a detailed presentation of compressed shifts and model spaces,  we refer to \cite[\S 4,5,9 ]{Garcia2016}.  It can be checked that $S(\Lambda)$ is a contraction and has defect index 1.

\begin{theorem}[Theorem 1.4 \cite{Wu1997}, Lemma 4 \cite{Nakazi95}] \label{thm:wu_inside_Sn}
Let $T = U \oplus A$ be a contraction, with $U$ unitary and $A$ completely non unitary.  Let $u_1,\dots,u_p$ be the distinct eigenvalues of $U$.
Let $\Lambda=(\lambda_1,\dots,\lambda_m)$ be the roots of the minimal polynomial of $A$ with multiplicities.
Then $T$ power dilates to 
\[ 
 B = \bigoplus^N \left( D \oplus S(\Lambda) \right) \text{ where } D = \diag(u_1,\dots,u_p)
\]
with $N = \max( r_U,r_A) \leq n$, where $r_U$ is the maximal eigenvalue multiplicity of the $u_k$ and $r_A$ is the defect index of $A.$
\end{theorem}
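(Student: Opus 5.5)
The plan is to treat the unitary part $U$ and the completely non-unitary part $A$ separately, and then glue the two power dilations together with a block-diagonal isometry. Powers of block-diagonal operators are computed blockwise. Hence if $U^k=J_U^\dagger B_U^kJ_U$ and $A^k=J_A^\dagger B_A^kJ_A$ for all $k$, then $J=J_U\oplus J_A$ gives $T^k=J^\dagger(B_U\oplus B_A)^kJ$. A permutation of the direct summands of $B_U\oplus B_A$ then produces the stated form $\bigoplus^N(D\oplus S(\Lambda))$. It therefore suffices to show the following: $U$ power dilates to $\bigoplus^{N}D$, $A$ power dilates to $\bigoplus^{N}S(\Lambda)$, and each uses at most $N$ copies.

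\emph{Unitary part.} Diagonalize $U=\bigoplus_k u_k\Id{m_k}$, where $m_k\le r_U\le N$ is the multiplicity of $u_k$. Inside $\bigoplus^N D$, the eigenvalue $u_k$ occurs exactly $N\ge m_k$ times. Let $J_U$ send an orthonormal eigenbasis of $U$ to distinct standard basis vectors of $\bigoplus^N D$ carrying the same eigenvalues. Its range is then a reducing subspace on which $\bigoplus^N D$ acts as $U$, so $J_U^\dagger(\bigoplus^ND)^kJ_U=U^k$ for all $k$.

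\emph{Non-unitary part.} Use the Sz.-Nagy--Foias functional model~\cite{sznagy2010harmonic}. Since $A$ is a c.n.u.\ contraction on a finite-dimensional space, it is of class $C_0$, with characteristic function an $r\times r$ inner function $\Theta$, where $r=r_A$. The model represents $A$ as the compression of $S\otimes\Id{r}$ ($S$ the unilateral shift on $H^2$) to $\mathcal K=(H^2\otimes\C^r)\ominus\Theta(H^2\otimes\C^r)$. This subspace is co-invariant, so the compression is multiplicative on powers. The key step is to show $\mathcal K\subseteq K_\phi\otimes\C^r$ with $K_\phi=H^2\ominus\phi H^2$, that is, $\phi H^2\otimes\C^r\subseteq\Theta(H^2\otimes\C^r)$. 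For this I will use that $\phi(A)=0$. Indeed, $\phi$ and $m_A$ have the same zeros with the same multiplicities, so $\phi=m_A\cdot h$ with $h$ analytic on a neighbourhood of $\Disk$, and $\phi(A)=m_A(A)h(A)=0$. The $C_0$ theory then says that $\phi(A)=0$ exactly when $\Theta$ divides $\phi\,\Id{r}$, i.e.\ $\phi\,\Id r=\Theta\Omega$ for some bounded analytic $\Omega$; this gives the inclusion.

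Granting the inclusion, $\mathcal K$ is co-invariant for $S\otimes\Id r$, and $K_\phi\otimes\C^r$ is co-invariant as well. Hence $\mathcal K$ is co-invariant for the compression $S(\phi)\otimes\Id r$ of $S\otimes\Id r$ to $K_\phi\otimes\C^r$. Compression to a co-invariant subspace respects powers, so $A^k=J_A^\dagger(S(\phi)\otimes\Id r)^kJ_A$. In the Malmquist--Walsh basis, $S(\phi)$ is the matrix $S(\Lambda)$ of~\eqref{eq:livsic}, so $A$ power dilates to $\bigoplus^{r_A}S(\Lambda)$. Appending $N-r_A$ further copies as an orthogonal direct summand preserves the power dilation property.

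\emph{Main obstacle.} I expect the divisibility $\Theta\mid\phi\,\Id r$ to be the hardest step, since it is the only place where real model theory is needed. If citing~\cite{sznagy2010harmonic} is not acceptable, I would first try a direct finite-dimensional argument. The vectors $\phi\,(H^2\otimes\C^r)$ pair against $\mathcal K$ through $\phi(A)^\dagger$ acting on defect vectors, and $\phi(A)^\dagger=0$, which should yield $\phi(H^2\otimes\C^r)\perp\mathcal K$ directly.

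Finally, the bound $N\le n$ holds because $r_U\le\dim\Hs_0\le n$ and $r_A\le\dim\Hs_1\le n$.
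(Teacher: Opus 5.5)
Your proof is correct and follows the same route as the paper's construction in Appendix~\ref{app:isometry}: split $T=U\oplus A$ and dilate $U$ into copies of $D$. Then realize $A$ as the compression of the shift on $H^2(\Dfr)$ to a co-invariant subspace, use $\phi(A)=0$ to place that subspace inside $K_\varphi\otimes\Dfr$, and pad to $N$ copies and regroup. The difference is presentational: the paper writes the model embedding explicitly as $(Wx)(z)=D_{A^\dagger}(I-zA^\dagger)^{-1}x$ and gets $\ran W\subseteq\ker\phi(M_z)^\dagger=K_\varphi\otimes\Dfr$ directly from $W\phi(A)^\dagger=\phi(M_z)^\dagger W=0$. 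That is essentially your fallback argument, so neither the characteristic function $\Theta$ nor the $C_0$ divisibility theorem is actually needed.
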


Note that $S(\Lambda)$ is not unitary, thus this is not a unitary dilation.  \cref{lem:dilation} gives directly
    \[
    \W \subseteq W \left( D \oplus S(\Lambda)\right).
    \]

\begin{theorem}\label{thm:np1}
Assume (H1)-(H3) and $\rho\succeq0$. Then $\rho$ admits a separable decomposition
\begin{equation*}\label{eq:decompnp1}
  \rho=\sum_{k=1}^{r}p_k\,\alpha_k\otimes\beta_k
\end{equation*}
with $r \leq n+1$ terms in which every $\alpha_k$ is a pure qubit state and every $\beta_k$ a possibly mixed state on $\C^n$. Let $T$ be the contraction associated with $\rho$, and $m_T$ is minimal polynomial,  then $\sr(\rho)\leq \deg m_T +1\leq n+1$.
\end{theorem}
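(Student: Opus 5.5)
The plan is to run the same argument as in \cref{thm:2n}, with Wu's power dilation of \cref{thm:wu_inside_Sn} in place of the minimal unitary dilation. By \cref{lem:decomposition_cnu}, write $T=U\oplus A$ with $U$ unitary and $A$ completely non-unitary. Let $u_1,\dots,u_p$ be the distinct eigenvalues of $U$; they lie on $\bD$. Let $\Lambda=(\lambda_1,\dots,\lambda_m)$ be the roots of $m_A$, which lie in $\oDisk$. These two root sets are disjoint, so $m_T=m_U\,m_A$ and $\deg m_T=p+m$. \cref{thm:wu_inside_Sn} gives an isometry $J_1$ with $T=J_1^\dagger B J_1$, where $B=\bigoplus^N(D\oplus S(\Lambda))$. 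Here I only use the $k=1$ case of the power dilation.

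Since $B$ is not normal, I next replace it by a unitary with few distinct eigenvalues. $S(\Lambda)$ is an $m\times m$ contraction with defect index $1$. By \cref{lem:defect_rank} it therefore has a unitary dilation $U_S$ of size $m+1$, with isometry $J_S=\begin{psmallmatrix}I_m\\0\end{psmallmatrix}$. Set
\[
\widetilde U=\bigoplus^N\bigl(D\oplus U_S\bigr),\qquad J_2=\bigoplus^N\bigl(I_p\oplus J_S\bigr).
\]
Then $J_2$ is an isometry and $B=J_2^\dagger\widetilde U J_2$, blockwise. So $J=J_2J_1$ is an isometry with $T=J^\dagger\widetilde U J$, and $\widetilde U$ is a unitary dilation of $T$. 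The key point is that taking $N$ copies does not create new eigenvalues. The spectrum of $\widetilde U$ consists of the $p$ values $u_k$ together with the at most $m+1$ eigenvalues of $U_S$. Hence $\widetilde U$ has at most $p+m+1=\deg m_T+1\le n+1$ distinct eigenvalues $e^{i\theta_\mu}$, all on $\bD$.

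Now apply \cref{lem:dilation_decomposition} to $\widetilde U$, grouping eigenspaces by distinct eigenvalue. This gives PSD operators $L_\mu=J^\dagger E_\mu J$, indexed by $\mu\le \deg m_T+1$, satisfying
\[
\sum_\mu L_\mu=I_n,\qquad \sum_\mu\cos\theta_\mu L_\mu=C_1,\qquad \sum_\mu\sin\theta_\mu L_\mu=C_3 .
\]
Discard the indices with $L_\mu=0$. For the rest, set $p_\mu=\Tr L_\mu/n$, $\beta_\mu=L_\mu/\Tr L_\mu$ and $\alpha_\mu=\tfrac12(I_2+\cos\theta_\mu X+\sin\theta_\mu Z)$. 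Dividing the three identities by $n$ gives exactly the moment conditions \eqref{eq:moments}, so by the normal form \eqref{eq:rhoC} we get $\rho=\sum_\mu p_\mu\alpha_\mu\otimes\beta_\mu$. Each $\alpha_\mu$ is pure, each $\beta_\mu$ is possibly mixed, the weights are nonnegative, and $\sum_\mu p_\mu=1$. Thus $\sr(\rho)\le\deg m_T+1\le n+1$.

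The main obstacle is the middle step: checking that the composite $J_2J_1$ really realizes $T$ as a compression of a unitary whose number of distinct eigenvalues is independent of $N$. This needs Wu's theorem in exactly the stated form, with the same $S(\Lambda)$ repeated in every summand, and it needs the defect index of $S(\Lambda)$ to be $1$ so that $U_S$ adds only one extra eigenvalue. The degenerate cases $A=0$ (so $T$ is unitary, $m=0$) and an empty unitary part ($p=0$) should fall out of the same construction. In the first case I would simply use $\widetilde U=T$, which gives at most $\deg m_T$ terms.
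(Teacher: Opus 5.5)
Your proposal is correct and follows essentially the same route as the paper. Both use Wu's power dilation $B=\bigoplus^N(D\oplus S(\Lambda))$, apply the size-$(m+1)$ unitary dilation of $S(\Lambda)$ from \cref{lem:defect_rank} blockwise, compose the isometries as in \cref{app:isometry}, and finish with \cref{lem:dilation_decomposition} grouped by distinct eigenvalues. Your handling of the degenerate case is slightly cleaner than the paper's: you special-case only unitary $T$ and send normal non-unitary $T$ through the Wu construction, which keeps every $\alpha_k$ pure without appealing to the remark on normal dilations.
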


\begin{proof}
If $T$ is normal or unitary, \cref{lem:dilation_decomposition} gives a decomposition with at most $n$ terms. 
If $T = U \oplus A$ has defect index $1$ or more, by \cref{thm:wu_inside_Sn}, $T$ dilates to
\[ B = \bigoplus^N( D\oplus S(\Lambda))
\]
a $\deg m_T N \times \deg m_T N$ matrix with $\deg m_T =\deg m_U + \deg m_A =  p+m \leq n$. We construct a unitary dilation similarly to \cite[Thm 2.2]{Wu1997}. Since $S(\Lambda)$ has defect index 1, $S(\Lambda)$ dilates to a unitary $V$ of size $(m+1)$ by \cref{lem:defect_rank}. Thus, $T$ dilates to 
\[ U_{dil} = \bigoplus^N( D\oplus V )
\]
which is unitary of size $(\deg m_T + 1)N \times (\deg m_T+1)N$. By \cref{lem:dilation,thm:unitary_polygon}, $\W$ is inside the polygon formed by the $\deg m_T +1$ distinct eigenvalues of $U_{dil}$.  Let $e^{i\theta_k}$, be its at most $\deg m_T +1 $ distinct eigenvalues.  
Let $J$ be the isometry associated to the dilation $U_{dil}$. Thanks to \cref{lem:dilation_decomposition}, the operators
\[
L_ k  = J^\dagger\,E_ k  J 
\]
are hermitian, positive, and satisfy 
\begin{equation}\label{eq:dil_moments}
    \sum_{ k =1}^{\deg m_T+1} L_ k  = \Id{n},\quad
\sum_{ k =1}^{\deg m_T+1} \cos(\theta_ k ) L_ k  = C_1,\quad
\sum_{ k =1}^{\deg m_T+1} \sin(\theta_ k ) L_ k  = C_3.
\end{equation}

For each $ k $ with $L_k\neq 0$, define
\begin{equation}
  p_ k =\frac{\Tr(L_ k )}{n},
  \quad
\alpha_ k =\tfrac12\bigl(\Id2+\cos\theta_k\,X+\sin\theta_k\,Z\bigr),
  \quad
  \beta_ k =\frac{L_ k }{n p_ k }.
\end{equation}
With $\gamma_ k =p_ k \beta_ k $, dividing \eqref{eq:dil_moments} by $n$ reproduces exactly the moment conditions \eqref{eq:moments} with $v_k=(\cos\theta_k,\sin\theta_k)$. Hence $\rho=\sum_kp_k\,\alpha_k\otimes\beta_k$ with $p+m+1 \leq n+1$ terms. 
The weights satisfy $p_k\ge0$ and $\sum_kp_k=\Tr(\Id n)/n=1$ by the first identity in \eqref{eq:dil_moments}. Each $\alpha_k$ is a pure qubit state (Bloch vector $(\cos\theta_k,0,\sin\theta_k)$); each $\beta_k$ is a possibly mixed state.
\end{proof}

\begin{figure}
  \centering
  \begin{subfigure}[b]{0.32\textwidth}
    \centering
    \includegraphics[width=\linewidth, trim={0 10 0 0},clip ]{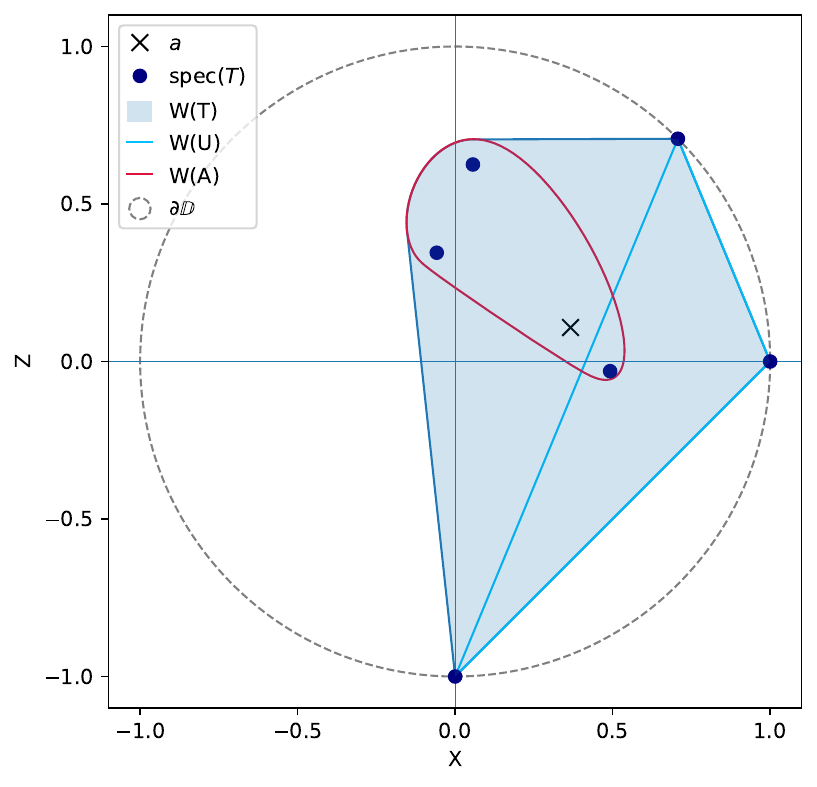}
    \caption{}
    \label{fig:start}
  \end{subfigure}
  \hfill
  \begin{subfigure}[b]{0.32\textwidth}
    \centering
    \includegraphics[width=\linewidth, trim={0 10 0 0},clip ]{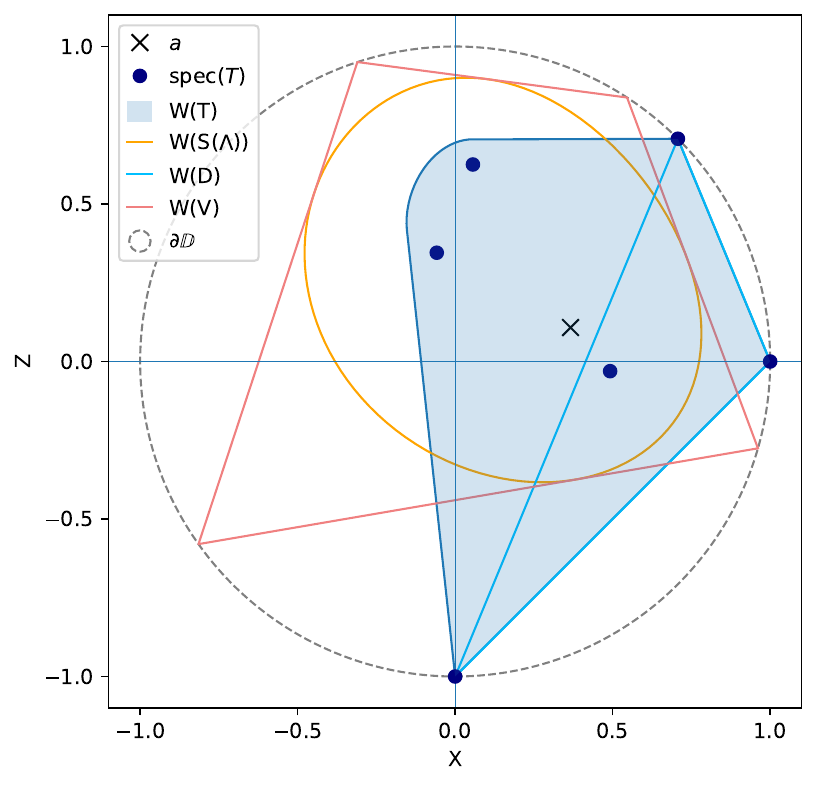}
    \caption{}
    \label{fig:int}
  \end{subfigure}
  \hfill
    \begin{subfigure}[b]{0.32\textwidth}
    \centering
    \includegraphics[width=\linewidth, trim={0 10 0 0},clip ]{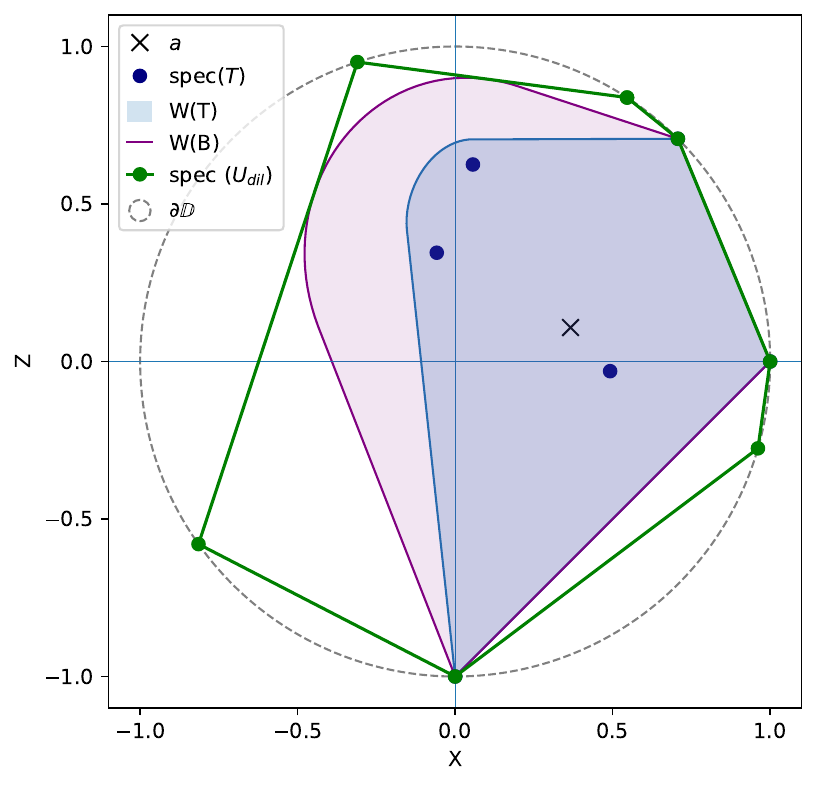}
    \caption{}
    \label{fig:end}
  \end{subfigure}
  \caption{Construction of  $U_{dil}$ for a bipartite system with $n=6$. (A) $T$ can be decomposed as $T= U \oplus A$, $U\in U(3)$ and $ A \in M_3(\C)$ c.n.u., thus $W(T) = \conv (W(U) \cup W(A))$. The defect index is $3$. (B) $D$ and $S(\Lambda)$ that compose $B$ are represented. $S(\Lambda)$ dilates to $V$ and $W(D) = W(U)$ because $D$ is unitarily equivalent to $U$. (C) $W(B) = \conv (W(D) \cup W(S(\Lambda))) $ and $W(U_{dil}) = \conv (W(D) \cup W(V))$. Each eigenvalue of $U_{dil}$ has multiplicity $3$.}
  \label{fig:proof}
\end{figure}

The construction of the associated isometry $J$ such that  $T =  J^\dagger U_{dil} J$ is given in \cref{app:isometry}.

The different steps of the construction are illustrated in \cref{fig:proof} and various examples for different $n$ are presented in \cref{fig:mixed}.

\begin{figure}
    \centering
    \includegraphics[width=\linewidth, trim={10 20 10 30},clip ]{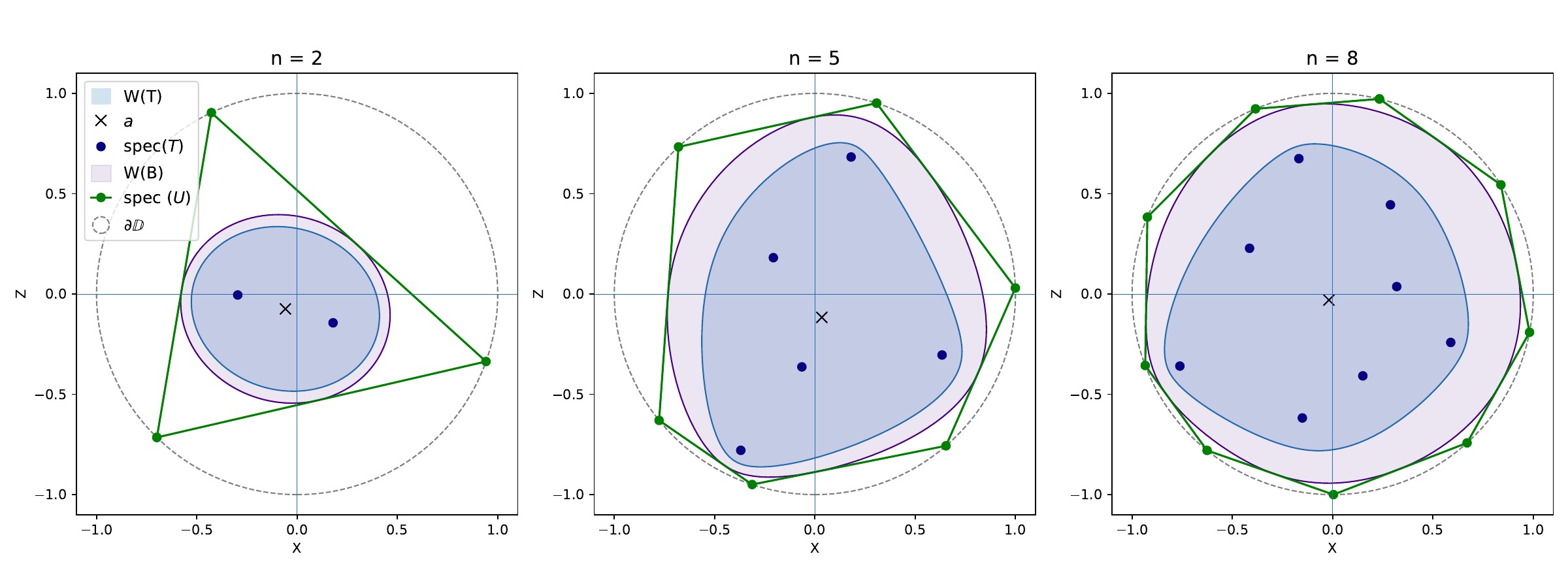}
    \caption{Numerical ranges of $T, B$ and $U_{dil}$ for 3 bipartite systems with $n=2,5,8$. Here $T$ is c.n.u. Both $B$ and $U_{dil}$ are dilations of $T$, but only $U_{dil}$ is normal and yields a decomposition. $W(U_{dil})$ is tangent to $W(B)$. In each case, $\rk(\rho) =2n$  and each $U_{dil}$ eigenvalue has multiplicity $n$.}
    \label{fig:mixed}s
\end{figure}
\begin{remark}
    There exist states that require at least $n+1$ terms in their minimal separable decomposition.
\end{remark}

\begin{example}\label{ex:n+1}
    Choose $T$ to be unitarily equivalent to the nilpotent shift 
    \[J_n = \begin{pmatrix}
    0 & 1 & \\ &\ddots & 1 \\  &&0
    \end{pmatrix}.
    \]
    Then set $C_1 = \tfrac{1}{2}(T + T^\dagger)$, $C_3 = \tfrac{1}{2i}(T - T^\dagger)$. The matrices $C_1, C_3$ are both Hermitian. They are independent, $\Tr(C_1^\dagger C_3) = \Tr (C_1)= \Tr (C_3)= 0$.
    The numerical range is the disk $W(T) = W(J_n)= \overline{D}\!\left(0, \cos\tfrac{\pi}{n+1}\right)$ \cite{HaagerupDeLaHarpe1992}. Then $W(T)$ is enclosed by the regular $n+1$-gon and cannot be enclosed by a polygon in the unit disk with fewer than $n+1$ vertices. This is illustrated in \cref{fig:shift}. 
    Since $\norm{T} = 1$, the state $\rho = \frac{1}{2n}(\Id 2 \otimes\Id n + X\otimes C_1 + Z\otimes C_3)$ is positive by \cref{prop:contraction}. Since $C_1,C_3$ are independent, $\OSR(\rho) =3 $.
\end{example}

\begin{figure}
    \centering
    \includegraphics[width=0.38\linewidth, trim={0 12 0 20},clip ]{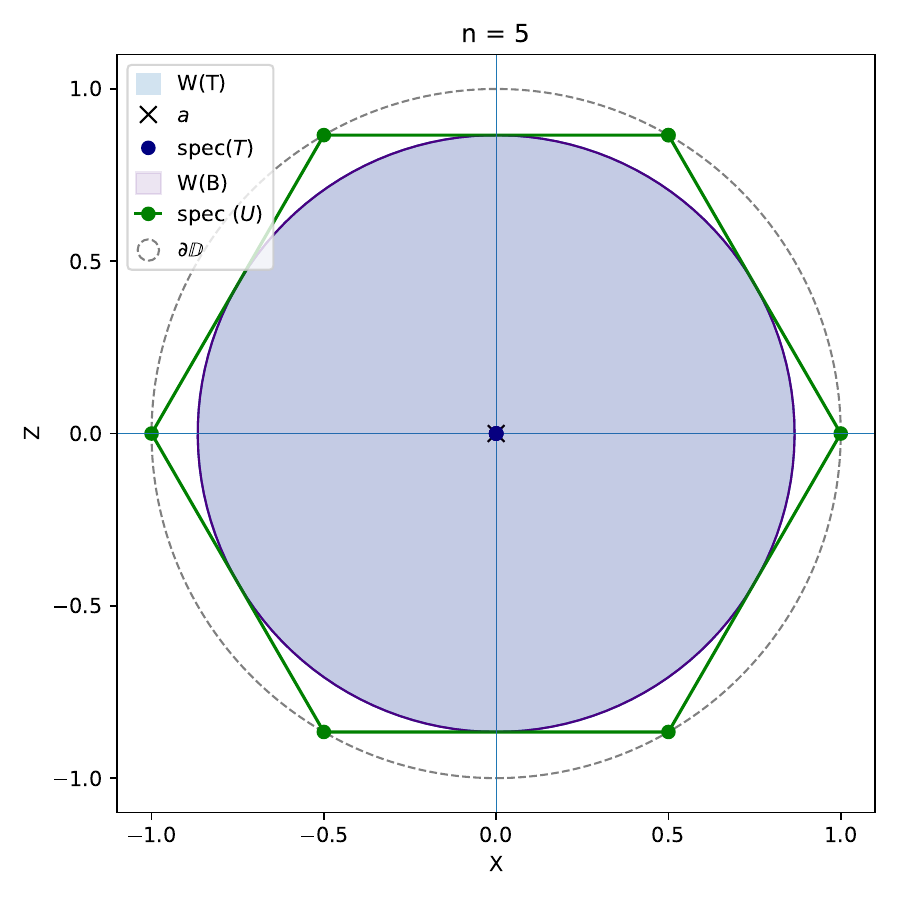}
    \caption{Numerical range of the nilpotent shift $J_5$. All the eigenvalues are $0$.}
    \label{fig:shift}
\end{figure}

This is not a unique example. The class of states associated with contractions $T$ of defect index $1$ and no unimodular eigenvalues requires at least $n+1$ terms (pure and mixed). For such contractions, $W(T)$ is contained in no $m$-gon in $\bD$ for $m \leq n$ \cite[Cor. 7.2.5]{Gau_Wu_2021}. 

If $\rk(\rho) = n$, since the defect index of $T$ is 0, $T$ is unitary, and the (pure) product state decomposition can be directly extracted from $T$ itself. 

\Cref{thm:main_pure} (resp \cref{thm:main_mixed}) follows from \cref{thm:2n} (resp. \cref{thm:np1}), modulo the reduction of a general $\OSR=3$ state to the normal form (H1)-(H3). The filtering preserves the rank and the length of the decomposition, yielding both results.

\section{Summary of the constructions}\label{sec:pseudocode}

We summarize the steps to construct the separable decomposition of a state $\rho$ with $\OSR(\rho) = 3$. 

{\it Step 1: reduction to the normal form.}
\begin{enumerate}
  \item Compute
  \[
    \rho_1 = \frac{(V_1^\dagger \otimes \Id{n})\,\rho\,(V_1 \otimes \Id{n})}
                  {\Tr\bigl[(V_1^\dagger \otimes \Id{n})\,\rho\,(V_1 \otimes \Id{n})\bigr]},
  \]
  where $V_1$ is given by the Cariello SLOCC filtering (see \cref{app:cariello_construction}). Then $\rho_1 = \rho_1^{T_A}$.

  \item If $\rho_{1,B}$ is not full rank, let $U_1$ be a unitary diagonalizing $\rho_{1,B}$, i.e.\ $\rho_{1,B} = U_1\bigl(\Lambda_m \oplus 0_{n-m}\bigr)U_1^\dagger$, and write
  \[
    (\Id{2}\otimes U_1^\dagger)\,\rho_1\,(\Id{2}\otimes U_1)
    = \begin{pmatrix}
        \rho_{2,(1,1)} & 0 & \rho_{2,(1,2)} & 0 \\
        0 & 0 & 0 & 0 \\
        \rho_{2,(2,1)} & 0 & \rho_{2,(2,2)} & 0 \\
        0 & 0 & 0 & 0
      \end{pmatrix}.
  \]
  Then compute
  \[
    \rho_2 = \begin{pmatrix}
               \rho_{2,(1,1)} & \rho_{2,(1,2)} \\
               \rho_{2,(2,1)} & \rho_{2,(2,2)}
             \end{pmatrix}  \in \C^2\otimes \C^m\
  \]
  (see \cref{lem:full_local_rank_reduction}).
  \item Now that $\rho_{2,B}$ has full rank, let $\tau_B$ be the unique positive-definite Hermitian square root of $\rho_{2,B}$, i.e.\ $\tau_B^2 = \rho_{2,B}$. Compute
  \[
    \rho_3 = \frac{(\Id{2} \otimes \tau_B^{-1})\,\rho_2\,(\Id{2} \otimes \tau_B^{-1})}
                  {\Tr\bigl[(\Id{2} \otimes \tau_B^{-1})\,\rho_2\,(\Id{2} \otimes \tau_B^{-1})\bigr]}.
  \]
  Then $\rho_{3,B} = \Id{m}/m$.
\end{enumerate}
Define $C_1 = m\Tr_A\bigl[\rho_3\,(X \otimes \Id{m})\bigr]$ and $C_3 = m\Tr_A\bigl[\rho_3\,(Z \otimes \Id{m})\bigr]$. We obtain the normal form
\[
  \rho_3 = \frac{1}{2m}\bigl(\Id{2}\otimes\Id{m} + X\otimes C_1 + Z\otimes C_3\bigr).
\]
Let $T = C_1 + iC_3$ be the associated contraction, and $d = \rk\,(I - T^\dagger T)^{1/2}$ its defect index.

{\it Step 2A: dilation for the pure-state decomposition.}
Compute the dilation $U_2$ of $T$ on $\mathbb{C}^m \oplus \mathbb{C}^d$ given by \cref{lem:defect_rank}, and set $J_2 = \begin{psmallmatrix} \Id m \\ 0_d \end{psmallmatrix}$ for the isometry.

{\it Step 2B: dilation for the mixed-state decomposition.}
Compute the dilation $U_3$ of $T$ given by \cref{thm:np1}, and the isometry $J_3$ given in \cref{app:isometry}.

{\it Step 3: the decomposition.}
Fix $j \in \{2,3\}$. Let $e^{i\theta_1},\dots,e^{i\theta_r}$ be the $r$ eigenvalues of $U_j$. The eigenvalues are taken with their multiplicity for the decomposition with pure product states and distinct for the decomposition with mixed product states. Let $E_k$ be the orthogonal projection onto the eigenspace of $e^{i\theta_k}$, $1 \le k \le r$. The projector $E_k$ has $\rk 1$ for the decomposition with pure product states. Define
\begin{equation*}
    L_k = J_j^\dagger E_k J_j .
\end{equation*}
For each $k$ with $L_k \neq 0$, define
\begin{equation*}
  p_k = \frac{\Tr(L_k)}{m}, \qquad
  \alpha_k = \tfrac12\bigl(\Id{2} + \cos\theta_k\,X + \sin\theta_k\,Z\bigr), \qquad
  \beta_k = \frac{L_k}{m\,p_k}.
\end{equation*}
The decomposition of $\rho_3$ is then $\rho_3 = \sum_k p_k\,\alpha_k \otimes \beta_k$.

{\it Step 4: invert Step 1.}
\begin{enumerate}
  \item Compute  $\gamma_k = \frac{\tau_B \beta_k  \tau_B}{\Tr \left(\tau_B \beta_k  \tau_B \right)}$ and $p'_k = p_k\frac{\Tr \left[\tau_B \beta_k  \tau_B \right]}{\Tr\bigl[(\Id{2} \otimes \tau_B)\,\rho_3\,(\Id{2} \otimes \tau_B)\bigr]}$. Then $\rho_2 =  \sum_k p'_k \alpha_k \otimes \gamma_k $.
  \item If $\rho_{1,B}$ was not full rank, pad the $\gamma_k$ with $0$ matrices to obtain an operator on $\C^n$, and compute $\zeta_k$
    \[
    \zeta_k 
    = U_1 \begin{pmatrix}
        \gamma_k & 0 \\
        0 & 0_{n-m} &\\
      \end{pmatrix}  U_1^{\dagger}.
  \] 
  Then $\rho_1 = \sum_k p'_k \alpha_k\otimes \zeta_k$.
  \item Compute $\xi_k = \frac{V_1^{-\dagger } \alpha_k  V_1^{-1}}{\Tr [V_1^{-\dagger } \alpha_k  V_1^{-1}]}$, and  $p''_k = p'_k\frac{\Tr [V_1^{-\dagger } \alpha_k  V_1^{-1}]}{\Tr\bigl[(V_1^{-\dagger } \otimes \Id{n})\,\rho_1\,(V_1^{-1} \otimes \Id{n})\bigr]]}$. Then the final decomposition is 
  \[
  \rho =  \sum_k p''_k \xi_k \otimes \zeta_k .
  \]
\end{enumerate}

\bigskip
\section{Discussions and Remarks}\label{sec:n2}
\bigskip

Most statements below are known, either in operator theory (dilations, numerical ranges, matrix convex sets) or in quantum information (entanglement-breaking channels, steering ellipsoids). Our purpose is to highlight the link between these results and the construction of separable decompositions, and their geometric interpretation.

\subsection{The triangle case}

By \cref{prop:geom}, a necessary condition for states to be separable is to be enclosed in a polygon inside the unit disk.  In \cref{thm:2n,thm:np1}, the dilation construction provides both the construction of the vertices and the separable decomposition. When a triangle suffices to enclose  $\W$, there is a direct construction of the decomposition from the vertices alone \cite[Prop. 2.5]{Wu1997}.

\subsubsection{Construction}

Throughout this section, fix three non-collinear points $v_k =(x_k, z_k )\in\Disk$, $k =1,2,3$,  the vertices of the triangle, and let $\Delta=\conv(v_1,v_2,v_3)$ for the triangle.

\begin{proposition}[Barycentric coordinates]\label{prop:bary}
There exist unique affine functions $\ell_k (w)=b^{0}_k +b^{1}_k  w_1+b^{3}_k  w_3$ with 
\begin{eqnarray}
    &\ell_k (v_j)=\delta_{k j}\label{eq:dirac} \\
    &\sum_k \ell_k (w)=1,\qquad \sum_k  v_k \,\ell_k (w)=w
  \qquad\forall w\in\R^2,\label{eq:partition}
\end{eqnarray}
and $\Delta=\{w:\ell_k (w)\ge0,\ k =1,2,3\}$. Their coefficient vectors $(b^{0}_k,b^{1}_k,b^{3}_k)^T$ are the columns of $V^{-1}$, with $V$ given by
\[V=\begin{pmatrix}1&x_1&z_1\\1&x_2&z_2\\1&x_3&z_3\end{pmatrix}.\]
\end{proposition}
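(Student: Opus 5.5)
The plan is to reduce everything to the invertibility of $V$, which holds because the three vertices are non-collinear. Indeed, $\det V$ equals twice the signed area of the triangle $\Delta$: subtracting the first row from the other two leaves a $2\times 2$ determinant in the edge vectors $v_2-v_1$ and $v_3-v_1$. Since these are linearly independent, $\det V\neq 0$ and $V^{-1}$ exists.

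Next comes existence and the Kronecker property \eqref{eq:dirac}. Write $\ell_k(w)=(1,w_1,w_3)\,b_k$ with $b_k=(b^0_k,b^1_k,b^3_k)^T$. The conditions $\ell_k(v_j)=\delta_{kj}$ for all $j,k$ then say $V[b_1\,b_2\,b_3]=\Id 3$. So the coefficient matrix must equal $V^{-1}$, and its columns are the $b_k$. This gives existence and uniqueness simultaneously, since an affine function on $\R^2$ is determined by its values at three affinely independent points.

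For the identities \eqref{eq:partition}, consider the functions $w\mapsto\sum_k\ell_k(w)$ and $w\mapsto\sum_k v_k\ell_k(w)$. Both are affine in $w$. At each vertex $v_j$ they take the values $1$ and $v_j$, by \eqref{eq:dirac}. The constant $1$ and the identity map agree with them there, and affine maps agreeing on three affinely independent points coincide everywhere. Equivalently, in matrix form, $V^{-1}V=\Id 3$ read row-wise gives $\sum_k b_k (1,x_k,z_k)=\Id 3$. Applying this to $(1,w_1,w_3)$ yields both identities at once.

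Finally, I would characterize $\Delta$ by the sign conditions. If $w\in\Delta$, then $w=\sum_j t_jv_j$ with $t_j\ge0$ and $\sum_j t_j=1$. Since $\ell_k$ is affine, $\ell_k(w)=\sum_j t_j\ell_k(v_j)=t_k\ge0$. Conversely, if $\ell_k(w)\ge0$ for all $k$, then by \eqref{eq:partition} we have $w=\sum_k\ell_k(w)v_k$ with nonnegative weights summing to one, so $w\in\Delta$.

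None of these steps is a real obstacle. The only point needing care is the non-collinearity giving $\det V\neq0$. Beyond that, one must state explicitly that affine maps are determined by their values on an affine basis.
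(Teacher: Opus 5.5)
Your proposal is correct and follows essentially the same route as the paper: invertibility of $V$ from $\det V=\pm2\,\mathrm{Area}(\Delta)\neq 0$, then $Vb_k=e_k$ for existence and uniqueness, agreement of affine maps at three affinely independent points for \eqref{eq:partition}, and the two inclusions for $\Delta$. The only cosmetic difference is that you obtain $\ell_k(w)=t_k$ directly from affinity of $\ell_k$, whereas the paper invokes uniqueness of barycentric coordinates.
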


\begin{proof}
Condition \eqref{eq:dirac} and uniqueness follow from $V(b^{0}_k ,b^{1}_k ,b^{3}_k )^T=e_k $ and $V$ is invertible since $\det V=\pm2\,\mathrm{Area}(\Delta)\neq0$. Both sides of each identity in \eqref{eq:partition} are affine in $w$ and agree at the three non-collinear points $v_j$, and hence everywhere. For the last property, note that $w \in \Delta$ is in the convex hull of $v_1,v_2,v_3$, thus there exist $c_k \geq 0 $ such that  $\sum c_k = 1$, and $w = \sum c_k v_k$. By uniqueness, $c_k = \ell_k(w)$ and $\ell_k(w)\geq 0$. Conversely, if $\ell_k(w)\geq 0$, by \eqref{eq:partition} $w \in \Delta$.
\end{proof}

For $\ell_k(w)=b_k^{0}+b_k^{1}w_1+b_k^{3}w_3$, define $\ell_k(C_1,C_3)=b_k^{0}\Id n+b_k^{1}C_1+b_k^{3}C_3$.

\begin{lemma}\label{lem:nr}
$\ell(C_1,C_3)\succeq0\iff\ell\ge0$ on $\W$. Hence, for $L_k =\ell_k (C_1,C_3)$, we have $\W\subseteq\Delta\iff L_k \succeq0$ for $k =1,2,3$.
\end{lemma}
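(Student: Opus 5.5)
The plan is to reduce positivity of the Hermitian matrix $\ell(C_1,C_3)$ to the sign of its quadratic form on unit vectors, and then identify those values with the values of $\ell$ on $\W$. First, $\ell(C_1,C_3)=b^0\Id n+b^1C_1+b^3C_3$ is Hermitian, since the coefficients $b^0,b^1,b^3$ of an affine function on $\R^2$ are real and $C_1,C_3$ are Hermitian. Hence $\ell(C_1,C_3)\succeq0$ if and only if $\braket{\psi|\ell(C_1,C_3)|\psi}\ge0$ for every unit vector $\ket\psi\in\C^n$.

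The key step is the identity, valid for any unit $\ket\psi$,
\[
  \braket{\psi|\ell(C_1,C_3)|\psi}
  =b^0+b^1\braket{\psi|C_1|\psi}+b^3\braket{\psi|C_3|\psi}
  =\ell\bigl(\braket{\psi|C_1|\psi},\braket{\psi|C_3|\psi}\bigr),
\]
which uses only linearity and $\braket{\psi|\psi}=1$. By \cref{def:jnr}, as $\ket\psi$ ranges over the unit sphere, the argument on the right ranges exactly over $\W$. So the quadratic form is nonnegative on all unit vectors if and only if $\ell\ge0$ on $\W$. This proves the first equivalence, in both directions at once.

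For the second statement, apply the first equivalence to each barycentric function $\ell_k$ of \cref{prop:bary}, with $L_k=\ell_k(C_1,C_3)$. Then $L_k\succeq0$ for $k=1,2,3$ if and only if $\ell_k\ge0$ on $\W$ for every $k$. This in turn says that every $w\in\W$ satisfies $\ell_k(w)\ge0$ for all $k$, which by the characterization $\Delta=\{w:\ell_k(w)\ge0,\ k=1,2,3\}$ in \cref{prop:bary} is exactly $\W\subseteq\Delta$.

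There is no real obstacle here. The only point needing care is that the quadratic form of $\ell(C_1,C_3)$ must be evaluated on unit vectors, so that the constant term $b^0\Id n$ contributes exactly $b^0$. Nonnegativity of a quadratic form is invariant under positive scaling of the vector, so restricting to unit vectors loses nothing.
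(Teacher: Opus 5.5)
Your proof is correct and follows the paper's argument: the identity $\braket{\psi|\ell(C_1,C_3)|\psi}=\ell\bigl(\braket{\psi|C_1|\psi},\braket{\psi|C_3|\psi}\bigr)$ on unit vectors gives the first equivalence, and the characterization of $\Delta$ in \cref{prop:bary} gives the second. You also check explicitly that $\ell(C_1,C_3)$ is Hermitian because its coefficients are real, which the paper leaves implicit.
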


\begin{proof}
For every unit vector $\ket{\psi}$, $\braket{\psi|\ell(C_1,C_3)|\psi}=\ell(\braket{\psi|C_1|\psi},\braket{\psi|C_3|\psi})$ which directly gives the first equivalence. The second claim follows from \cref{prop:bary}.
\end{proof}

\begin{proposition}\label{prop:threeterm}
Assume (H1)-(H3) and $\W\subseteq\Delta$. Then $(L_k )_{k =1}^3$ satisfies
$\sum_k  v_k  L_k = (C_1,C_3)$, and 
\[
  \rho=\sum_{k =1}^3 p_k \,\alpha_k \otimes\beta_k ,
  \quad\text{ with } p_k =\ell_k (\vect a), \, \alpha_k = \frac{1}{2}(I + x_kX + z_k Z), \, \beta_k =\frac{L_k }{n\,p_k }.
\]
\end{proposition}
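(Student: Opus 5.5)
The plan is to verify the three moment conditions \eqref{eq:moments} directly for the proposed weights and factors. Once they hold, matching the $\Id2$-, $X$- and $Z$-components of $\sum_k p_k\alpha_k\otimes\beta_k$ against the normal form \eqref{eq:rhoC} gives the claimed equality.

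First, since $\W\subseteq\Delta$, \cref{lem:nr} gives $L_k\succeq 0$ for $k=1,2,3$. Next, I lift the identities \eqref{eq:partition} of \cref{prop:bary} to matrices. Each identity is an identity between affine functions of $w$, and the map $\ell\mapsto\ell(C_1,C_3)$ is linear in the coefficient vector $(b^0,b^1,b^3)$. It sends the constant function $1$ to $\Id n$ and the coordinate functions $w_1,w_3$ to $C_1,C_3$. Comparing coefficients in \eqref{eq:partition} therefore gives
\[
\sum_k L_k=\Id n,\qquad \sum_k x_k L_k=C_1,\qquad \sum_k z_k L_k=C_3,
\]
which in particular is the stated relation $\sum_k v_kL_k=(C_1,C_3)$.

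Then I identify the weights. Taking traces and dividing by $n$, we have $\tfrac1n\Tr L_k=b^0_k+b^1_k\tfrac1n\Tr C_1+b^3_k\tfrac1n\Tr C_3=\ell_k(\vect a)$, using $\tfrac1n\Tr C_j=a_j$ from \cref{lem:normalform}. So $p_k=\ell_k(\vect a)=\Tr(L_k)/n$. These are nonnegative because $L_k\succeq0$; alternatively, $\vect a\in\W\subseteq\Delta$ by \cref{lem:Wprops}. They sum to $1$ by \eqref{eq:partition}. With these weights, $\beta_k=L_k/(np_k)$ is PSD of unit trace, and $p_k\beta_k=L_k/n$. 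Dividing the three matrix identities by $n$ then yields exactly \eqref{eq:moments}. Each $\alpha_k$ is a valid qubit state because $v_k\in\Disk$.

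The only delicate point, and the main obstacle, is a vanishing weight $p_k=0$, where $\beta_k$ is undefined. In that case $\Tr L_k=0$ together with $L_k\succeq0$ forces $L_k=0$. The corresponding term contributes nothing and is simply dropped, exactly as in the proof of \cref{thm:np1}. This leaves a valid decomposition with at most three terms.
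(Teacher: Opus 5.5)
Your proof is correct and follows the paper's argument essentially step for step. You lift \eqref{eq:partition} to the matrix identities, identify $p_k=\Tr(L_k)/n=\ell_k(\vect a)$ via $\Tr C_j=na_j$, and get positivity of the $L_k$ from \cref{lem:nr}. Your explicit handling of a vanishing weight ($\Tr L_k=0$ with $L_k\succeq0$ forces $L_k=0$, so the term is dropped) is a small detail the paper's proof leaves implicit, since it defines $\beta_k=L_k/(np_k)$ without addressing $p_k=0$.
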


\begin{proof}
\cref{eq:partition} gives the moment conditions $\sum_k  L_k =\Id n$, $\sum_k  v_k  L_k = (C_1,C_3)$. Since $\Tr C_j=na_j$, $\Tr L_k =n\,\ell_k (\vect a) = n p_k \geq 0$ and $\sum_k p_k = 1$ by \eqref{eq:partition}. $\alpha_k \succeq 0 $ because $v_k \in \Disk$. By \cref{lem:nr} $L_k \succeq0$ and thus $\beta_k \succeq0$.
\end{proof}

\begin{remark}
With this construction, we can construct the isometry $J=\sum_k \ket{k} \otimes L_k^{1/2}$ that maps $\C^n\to\C^3\otimes\C^n$, $J^\dagger J=\sum_k L_k=\Id n$, with $P_k=\ketbra{k}{k}\otimes\Id n$ and $N=\sum_k v_k    P_k$. Then, $T$ dilates to the normal operator $N$, $J^\dagger NJ=\sum_k v_k L_k=T$.
\end{remark}

This construction does not extend to polygons with more than three vertices. The barycentric coordinates are uniquely defined affine functions only for a simplex. In fact, simplices are the only polytopes $P$ for which $W(T)\subseteq P$ always implies the existence of a normal dilation of $T$ with spectrum in $P$ \cite[Thm. 4.1]{PasserShalitSolel2018}.

The same construction in $\R^3$, with a tetrahedron, can be used to treat the case $\OSR(\rho)=4$ for $n=2$.  By filtering a two-qubit state, we can always consider $\rho$ with a maximally mixed reduced state $\rho_B=\Tr_A\rho$, such that
\begin{equation}\label{eq:rho3}
  \rho=\tfrac14\left(I\otimes I+X\otimes C_1+Y\otimes C_2+Z\otimes C_3\right),
\end{equation}
with $C_1,C_2,C_3$ Hermitian. The convex hull $\conv W(C_1,C_2,C_3)$ of their joint numerical range coincides with the steering ellipsoid of $\rho$. \cref{prop:bary,lem:nr,prop:threeterm} hold for a tetrahedron in $\R^3$ (four barycentric coordinates, $V\in M_4$) and give four-term separable decompositions. This recovers the sufficiency part of the nested tetrahedron condition of Jevtic, Pusey, Jennings and Rudolph. Necessity is given by the generalization of \cref{prop:geom}.

\begin{theorem}[Nested tetrahedron condition \cite{Jevtic2014}]\label{thm:jevtic}
A two-qubit state $\rho$ is separable if and only if its steering ellipsoid is contained in a tetrahedron which is itself contained in the unit ball of $\R^3$.
\end{theorem}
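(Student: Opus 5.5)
We prove the two directions separately, following the template of \cref{prop:geom} and \cref{prop:threeterm}, now in $\R^3$ and with three nontrivial Schmidt factors.

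\emph{Reduction.} By \cref{lem:filtering} we may filter the state. If $\rho_B$ is not of full rank, the two-qubit state is a product state by \cref{lem:osr3_mixed}, and the claim is trivial because its steering ellipsoid degenerates to a point inside the Bloch ball. Otherwise we apply $V=\Id2\otimes\tau_B^{-1}$ as in the normal-form reduction to reach \eqref{eq:rho3}, with $C_1,C_2,C_3$ Hermitian $2\times2$ matrices. We then identify the steering ellipsoid $\mathcal E$ of this normalized state with $\conv W(C_1,C_2,C_3)$, which is the mixed-state joint numerical range by the argument of \cref{lem:W_mixed}. We also need that the nested-tetrahedron property is invariant under the filtering. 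The $A$-side filtering acts on Bloch-ball points by a projective (Lorentz-type) map that preserves the ball, sends tetrahedra to tetrahedra, and transports the ellipsoid covariantly. I take this covariance from \cite{Jevtic2014}, and from here on I work with the normalized state.

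\emph{Necessity.} Let $\rho=\sum_k p_k\alpha_k\otimes\beta_k$ be separable with Bloch vectors $v_k\in$ unit ball. Matching components gives the moment conditions $\sum_k p_k\beta_k=\Id2/2$ and $\sum_k p_k v_k\beta_k=(C_1,C_2,C_3)/2$. Evaluating on a unit vector $\psi$, exactly as in \cref{prop:geom}, shows that every point of $W(C_1,C_2,C_3)$, and hence of $\mathcal E$, is a convex combination of the $v_k$. So $\mathcal E\subseteq\conv\{v_k\}$, a polytope inside the ball. By \cref{lem:push}, adapted to three dimensions, we may take the $v_k$ on the sphere. We can use at most four of them: Carathéodory in $\R^3$ together with $\spr\le4$ \cite{Wootters1998} bounds the number of terms, and if there are fewer than four vertices we add a vertex to obtain a tetrahedron.

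\emph{Sufficiency.} Let $\mathcal E\subseteq\Delta=\conv(v_1,\dots,v_4)$, a tetrahedron in the ball; a degenerate tetrahedron can be perturbed outward slightly while staying in the ball. We form the barycentric affine functions $\ell_k$ from the inverse of the $4\times4$ matrix $V$ with rows $(1,v_k)$, as in \cref{prop:bary}, and set $L_k=\ell_k(C_1,C_2,C_3)$. The proof of \cref{lem:nr} goes through verbatim for three operators, so $L_k\succeq0$. The partition identities give $\sum L_k=\Id2$ and $\sum v_kL_k=(C_1,C_2,C_3)$. Setting $p_k=\Tr L_k/2$, $\alpha_k=\tfrac12(\Id2+v_k\cdot\vect\sigma)$ and $\beta_k=L_k/(2p_k)$ then reproduces \eqref{eq:rho3}, as in \cref{prop:threeterm}. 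Undoing the filtering completes the proof.

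\emph{Main obstacle.} I expect the hardest step to be the reduction, specifically showing rigorously that the tetrahedron condition is invariant under the $A$-side SLOCC filtering. The sufficiency step itself is routine. I would handle the reduction by expressing Bloch vectors projectively, writing $\alpha\propto(1,v)$, so that the congruence $\alpha\mapsto V_1^\dagger\alpha V_1$ becomes a Lorentz transformation on the forward light cone, which preserves the ball and maps segments to segments.
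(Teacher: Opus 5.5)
Your route is the paper's own sketch: $B$-side filtering to \eqref{eq:rho3}, identifying the steering ellipsoid with $\conv W(C_1,C_2,C_3)$, necessity from the moment conditions as in \cref{prop:geom} plus a decomposition with at most four terms, and sufficiency from the barycentric construction of \cref{prop:bary,lem:nr,prop:threeterm}.

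The step you single out as the main obstacle is not actually an obstacle. Your reduction only uses $\Id{2}\otimes\tau_B^{-1}$. Alice's steering ellipsoid, the set of normalized states $\Tr_B[(\Id{2}\otimes E)\rho]/\Tr[(\Id{2}\otimes E)\rho]$ with $E\succeq0$, is literally unchanged by a $B$-side filtering. Indeed $\Tr_B[(\Id{2}\otimes E)(\Id{2}\otimes\tau_B^{-1})\rho(\Id{2}\otimes\tau_B^{-1})]=\Tr_B[(\Id{2}\otimes\tau_B^{-1}E\tau_B^{-1})\rho]$, and $E\mapsto\tau_B^{-1}E\tau_B^{-1}$ is a bijection of the PSD cone. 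So no Lorentz covariance is needed. Also, Carath\'eodory in $\R^3$ contributes nothing to the necessity direction: it gives a simplex for each point separately, not one simplex for the whole ellipsoid. The bound of four vertices comes entirely from $\spr(\rho)\le4$.

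The real gap is the sentence ``a degenerate tetrahedron can be perturbed outward slightly while staying in the ball''. Suppose the four vertices are coplanar, lie on the unit sphere, and form a genuine quadrilateral around a flat ellipsoid. Then there is no room to push outward. Lifting a vertex off the plane leaves only the opposite triangular face in that plane, and that face need not contain the ellipsoid. The barycentric construction genuinely needs a simplex (cf.\ the discussion after \cref{prop:threeterm}).

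The fix is a case split. Write $C_i=a_i\Id{2}+\sum_j t_{ij}\sigma_j$ in \eqref{eq:rho3}. Then the ellipsoid is $\{\vect a+t\,e:|e|\le1\}$ and $\OSR(\rho)=1+\rk t$.
\begin{itemize}
\item If $\OSR(\rho)=4$, the ellipsoid is full-dimensional. Any tetrahedron containing it is then automatically nondegenerate, and your barycentric argument applies verbatim.
\item If $\OSR(\rho)\le3$, the state is separable regardless of any tetrahedron, by \cref{cor:two_qubit} and the cited results for $\OSR\le2$.
\end{itemize}
The same split settles necessity if you want a nondegenerate tetrahedron there as well. A decomposition with $\sr(\rho)=\OSR(\rho)\le3$ terms puts the ellipsoid in a triangle, segment or point inside the ball. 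You then complete this to a nondegenerate tetrahedron by adding vertices in the ball off its affine hull.
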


\subsubsection{Two-qubit special case ($n=2$)}
By \cref{prop:threeterm}, any triangle $\Delta \subset \Disk$  with 
$\W \subseteq \Delta $ yields a separable three-term decomposition. For $n=2$, $\partial \W$ is an ellipse $\mathcal{E}$ \cite{Toeplitz1918,Li1996} and  such a triangle always exists. 

\begin{theorem}\label{thm:n2}
For $n=2$,  any state $\rho$ with $\OSR(\rho) =3$ admits a separable decomposition with at most $3$ terms.
\end{theorem}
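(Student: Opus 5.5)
We reduce to the normal form and then exhibit an enclosing triangle. By \cref{thm:cariello}, \cref{lem:filtering} and the full-local-rank reduction (\cref{lem:full_local_rank_reduction}), we may assume (H1)--(H3). If the $B$-marginal had rank $m<2$, the reduced state would live on $\C^2\otimes\C^1$ and have $\OSR=1$, which is impossible since filtering and padding preserve $\OSR$. Hence $m=2$, and $\rho$ takes the form \eqref{eq:rhoC} with $n=2$ and a contraction $T\in M_2(\C)$.

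By \cref{prop:threeterm}, it suffices to find three non-collinear points $v_1,v_2,v_3\in\Disk$ with $\W\subseteq\Delta=\conv(v_1,v_2,v_3)$. The decomposition $\rho=\sum_{k=1}^3 p_k\,\alpha_k\otimes\beta_k$ then follows, and filtering back gives at most $3$ terms for the original state.

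To produce the triangle, we use the two-dimensional dilation results already available. By \cref{thm:np1} (via \cref{thm:wu_inside_Sn}), $T$ has a unitary dilation $U_{dil}$ with at most $\deg m_T+1\le 3$ distinct eigenvalues. By \cref{lem:dilation} and \cref{thm:unitary_polygon}, $\W$ lies in the convex hull of these eigenvalues, which is a polygon with at most three vertices on $\bD$.

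There are three cases.
\begin{itemize}
\item If $T$ is normal, the hull is a segment or a point in $\Disk$.
\item If the hull is degenerate, $\W$ lies in a chord. Since $\W$ is an ellipse and $\OSR=3$ forces $C_1,C_3,\Id2$ to be linearly independent, this can only happen when $\W$ is itself a segment.
\item In the degenerate situations, we can instead pick any triangle with vertices in $\Disk$ containing the segment, for instance by adding a third point on $\bD$ off the chord.
\end{itemize}
Non-collinearity is only needed for the barycentric formula. Alternatively, \cref{thm:np1} already yields a decomposition with at most $\deg m_T+1\le 3$ terms directly, so the triangle is just a geometric restatement.

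The main subtlety is the degenerate or normal case and checking that at most three distinct eigenvalues actually occur. Cases:
\begin{itemize}
\item $T$ unitary: at most $2$ terms come from its own spectral decomposition (\cref{lem:dilation_decomposition}).
\item $T=U\oplus A$ with $U,A$ scalars: $D\oplus V$ is $3\times 3$.
\item $T$ c.n.u.\ with $\deg m_T=2$: $S(\Lambda)$ is $2\times2$ with defect index $1$, and $V$ is $3\times3$.
\end{itemize}
In every case the count is at most $3$, which gives the claim. I would present the argument via \cref{thm:np1}, and add the explicit Poncelet-type triangle via \cref{prop:threeterm} as the constructive alternative.
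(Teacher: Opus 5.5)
Your proof is correct, but it is essentially the argument of \cref{cor:two_qubit}, i.e.\ \cref{thm:np1} specialized to $n=2$. The paper's proof of this theorem is explicitly designed to avoid the Wu dilation.

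The paper sets $r=\norm{T}\in(0,1]$ and $S=T/r$. Since $S$ is $2\times 2$ with largest singular value $1$, $\Id{2}-S^\dagger S$ has rank at most one. So \cref{lem:defect_rank} yields a unitary dilation $U$ of $S$ of size at most $3$. Then $rU$ is a normal dilation of $T$ whose eigenvalues $r\lambda_k$ lie on the circle of radius $\norm{T}$ and span a triangle in $\Disk$ containing $\W$, to which \cref{prop:threeterm} applies.

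What each route buys:
\begin{itemize}
\item The paper's route needs only the elementary defect-space construction. It also produces exactly the circle-of-radius-$\norm{T}$ configuration used in the subsequent Poncelet discussion.
\item Your route relies on the heavier machinery behind \cref{thm:wu_inside_Sn} (compressed shifts, the model-space isometry of \cref{app:isometry}) and places the vertices on $\bD$. Its advantage is that it is a direct instance of the general bound $\deg m_T+1$.
\end{itemize}

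Your case analysis can also be collapsed. When $\OSR(\rho)=3$, the matrices $\Id{2},C_1,C_3$ are linearly independent. Hence no nonzero affine function $\ell$ vanishes on $\W$: otherwise $b_0\Id{2}+b_1C_1+b_3C_3$ would be a Hermitian matrix with all expectation values zero, hence zero. So $\W$ has nonempty interior and $T$ is not normal. The unitary, normal, degenerate-hull and ``$U$, $A$ scalars'' cases (the last is diagonal, hence normal) are therefore all vacuous. Your sentence that a degenerate hull ``can only happen when $\W$ is itself a segment'' should conclude that it never happens. Since you treat those cases anyway, this is a matter of economy rather than a gap.
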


\begin{proof}
This proof can be done without relying on the heavy construction of \cref{thm:np1}. Reduce $\rho$ to the normal form \cref{eq:rhoC}. By \cref{prop:contraction}, $r=\norm T\le1$, and $r > 0$ otherwise $\rho = \tfrac{1}{4}\Id 4$. Define $S=T/r$, such that $\norm S=1$ and the singular values of $S$ are $s_1 = 1\ge s_2$. Thus, $S$ has defect index at most $1$ and by \cref{lem:defect_rank}, there exist a unitary $U$ and an isometry $J$ such that 
\[
T = rS = r J^\dagger U J
\]
Then the vertices given by $r \lambda_k$, where $\lambda_k$ are the eigenvalues of $U$ form a triangle enclosing $\W$. The triangle vertices lie on the circle of radius $r$ centered at $0$ (see \cref{fig:TS}). These three points are non-collinear, otherwise $\W$ would be a segment and $\OSR(\rho)\leq2$. Thus, the decomposition can be obtained by \cref{prop:threeterm}. The construction is illustrated in \cref{fig:TS}.
\end{proof}

When $\W$ is not degenerate the set of minimal separable decompositions in $xz$-form corresponds to the set of triangles $\Delta \subset \Disk$ such that $ \W \subset \Delta$.

\subsection{Poncelet property}

The Poncelet theorem gives another way to construct the enclosing triangle for $n=2$.
It states that if two conics, here a circle $\mathcal{C}$ and an ellipse $\mathcal{E}$, admit one $m$-sided polygon inscribed in $\mathcal{C}$ and circumscribed around $\mathcal{E}$, then they admit infinitely many such polygons, and one vertex may be chosen on $\mathcal{C}$ arbitrarily (see \cref{fig:Poncelet}). 

The existence of the polygon, which in our case is a triangle, is given in the proof of \cref{thm:n2}.  One can choose for  $\mathcal{C}$ the circle centered at $0$ of radius $\norm{T}$, the ellipse $\mathcal{E}$ is the boundary of $\W$. The tangency of the triangle to $\mathcal{E}$ was proven in \cite{Gau1998, Mirman1998, Daepp2002}. Thus, by Poncelet's theorem, we can conclude that there exist infinitely many triangles obtained as follows: pick any $v_1\in\mathcal C$, choose one of the two tangents to $\mathcal E$ through $v_1$, and let $v_2$ be its second intersection with $\mathcal C$, then take the other tangent to $\mathcal E$ through $v_2$ to obtain $v_3$. The Poncelet theorem guarantees that the tangent from $v_3$ closes the triangle at $v_1$. The link between numerical ranges and Poncelet's property for $n=2$ was extensively studied in \cite{DaeppGorkinShafferVoss2018}. 

The Poncelet property has been generalized to an entire class of contractions. A contraction $A$ of dimension $n$ is in the class $\mathcal{S}_n$ if the defect index of $A$ is 1  and $\operatorname{spec}(A) \subseteq \oDisk$. The class $\mathcal{S}_n$ has the $n+1$-Poncelet property \cite{Gau1998,GauWu2003, Mirman1998}. For every $v\in\bD$, there is a unique $(n+1)$-gon inscribed in $\bD$, circumscribed about $\partial W(A)$, and having $v$ as a vertex. This polygon is in one-to-one correspondence (up to unitary similarity) with the unitary dilation of $A$ of $\dim n +1 $, the vertices being the eigenvalues of the unitary. 

\begin{proposition}
    Let $T$ be the contraction associated with a state $\rho$ verifying H1-H3. If $T$ has one unimodular eigenvalue $\lambda_i$, then $\sr \leq n $. 
\end{proposition}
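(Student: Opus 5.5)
We use the canonical decomposition (\cref{lem:decomposition_cnu}) to write $T=U\oplus A$ on $\C^n=\Hs_0\oplus\Hs_1$. Here $U$ is unitary on $\Hs_0$ with $\dim\Hs_0=q\ge 1$, because $\lambda_i$ is a unimodular eigenvalue. The block $A$ is completely non-unitary on $\Hs_1$ with $\dim\Hs_1=n-q$. The idea is that the unitary block costs no extra vertices: its eigenvalues already lie on $\bD$. The only extra vertex in \cref{thm:np1} comes from dilating the c.n.u.\ part, and that part lives in dimension at most $n-1$.

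\textbf{Construction.} We build a unitary dilation blockwise.
\begin{enumerate}
\item If $A$ is absent ($q=n$), then $T$ is unitary and \cref{lem:dilation_decomposition} applied to $T$ itself gives at most $n$ terms.
\item Otherwise we apply the construction of \cref{thm:np1} to $A$ alone. \Cref{thm:wu_inside_Sn} with trivial unitary part power dilates $A$ to $\bigoplus^{N}S(\Lambda)$, where $\Lambda$ lists the roots of $m_A$ and $\deg m_A\le n-q$. Since $S(\Lambda)$ has defect index $1$, \cref{lem:defect_rank} dilates it to a unitary $V$ of size $\deg m_A+1$.
\item Hence $A$ dilates to the unitary $\bigoplus^N V$, which has at most $\deg m_A+1\le n-q+1$ distinct eigenvalues.
\item Then $T=U\oplus A$ dilates to $U\oplus\bigoplus^N V$. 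The isometry is $J=\Id{\Hs_0}\oplus J_A$, where $J_A$ is the isometry for $A$ from \cref{app:isometry}.
\end{enumerate}

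\textbf{Counting.} The dilation $U\oplus\bigoplus^N V$ is unitary. Its set of distinct eigenvalues is $\operatorname{spec}(U)\cup\operatorname{spec}(V)$, which has cardinality at most $q+(n-q+1)=n+1$. Applying \cref{lem:dilation_decomposition} and then the weights and states defined as in the proof of \cref{thm:np1}, we get a separable decomposition with at most that many terms. This bound is one too many, so one of the two following reductions must be carried out.

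\textbf{Saving the extra term (main obstacle).} There are two options.
\begin{enumerate}
\item[(a)] \emph{Refined count.} We use $\deg m_T=\deg m_U+\deg m_A$, where $\deg m_U$ is the number of distinct unitary eigenvalues. This only helps when $U$ has a repeated eigenvalue or $\deg m_A<\dim\Hs_1$, so it is not enough in general.
\item[(b)] \emph{Shared vertex.} Use the freedom in the $(m+1)$-dimensional unitary dilation $V$ of $S(\Lambda)$. By the Poncelet property of the class $\mathcal S_m$ (for every $v\in\bD$ there is a unitary dilation of $S(\Lambda)$ having $v$ as an eigenvalue), or equivalently by the $U(1)$ freedom $\begin{psmallmatrix}I&0\\0&u_0\end{psmallmatrix}V$ in \cref{lem:defect_rank}, we can choose $V$ so that $\lambda_i$ is one of its eigenvalues. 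The eigenvalue $\lambda_i$ is then shared between $U$ and $V$, and the number of distinct eigenvalues drops to at most $q+(n-q+1)-1=n$.
\end{enumerate}

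We take option (b) as the main argument. The key steps there are the following.
\begin{enumerate}
\item Show that varying $u_0\in\bD$ moves the spectrum of $V$ continuously and monotonically around the circle; this is the standard rank-one unitary perturbation / Clark measure argument. It follows that every point of $\bD$ is attained as an eigenvalue of $V$.
\item Check that the block-diagonal combination is still a dilation, so that \cref{lem:dilation_decomposition} applies unchanged.
\end{enumerate}
The resulting decomposition has at most $n$ terms, so $\sr(\rho)\le n$.
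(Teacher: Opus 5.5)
Your option (b) is essentially the paper's own argument. The Poncelet property of $\mathcal{S}_m$, or equivalently the $U(1)$ freedom in the one-dimensional defect block of \cref{lem:defect_rank}, lets you choose the unitary dilation $V$ of $S(\Lambda)$ so that $\lambda_i$ is one of its eigenvalues; that vertex is then shared with the unitary part, and the count drops by one to at most $n$. Using $U\oplus\bigoplus^N V$ instead of $\bigoplus^N(D\oplus V)$ is an inessential variant, and your sketch that every point of $\bD$ is attained as $u_0$ varies is the standard fact the paper's sketch relies on.
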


\begin{proof}[Sketch of the proof]
    Since in \cref{thm:np1}, $S(\Lambda) \in \mathcal{S}_m$, $W(S(\Lambda))$ has the $m+1$-Poncelet property, thus the dilation $V$ of $S(\Lambda)$ can be chosen such that one of its eigenvalues coincides with $\lambda_i$ reducing the number of terms in the decomposition by one.
\end{proof}

\begin{figure}
  \centering
  \begin{subfigure}[b]{0.48\textwidth}
    \centering
    \includegraphics[width=0.7\linewidth, trim={0 0 0 0},clip ]{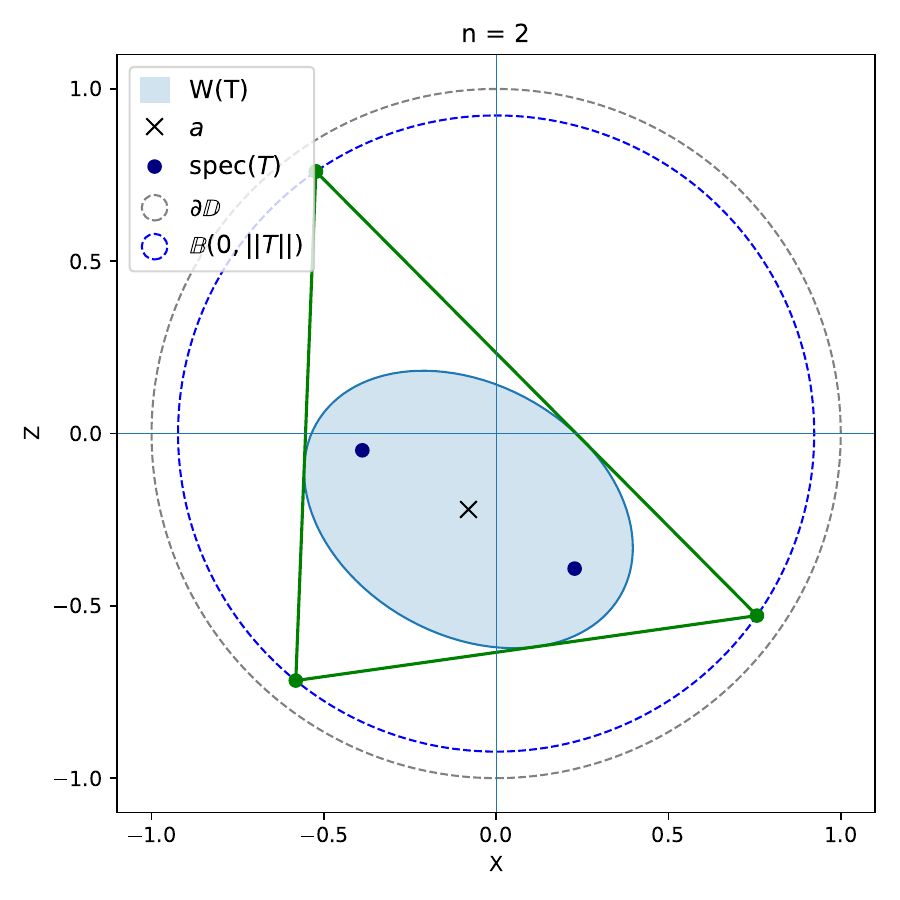}
    \caption{$n=2$ ellipse}
    \label{fig:TS}
  \end{subfigure}
  \hfill
  \begin{subfigure}[b]{0.48\textwidth}
    \centering
    \includegraphics[width=0.7\linewidth, trim={0 0 0 0},clip ]{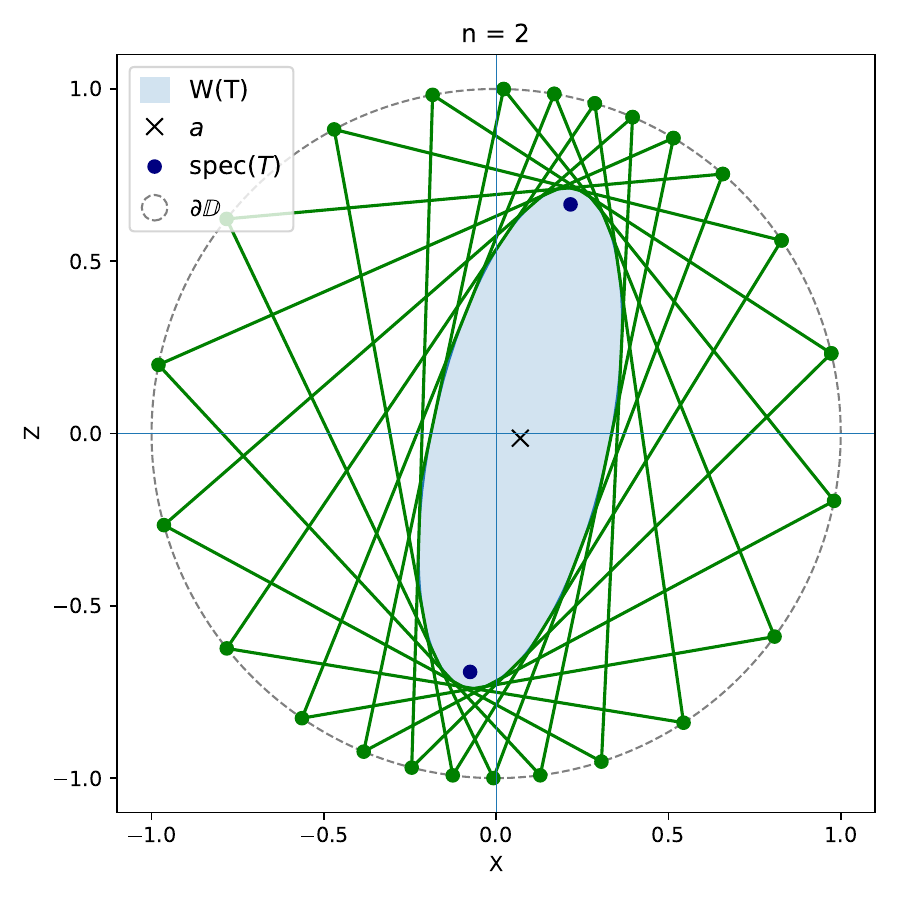}
    \caption{Poncelet property}
    \label{fig:Poncelet}
  \end{subfigure}
  \caption{Contraction $T$ with $n=2$. (A) The vertices of the triangle circumscribing the numerical range $W(T)$ are on the circle of radius $\norm{T}$. (B) Any point on the circle $\mathcal{C}$ generates a circumscribing triangle.}
  \label{fig:n=2}
\end{figure}

\subsection{Dilations and separability}

Let $\rho$ be in the normal form~\eqref{eq:rhoC}. The construction used in \cref{thm:2n,thm:np1} shows that any unitary dilation of $T$ yields a separable decomposition with pure states on $\Hs_A$. Thus, minimizing over the unitary dilations $U$ of $T$ yields  $\spr \leq \min_U{\dim U}$ and $\sr \leq \min_U{|\text{distinct spec}(U)|}$ where $|\text{distinct spec}(U)|$ is the number of distinct eigenvalues of $U$. 

Conversely, given $\rho$ in normal form and a separable decomposition $\rho=\sum_{k=1}^{r}p_k\,\alpha_k\otimes\beta_k$ with pure $\alpha_k=\tfrac12(I_2+\cos\theta_k X+\sin\theta_k Z)$, we can construct a unitary dilation of $T$. One proceeds as follows. Take $L_k =n\,p_k\beta_k\succeq0$. The moment conditions~\eqref{eq:moments} read $\sum_kL_k=I_n$ and $\sum_ke^{i\theta_k}L_k=T$. Thus, $\{L_k\}$ is a POVM on $\mathcal H_B$.
Let us take Naimark's isometry $J_N=\sum_k\sqrt{L_k}\otimes|k\rangle$ and unitary $U_N =\sum_ke^{i\theta_k}\,\Pi_k\otimes|k\rangle\langle k|$  where $\Pi_k$ is the projector on $\ran L_k$. They satisfy $J_N^\dagger U_NJ_N=T$. Hence, we have shown that every separable decomposition yields a Naimark dilation of $T$. Therefore, for a {\it minimal} separable decomposition with pure states, we have $\dim U_N = \spr(\rho)$ and for a {\it minimal} separable decomposition with mixed states, we have $|\text{distinct spec}(U_N)| = \sr(\rho)$. 

At this point note that the last identity does not give us an operational way to compute $\sr(\rho)$ because a minimal decomposition is needed, but the constructions in this paper gives us (in general) only decompositions with length at most $n+1$.

Combining the previous remarks we see that we in fact have the equalities:
\begin{equation}\label{eq:ranks-dilation}
  \operatorname{p-sep-rank}(\rho)=\min_U\dim U,
  \qquad
  \operatorname{sep-rank}(\rho)=\min_U|\text{distinct spec}(U)|,
\end{equation}
where $U$ ranges over the unitary dilations of $T$. The first identity recovers a known result, $\spr(\rho) = \rk(\rho)$ thanks to \cref{thm:2n} and \cite{Kraus2000}. The second identity suggests that a closer study of the unitary dilations of $T$ may give further insight into the determination of separable rank, beyond the upper bound.

\subsection{Entanglement-breaking channels}
Consider the quantum channel $\Phi:M_n(\mathbb C)\to M_2(\mathbb C)$, $\Phi(X)=n\operatorname{Tr}_B[\rho\,(I_2\otimes X^{T})]$, with $\rho$ in normal form and $X^T$ the transpose in the computational basis. Its Choi matrix is $n\rho$, and it is trace preserving since $\rho_B=I_n/n$. By \cite{Horodecki_2003}, separability of $\rho$ is equivalent to $\Phi$ being entanglement breaking. Our decomposition of $\rho$ gives a 'measure-and-prepare' form for the entanglement breaking channel
\begin{equation*}
  \Phi(X)=\sum_{k=1}^{r}\operatorname{Tr}\big(L_k^{T}X\big)\,\alpha_k .
\end{equation*}
The channel measures the POVM $\{L_k^{T}\}$, attached to the dilation $U^T$ of $T^T$. It then prepares the pure state $\alpha_k$.
In \eqref{eq:ranks-dilation}, $\spr(\rho)=\rk\rho$ is the Kraus rank of $\Phi$, since the Kraus rank equals the rank of the Choi matrix $n\rho$, and $\sr(\rho)$ is the minimal number of classical outcomes of $\Phi$, which is at most $n+1$ by \cref{thm:np1}. The classical register between measurement and preparation can therefore be reduced to $\lceil\log_2(n+1)\rceil$ bits, compared with $\lceil\log_2\rk\rho\rceil$ for a Kraus decomposition, which is an improvement of at most 1 bit.

\section*{Code Availability}
The code used to produce the figures and implement the decompositions in this work is available at \url{https://github.com/vantalon/osr3_separability}.

\newpage
\appendix
\crefalias{section}{appendix}
\label{app}

\section{Operator Schmidt decompositions}\label{app:schmidt}

\begin{lemma}\label{lem:herm_pos_first_factor}
Let $ \rho \in M_2(\C)  \otimes  M_n(\C), \rho\succeq 0$ be Hermitian with Hermitian operator Schmidt rank $r$. Then $\rho$ admits a decomposition
\[
  \rho=\sum_{i=1}^r s_i\,A'_i\otimes B'_i,
\]
with $(A'_i)$ and $(B'_i)$ Hermitian, $(A'_i)$ Hilbert-Schmidt orthonormal, and real coefficients $s_i$, such that the first factors satisfy $A'_1\succeq 0$ and $s_1 B'_1\succeq 0$. If $\OSR (\rho) \geq 2$, $A'_1\succ 0$.
\end{lemma}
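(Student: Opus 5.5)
The plan is to build the decomposition around the $A$-marginal: take $A'_1$ proportional to $\rho_A$, complete it to an orthonormal Hermitian basis of the space spanned by the $A$-factors, and expand $\rho$ in that basis.

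\textbf{Step 1: the $A$-factor space.} By \cref{rem:rankchain}, there is a Hermitian operator Schmidt decomposition coming from a singular value decomposition of the realignment of $\rho$:
\[
  \rho=\sum_{i=1}^r \sigma_i\,A_i\otimes B_i ,
\]
with $(A_i)$ and $(B_i)$ Hermitian and Hilbert--Schmidt orthonormal, and $\sigma_i>0$. Let $\mathcal S_A=\Span_{\R}\{A_1,\dots,A_r\}$, a real subspace of the Hermitian $2\times 2$ matrices of dimension exactly $r$. The key observation is that
\[
  \rho_A=\Tr_B\rho=\sum_i \sigma_i \Tr(B_i)\,A_i\in\mathcal S_A .
\]
Moreover $\rho_A\succeq 0$ and $\Tr\rho_A=1$, so $\rho_A\neq 0$.

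\textbf{Step 2: new basis and expansion.} Set $A'_1=\rho_A/\norm{\rho_A}_{HS}$, which is positive semidefinite. Complete it by Gram--Schmidt, with respect to the real inner product $\Tr(MN)$, to a Hilbert--Schmidt orthonormal Hermitian basis $A'_1,\dots,A'_r$ of $\mathcal S_A$. Since every $A$-factor of $\rho$ lies in $\mathcal S_A$, expanding in this basis gives
\[
  \rho=\sum_{i=1}^r A'_i\otimes \widetilde B_i,\qquad \widetilde B_i=\Tr_A\!\bigl[(A'_i\otimes \Id n)\,\rho\bigr].
\]
Each $\widetilde B_i$ is Hermitian because $A'_i$ and $\rho$ are Hermitian. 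No $\widetilde B_i$ vanishes, since otherwise $\rho$ would have fewer than $r=\HOSR(\rho)$ Hermitian terms. Write $\widetilde B_i=s_iB'_i$ with $s_i=\norm{\widetilde B_i}_{HS}$ and $B'_i$ normalized Hermitian.

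\textbf{Step 3: positivity.} For the first term,
\[
  s_1B'_1=\Tr_A\!\bigl[(A_1'^{1/2}\otimes \Id n)\,\rho\,(A_1'^{1/2}\otimes \Id n)\bigr]\succeq 0,
\]
as the partial trace of a positive semidefinite operator. If $\OSR(\rho)\ge2$, \cref{lem:osr3_mixed} shows that $\rho_A$ is mixed. A mixed qubit state has rank $2$, so $\rho_A\succ0$ and therefore $A'_1\succ 0$.

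The main point to get right is Step 1: the first factor can be chosen positive only because $\rho_A$ lies in the span of the $A$-factors, which holds for any decomposition. The remaining steps are bookkeeping with orthonormal bases.
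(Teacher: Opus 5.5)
Your proposal is correct and follows essentially the same route as the paper: start from the SVD-based Hermitian Schmidt decomposition, observe that $\rho_A$ lies in the real span of the $A$-factors, and take $A'_1=\rho_A/\norm{\rho_A}_{HS}$. Then complete by Gram--Schmidt, get $s_1B'_1=\Tr_A\bigl[(A_1'^{1/2}\otimes I)\,\rho\,(A_1'^{1/2}\otimes I)\bigr]\succeq 0$, and obtain $A'_1\succ 0$ from \cref{lem:osr3_mixed}. The only differences are cosmetic: you define the new $B$-factors directly as $\Tr_A[(A'_i\otimes I)\rho]$ and normalize them, whereas the paper re-expands each $A_m$ in the new basis.
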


\begin{proof}
Let $\sigma_m$ be real and $A_m, B_m$ be Hermitian operators such that
\[
  \rho=\sum_{m=1}^r\sigma_m\,A_m\otimes B_m.
\]
This decomposition can be obtained through the Hermitian Schmidt decomposition (using SVD of the matrix $M_{ij}=\Tr\bigl((\Lambda^A_i\otimes\Lambda^B_j)\rho\bigr)$). In that case, it will have the additional properties that  $(A_m)$ and $(B_m)$ are Hilbert-Schmidt orthonormal ($\Tr(A_mA_n)=\delta_{mn}$, as well as for $B$) and $\sigma_m>0$.
Since $\rho\succeq 0$, the reduced operator
\[
\rho_A = \Tr_B(\rho) = \sum_m \sigma_m \Tr(B_m)\,A_m \;\succeq\; 0
\]
lies in $\Span(A_1,\dots,A_r)$. Set
\[
A'_1 = \frac{\rho_A}{\sqrt{\Tr \rho_A^2}} \succeq 0,
\]
and extend it to a Hilbert-Schmidt orthonormal basis $(A'_i)_{1\le i\le r}$ of $\Span(A_1,\dots,A_r)$ by Gram-Schmidt. Each $A'_i$ is Hermitian (a real combination of Hermitian matrices). If $\OSR (\rho) \geq 2$, $\rho_A$ is a mixed state (Lem. \ref{lem:osr3_mixed}) and $\rho_A \succ 0 $, thus $A'_1\succ0$.
Rewriting each $A_m$ in this basis and substituting into $\rho$, we obtain operators $B'_i$ (real combinations of the $B_m$, hence Hermitian) and real coefficients $s_i$ with
\[
  \rho=\sum_{i=1}^r s_i\,A'_i\otimes B'_i,
\]
where $(A'_i)$ are Hilbert-Schmidt orthonormal. 
Finally, by orthonormality $\Tr(A'_iA'_1)=\delta_{i1}$,
\[
\Tr_A\!\big(\rho\,(A'_1\otimes I)\big)
  = \sum_i s_i\,\Tr(A'_iA'_1)\,B'_i = s_1 B'_1.
\]
The left-hand side equals $\Tr_A\!\big((A_1'^{1/2}\otimes I)\,\rho\,(A_1'^{1/2}\otimes I)\big)\succeq 0$, since $\rho\succeq 0$, $A'_1\succeq 0$, and the partial trace preserves positivity. Hence $s_1 B'_1\succeq 0$.
\end{proof}

\section{Cariello construction}\label{app:cariello_construction}
\begin{theorem}[Theorem 3.2 of \cite{cariello2014}]
    Let $\rho\in \Hs$ be a density matrix with $\OSR = 3$, then there exist $V$ invertible such that 
    \begin{equation}
        \Phi_V(\rho) = \Phi_V(\rho)^{T_A}
    \end{equation}
    where $T_A$ indicates the partial transposition of the system A. 
\end{theorem}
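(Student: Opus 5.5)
The plan is to start from the Hermitian operator Schmidt decomposition of \cref{lem:herm_pos_first_factor}. Since $\OSR(\rho)=3$, we have $\HOSR(\rho)=3$ by \cref{rem:rankchain}, so
\[
  \rho=\sum_{i=1}^3 s_i\,A'_i\otimes B'_i ,
\]
with $A'_i,B'_i$ Hermitian and $s_i$ real. Moreover $A'_1\succ 0$, because $\OSR(\rho)\ge 2$; this is the last clause of \cref{lem:herm_pos_first_factor}, which rests on \cref{lem:osr3_mixed}. Positivity of $A'_1$ is exactly what allows a local filtering to turn the first $A$-factor into the identity. I will take $V_2=\Id n$ and look for $V_1$ of the form $V_1=(A_1')^{-1/2}W$, with $W\in U(2)$ to be fixed later. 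Then $V_1^\dagger A'_1V_1=W^\dagger W=\Id 2$, and each $\widetilde A_i:=V_1^\dagger A'_iV_1$ for $i=2,3$ is again Hermitian, since congruence preserves Hermiticity.

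Next I write $\widetilde A_i=t_i\Id 2+\vect{x}_i\cdot\vect\sigma$ with $t_i\in\R$ and $\vect x_i\in\R^3$. The scalar parts can be absorbed into the first term:
\[
  V^\dagger\rho V=\Id2\otimes\Bigl(s_1B'_1+\sum_{i=2,3}t_is_iB'_i\Bigr)+\sum_{i=2,3}(\vect x_i\cdot\vect\sigma)\otimes s_iB'_i .
\]
The two vectors $\vect x_2,\vect x_3$ span a subspace of $\R^3$ of dimension at most two. Hence some rotation $R\in SO(3)$ maps them into the $xz$-plane, which is the plane orthogonal to $(0,1,0)$.

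Using the adjoint action of $SU(2)$ on $\R^3$, I choose $W$ so that conjugation by $W$ realizes $R$ on Bloch vectors. Conjugating by $W$ does not change the identity term, so the argument above goes through with $\vect x_i$ replaced by $R\vect x_i$. Since $W$ enters $V_1$ before the congruence, the cleanest way to organise this is in two steps. First apply $(A'_1)^{-1/2}$ and read off $\vect x_i$. Then choose $W$ from these vectors, noting that $W^\dagger(\vect x\cdot\vect\sigma)W=(R\vect x)\cdot\vect\sigma$ for the appropriate $R$. After this, $V^\dagger\rho V=\Id2\otimes B+X\otimes C+Z\otimes D$ for some Hermitian $B,C,D$.

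Because $X^T=X$, $Z^T=Z$ and $\Id2^T=\Id2$, this operator is invariant under $T_A$. By \cref{lem:filtering}, normalising by the trace keeps it a density matrix and preserves the invariance. This gives $\Phi_V(\rho)=\Phi_V(\rho)^{T_A}$ with $V=(A_1')^{-1/2}W\otimes\Id n$ invertible.

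I expect the only delicate point to be the invertibility of $(A_1')^{1/2}$, that is, strict positivity of the first Schmidt factor; this is supplied by \cref{lem:herm_pos_first_factor} together with \cref{lem:osr3_mixed}. The other ingredient needing care is the standard fact that the map $SU(2)\to SO(3)$ given by conjugation is surjective, so that any rotation of Bloch vectors, in particular one sending a given $2$-plane to the $xz$-plane, is realized by a unitary $W$. The remaining steps are direct computations.
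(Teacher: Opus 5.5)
Your proof is correct and follows essentially the paper's route. Whiten the positive-definite first $A$-factor by a congruence, then apply a local unitary on the qubit so that the two remaining $A$-factors become real combinations of $\Id2$, $X$, $Z$, which are real symmetric and hence $T_A$-invariant. The paper realizes your rotation $R$ concretely: diagonalizing $A_2'$ sends $\vect x_2$ to the $z$-axis, and $S=\diag(1,\bar b)$ then makes the off-diagonal entry real (a rotation about $z$, plus a harmless rescaling). That is why it needs the nondegeneracy facts $D\neq\lambda\Id2$ and $b\neq0$, whereas your $SO(3)$ formulation handles all cases uniformly and keeps the second step unitary.
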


\begin{proof}
We reproduce the proof of Cariello \cite{cariello2014}.
It is possible to find a Hermitian decomposition $\rho = \sum_{i=1}^{3} A_i \otimes B_i$ such that $A_1$ is positive definite and $B_1$ is positive semi-definite (Lemma \ref{lem:herm_pos_first_factor}).  Write $A_1 = R R^\dagger$ with $R$ invertible. Let
\[
C = (R^{-1} \otimes I_n)\, \rho \, ((R^{-1})^\dagger \otimes I_n) = I_2 \otimes B_1 + \sum_{i=2}^{3} A_i' \otimes B_i, \qquad A_i' = R^{-1} A_i (R^{-1})^\dagger,\ i = 2,3.
\]
Since $A_2'$ is a Hermitian matrix, there is a unitary matrix $U$ and a real diagonal matrix $D$ such that $A_2' = U D U^\dagger$. Notice that $D \neq \lambda\, I_2$, otherwise $A_2' = \lambda\, Id$ and $A_2 = \lambda A_1$, which is not possible since  $\OSR (\rho)=3$.

Let
\[
E = (U^\dagger \otimes I_n)\, C\, (U \otimes I_n) = I_2 \otimes B_1 + D \otimes B_2 + A_3'' \otimes B_3, \qquad A_3'' = U^\dagger A_3' U.
\]
Since $D \neq \lambda\, I_2$, any diagonal matrix in $M_2(\C)$ can be written as a linear combination of $D$ and $I_2$. Let $D'$ be the diagonal of $A_3''$. Notice that $D'$ is a real diagonal matrix, since $A_3''$ is Hermitian. Write $D' = a\, I_2 + c\, D$, where $a, c$ are real numbers. Thus,
\[
E = I_2 \otimes (B_1 + a B_3) + D \otimes (B_2 + c B_3) + A_3''' \otimes B_3, \qquad
A_3''' = A_3'' - D' = \begin{pmatrix} 0 & \bar{b} \\ b & 0 \end{pmatrix}.
\]
Notice that $b \neq 0$, otherwise $E$ would have $\OSR=2$ and $\rho$ would have $ \OSR= 2$. Let $ S = \diag (1 ,\bar{b})$ and consider $ F = (S \otimes I_n)\, E\, (S^\dagger \otimes I_n).$
Notice that $S S^\dagger$ and $S D S^\dagger$ are diagonal matrices and
\[
S A_3''' S^\dagger = \begin{pmatrix} 0 & \overline{b} b \\ \overline{b} b & 0 \end{pmatrix}
\]
is symmetric too. Thus, we may choose $V = (R^{-1})^\dagger U S^\dagger \otimes \Id{n}$ and $\Phi_V(\rho) = \frac{F}{\Tr F}$ . This is a well define positive semi definite Hermitian matrix invariant under the left partial transposition. 

\end{proof}

\section{Isometry of Wu dilation} 
\label{app:isometry}
Assume $T=U\oplus A$ on $\Hs=\Hs_0\oplus \Hs_1$, with $U$ unitary and $A$ completely non unitary. Let $u_1,\dots,u_p$ be the distinct eigenvalues of $U$, with multiplicities $m_k=\dim\ker(U-u_kI)$ and $r_U=\max_k m_k$. 
Set $N=\max(r_U,r_A)\le\dim H$. Let $m_A(z)=\prod_{j=1}^{m}(z-\lambda_j)$ be the minimal polynomial of $A$, $d_j=\sqrt{1-|\lambda_j|^{2}}$, $b_\lambda(z)=\tfrac{\lambda-z}{1-\overline{\lambda}z}$, and 
\[\varphi=\prod_{j=1}^{m}\frac{\lambda_j-z}{1-\overline{\lambda_j}z} =\prod_{j=1}^{m}b_{\lambda_j}.
\] 
The contraction $T$ dilates to 
\[
B=\bigoplus^{N}\bigl(\diag(u_1,\dots,u_p)\oplus S(\Lambda)\bigr),
\]
where $S(\Lambda)$ is the matrix of the compressed shift $S(\varphi)$ in the Malmquist-Walsh orthonormal basis. We want to construct the isometry $J$ with $T^{k}=J^\dagger  B^{k}J$ for all $k\ge0$.

\subsection{Dilation of U}

Keep one copy of each distinct eigenvalue
\[
D=\diag(u_1,\dots,u_p)\ \text{ on }\ \C^{p}.
\]
Grouping $D\oplus D\oplus\cdots$ by eigenvalue turns the amplified operator into $u_1I\oplus\cdots\oplus u_pI$. Since each eigenspace embeds isometrically into the corresponding coordinate, there is an isometry $V_U:\Hs_0\to\bigoplus^{r_U}\C^{p}$ such that for every $k$,
\[
U^k =V_U^\dagger \left(\bigoplus^{r_U} D \right)^kV_U.
\]

\subsection{Dilation of A}

\subsubsection{Definitions}

Let $H^2$ be the \emph{Hardy space} of the open unit disk $\oDisk$,
\[
H^2=\Bigl\{f(z)=\sum_{k\ge0}a_kz^{k}:a_k\in\C,\ \|f\|^2:=\sum_{k\ge0}|a_k|^{2}<\infty\Bigr\},
\qquad
\braket{f,g}=\sum_{k\ge0}a_k\overline{b_k},
\]
where $g=\sum_k b_kz^k$. The  \emph{unilateral shift} is the isometry $M_z:H^2\to H^2$, $(M_zf)(z)=zf(z)$; its adjoint is the \emph{backward shift} $M_z^\dagger\bigl(\sum_k a_kz^k\bigr)=\sum_k a_{k+1}z^k$.

An \emph{inner function} is a bounded analytic $f$ on $\oDisk$ with $|f|=1$ on $\partial\oDisk$.  The function $\varphi$ is the finite Blaschke product $\prod_j b_{\lambda_j}$. It determines the \emph{model space} $\Kf=H^2\ominus\varphi H^2$ (the orthogonal complement of $\varphi H^2$ in $H^2$ with $\dim \Kf = m$) and the \emph{compressed shift}
\[
S(\varphi)=P_{\Kf}\,M_z|_{\Kf},
\]
where $P_{\Kf}:H^2\to\Kf$ is the orthogonal projection. 

The \emph{Malmquist-Walsh basis} is an orthonormal basis of $\Kf$. For $j=1,\dots,m$, define
\[
e_j(z)=(-1)^{j}\,\frac{d_j}{1-\overline{\lambda_j}z}\,
\prod_{\ell>j}b_{\lambda_\ell}(z).
\]
In this basis the compressed shift $S(\varphi)=P_{\Kf}M_z|_{\Kf}$ has the upper-triangular matrix $S(\Lambda)$ of \cref{eq:livsic}.

Denote the defect space $\Dfr=\ran D_{A^\dagger }$. In finite dimensions $A^\dagger A$ and $AA^\dagger$ have the same nonzero eigenvalues, so
\[
d  =\dim\Dfr=\rk(I-A^\dagger A)=\rk(I-AA^\dagger).
\]

Let $H^2(\Dfr)=\{F(z)=\sum_{k\ge0}z^{k}\xi_k:\xi_k\in\Dfr,\ \sum\|\xi_k\|^{2}<\infty\}$ with $\|F\|^{2}=\sum_k\|\xi_k\|^{2}$. The shift $M_z$ acts by $(M_zF)(z)=zF(z)$, and its adjoint is the backward shift $M_z^\dagger \bigl(\sum_k z^k\xi_k\bigr)=\sum_k z^{k}\xi_{k+1}$. Choosing an orthonormal basis $f_1,\dots,f_{r_A}$ of $\Dfr$ gives $H^2(\Dfr)\cong\bigoplus^{r_A}H^2$ and $\Kf\otimes\Dfr\cong\bigoplus^{r_A}\Kf$, this explains the use of $r_A$ copies.

\subsubsection{Construction of the isometry}
Define the map
\[
W:\Hs_1\to H^2(\Dfr),\qquad
(Wx)(z)=D_{A^\dagger }(I-zA^\dagger )^{-1}x
=\sum_{k=0}^{\infty}z^{k}D_{A^\dagger }A^{\dagger k}x .
\,
\]
The $k$-th Hardy coefficient is $D_{A^\dagger }A^{\dagger k}x$.

\begin{lemma}
    $W$ is an isometry. 
\end{lemma}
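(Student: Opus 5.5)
The plan is to compute $\|Wx\|^2$ directly as the sum of the squared norms of the Hardy coefficients, and to recognise that sum as a telescoping series. By definition of the norm on $H^2(\Dfr)$,
\[
\|Wx\|^2=\sum_{k\ge0}\bigl\|D_{A^\dagger}A^{\dagger k}x\bigr\|^2
=\sum_{k\ge0}\braket{A^{\dagger k}x|\,D_{A^\dagger}^2\,|A^{\dagger k}x}.
\]
Here we use that $D_{A^\dagger}$ is Hermitian and maps into $\Dfr$. So each coefficient indeed lies in $\Dfr$ and $W$ is well defined, provided the series converges.

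Next, substitute $D_{A^\dagger}^2=I-AA^\dagger$. The $k$-th term becomes
\[
\|A^{\dagger k}x\|^2-\|A^{\dagger (k+1)}x\|^2 .
\]
Summing from $k=0$ to $K$ telescopes to $\|x\|^2-\|A^{\dagger(K+1)}x\|^2$. It then suffices to show that $\|A^{\dagger K}x\|\to0$ as $K\to\infty$, i.e.\ that $A^\dagger$ is strongly stable.

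The step that needs actual input is this stability, and it is where complete non-unitarity enters. As recalled before the statement of \cref{thm:wu_inside_Sn}, a completely non-unitary contraction in finite dimension has all its eigenvalues in the open disk $\oDisk$. The argument is this: a unimodular eigenvalue $\lambda$ of a contraction has an eigenvector $v$ with $A^\dagger v=\bar\lambda v$, by the standard equality case of $\|Av\|\le\|v\|$, so $\Span\{v\}$ is a reducing subspace on which $A$ is unitary. This is excluded by \cref{lem:decomposition_cnu}, since $\Hs_1$ carries the c.n.u.\ part. Hence the spectral radius of $A^\dagger$ is strictly less than $1$, and by Gelfand's formula (or the Jordan form) $\|A^{\dagger K}\|\to0$, in fact geometrically.

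This geometric decay also gives absolute convergence of the coefficient series, so $Wx\in H^2(\Dfr)$. Letting $K\to\infty$ yields $\|Wx\|^2=\|x\|^2$ for every $x\in\Hs_1$. Since $W$ is linear, polarization then gives $W^\dagger W=I_{\Hs_1}$, so $W$ is an isometry. The only potential obstacle is justifying the eigenvalue location in the open disk; everything else is the telescoping identity.
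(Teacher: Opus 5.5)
Your proof is correct and follows the paper's argument: substitute $D_{A^\dagger}^2=I-AA^\dagger$, telescope the partial sums to $\|x\|^2-\|A^{\dagger(K+1)}x\|^2$, and use that the c.n.u.\ contraction $A$ has spectrum in $\oDisk$, so $\|A^{\dagger K}\|\to 0$. Your extra justification of that spectral fact, namely that a unimodular eigenvalue would force $(I-A^\dagger A)v=0$ and hence a reducing unitary eigenline, fills in a step the paper only asserts.
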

\begin{proof} Since $D_{A^\dagger }^{2}=I-AA^\dagger $,
\[
\|D_{A^\dagger }A^{\dagger k}x\|^{2}
=\braket{{(I-AA^\dagger )A^{\dagger k}x},{A^{\dagger k}x}}
=\|A^{\dagger k}x\|^{2}-\|A^{\dagger(k+1)}x\|^{2},
\]
so the partial sums telescope
\[
\sum_{k=0}^{M}\|D_{A^\dagger }A^{\dagger k}x\|^{2}
=\|x\|^{2}-\|A^{\dagger(M+1)}x\|^{2}\xrightarrow[M\to\infty]{}\|x\|^{2}.
\]
To obtain the limit, note that since $A$ is completely non unitary, its eigenvalues are in $\oDisk$, thus $\|A^{\dagger(k)}\| \to 0$. Hence $\|Wx\|=\|x\|$, and $ W^\dagger  W=I_{\Hs_1}$.
\end{proof}

From the power series,
\[
WA^\dagger  x=\sum_{k\ge0}z^{k}D_{A^\dagger }A^{\dagger(k+1)}x
=M_z^\dagger  Wx ,
\]
so $WA^\dagger =M_z^\dagger  W$.

Taking adjoints, $AW^\dagger =W^\dagger  M_z$, hence $A^{k}W^\dagger =W^\dagger  M_z^{k}$; multiplying by $W$ and using $W^\dagger  W=I$,
\[
A^{k}=W^\dagger  M_z^{k}W\qquad(k\ge0).
\]
Thus, the shift $M_z$ is a power dilation of $A$. The relation $\varphi(A)=0$ collapses the range into a finite space. From $WA^\dagger =M_z^\dagger  W$, one gets $W\phi(A)^\dagger =\phi(M_z)^\dagger W =0$. Hence, $\ran W \subset \ker \phi(M_z)^\dagger = \Kf \otimes \Dfr $ and  $W = P_{\Kf}W$.

Pick orthonormal bases $x_1,\dots,x_q$ of $\Hs_1$, $f_1,\dots,f_{r_A}$ of $\Dfr$, and $e_1,\dots,e_m$ of $\Kf$. Then $\{e_j\otimes f_i\}$ is an orthonormal basis of $\Kf\otimes\Dfr$. Writing $e_j(z)=\sum_k c_k^{(j)}z^{k}$ and using $Wx_s=\sum_k z^{k}D_{A^\dagger }A^{\dagger k}x_s$,
\[
W_{(i,j),s}
=\sum_{k\ge0}\overline{c_k^{(j)}}\,\braket{{D_{A^\dagger }A^{\dagger k}x_s},{f_i}}
=\sum_{k\ge0}\overline{c_k^{(j)}}\,\braket{x_s,{A^{k}D_{A^\dagger }f_i}}
=\braket{{x_s},{e_j(A)D_{A^\dagger }f_i}}.
\]
\[
W_{(i,j),s}=\braket{{x_s},{e_j(A)D_{A^\dagger }f_i}} ,
\]
with
\[
e_j(A)=(-1)^{j}d_j(I-\overline{\lambda_j}A)^{-1}\prod_{\ell>j}b_{\lambda_\ell}(A),
\qquad
b_\lambda(A)=(\lambda I-A)(I-\overline{\lambda}A)^{-1}.
\]

\subsection{Global Isometry}

Both partial dilations can be assembled. Pad the sum that needs it up to $N=\max(r_U,r_A)$ copies by adjoining unused zero coordinates. This keeps $V_U$ and $W$ isometric, and the added coordinates lie outside their ranges. 
Regrouping the copies gives the coordinate permutation
\[
\Bigl(\bigoplus^{N}\C^{p}\Bigr)\oplus\Bigl(\bigoplus^{N}\Kf\Bigr)
\;\cong\;\bigoplus^{N}\bigl(\C^{p}\oplus\Kf\bigr),
\]
which carries
$\bigl(\bigoplus^{N}D\bigr)\oplus\bigl(\bigoplus^{N}S(\Lambda)\bigr)$ to
$B=\bigoplus^{N}T_1$, where $T_1=D\oplus S(\Lambda)$. Define
\[
J_0=V_U\oplus W:\ \Hs_0\oplus\Hs_1\ \longrightarrow\ \bigoplus^{N}(\C^{p}\oplus\Kf).
\]
Since $\Hs_0\perp\Hs_1$ and each summand is isometric, $J_0^\dagger J_0=I_{\Hs}$, so $J_0$ is an isometry. Moreover $V_U$ maps into the $\C^{p}$-block and $W$ into the $\Kf$-block, on which $B$ acts as $\bigoplus^{N}D$ and $\bigoplus^{N}S(\Lambda)$ respectively. 
Hence $J_0^\dagger B^{k}J_0$ splits along
$\Hs_0\oplus\Hs_1$:
\[
J_0^\dagger B^{k}J_0
=\Bigl(V_U^\dagger\bigl(\textstyle\bigoplus^{N}D\bigr)^{k}V_U\Bigr)
\oplus
\Bigl(W^\dagger\bigl(\textstyle\bigoplus^{N}S(\Lambda)\bigr)^{k}W\Bigr)
=U^{k}\oplus A^{k}=T^{k}\qquad(k\ge0).
\]
Finally, let $V \in U(m+1)$ and $J_1= \begin{psmallmatrix} \Id{m}\\ 0_{1\times m}\end{psmallmatrix}$ given by \cref{lem:defect_rank} such that $V$ is a dilation of $S(\Lambda)$, and $S(\Lambda) = J_1^\dagger V J_1$. 
Then $J = (\bigoplus^N (\Id{p} \oplus J_1))J_0 $ is the isometry corresponding to the dilation $U_{dil} = \bigoplus^N ( D \oplus V)$,
\[  
    T = J^\dagger U_{dil} J.
\]

\printbibliography

\end{document}